\documentclass[12pt]{article}
\usepackage{graphicx} 
\RequirePackage{amsthm,amsmath,amsfonts,amssymb}
\RequirePackage[numbers]{natbib}
\RequirePackage[colorlinks,citecolor=blue,urlcolor=blue]{hyperref}
\usepackage[utf8]{inputenc}
\usepackage{mathtools} 
\usepackage{booktabs} 
\usepackage{tikz, pgfplots} 
\usepackage[a4paper,width=150mm,top=25mm,bottom=25mm]{geometry}
\usepackage{amsmath}
\usepackage{amssymb}
\usepackage{mathtools}
\usepackage{xcolor}
\usepackage{caption}
\usepackage{subcaption}
\usepackage{verbatim}
\usepackage{titlesec}
\usepackage{csquotes}
\usepackage{authblk}

\newtheorem{theorem}{Theorem}
\newtheorem{proposition}{Proposition}
\newtheorem{lemma}{Lemma}
\newtheorem{corollary}{Corollary}

\theoremstyle{definition}

\newtheorem{example}{Example}
\newtheorem{remark}{Remark}
\newtheorem{assumption}{Assumption}

\newcommand{\X}{\mathbb{X}}

\newcommand{\Z}{\mathbb{Z}}
\newcommand{\R}{\mathbb{R}}
\newcommand{\E}{\mathbb{E}}
\newcommand{\Prb}{\mathbb{P}}
\newcommand{\C}{\mathbb{C}}
\newcommand{\F}{\mathcal{F}}
\newcommand{\G}{\mathcal{G}}
\newcommand{\D}{\mathcal{D}}

\newcommand{\SNS}{\mathrm{S}}
\newcommand{\ENSO}{\mathrm{ENSO}}
\newcommand{\AMO}{\mathrm{AMO}}
\newcommand{\NAO}{\mathrm{NAO}}
\newcommand{\AOD}{\mathrm{AOD}}

\newcommand{\lag}{\mathbf{l}}
\newcommand{\Lag}{\mathbf{L}}
\newcommand{\Fsep}{H}
\newcommand{\fsep}{h}
\newcommand{\csep}{g}
\newcommand{\intsep}{f}
\newcommand{\rfsep}{\mathrm{re}}
\newcommand{\ifsep}{\mathrm{im}}

\newcommand{\sd}{\mathcal{S}}

\newcommand{\re}{\mathrm{Re}}
\newcommand{\imag}{\mathrm{{Im}}}
\newcommand{\acov}{\mathrm{acov}}
\newcommand{\pa}{\mathrm{pa}}
\newcommand{\paths}{\mathrm{P}}
\newcommand{\internal}{\mathrm{I}}

\newcommand{\anc}{\mathrm{anc}}

\titleformat{\title}{display}{}{}{}[]
\titleformat{\section}[runin]{\normalsize\bfseries}{\thesection.}{0.5em}{}[.]
\titleformat{\subsection}[runin]{\normalsize\itshape}{\thesubsection.}{0.3em}{}[.]

\providecommand{\keywords}[1]
{
  \small	
  \textbf{\textit{Keywords---}} #1
}

\title{\large\textbf{ASYMPTOTIC UNCERTAINTY IN THE ESTIMATION OF FREQUENCY DOMAIN CAUSAL EFFECTS FOR LINEAR PROCESSES}}
\author[1]{\normalsize Nicolas-Domenic Reiter \thanks{nicolas-domenic.reiter@dlr.de}}
\author[2,1]{ \normalsize Jonas Wahl \thanks{wahl@tu-berlin.de}}
\author[3]{\normalsize Gabriele C. Hegerl \thanks{Gabi.Hegerl@ed.ac.uk}}
\author[1,2,4]{\normalsize Jakob Runge \thanks{jakob.runge@tu-dresden.de}}
\affil[1]{\normalsize German Aerospace Center (DLR), Institute of Data Science, Jena, Germany}
\affil[2]{\normalsize Technische Universität Berlin, Berlin, Germany}
\affil[3]{The University of Edinburgh, School of Geoscience, Edinburgh, United Kingdom}
\affil[4]{\normalsize Center for Scalable Data Analytics and Artificial Intelligence (ScaDS.AI) Dresden/Leipzig, TU Dresden, Germany}
\date{}
\begin{document}

\maketitle

\begin{abstract}
Structural vector autoregressive (SVAR) processes are commonly used time series models to identify and quantify causal interactions between dynamically interacting processes from observational data. The causal relationships between these processes can be effectively represented by a finite directed process graph - a graph that connects two processes whenever there is a direct delayed or simultaneous effect between them. Recent research has introduced a framework for quantifying frequency domain causal effects along paths on the process graph. This framework allows to identify how the spectral density of one process is contributing to the spectral density of another. In the current work, we characterise the asymptotic distribution of causal effect and spectral contribution estimators in terms of algebraic relations dictated by the process graph. Based on the asymptotic distribution we construct approximate confidence intervals and Wald type hypothesis tests for the estimated effects and spectral contributions. Under the assumption of causal sufficiency, we consider the class of differentiable estimators for frequency domain causal quantities, and within this class we identify the asymptotically optimal estimator. We illustrate the frequency domain Wald tests and uncertainty approximation on synthetic data, and apply them to analyse the impact of the 10 to 11 year solar cycle on the North Atlantic Oscillation (NAO). Our results confirm a significant effect of the solar cycle on the NAO at the 10 to 11 year time scale.
\end{abstract}

\keywords{
structural autoregressive processes; causal inference; asymptotic distribution; frequency domain; asymptotic efficiency
}

\section{Introduction}\label{sec:intro}
Many scientific questions are causal questions; it is often the objective to test and quantify a hypothesised causal effect between two or more processes. Causal inference \citep{pearl2009causality, peters2017elements} provides a formal language for reasoning about fundamental problems in questions of cause and effect. The underlying mathematical object of any causal problem is a causal model, which consists of a causal graph and a model compatible with the graph.
In causal modelling it is often important to consider the temporal order of causal relations \citep{runge2023causal}. The causal structure of a system that is modelled to evolve over a discrete set of time points is often represented by an infinite graph, where a vertex represents the state of a modelled process at a given time, and edges identify direct delayed or simultaneous causal relationships between processes.
In the literature, graphs of this type are known as full time graphs, time series DAG or chain graphs \citep{peters2017elements, gerhardus2021characterization, dahlhaus2003causality, gerhardus2023projecting}. The infinite time series graph can be reduced to a finite graph whose vertices correspond to the modelled processes. An edge on this finite graph signals the existence of at least one lagged or contemporaneous effect between the respective processes. This reduced graph is sometimes referred to as the summary graph or process graph \citep{peters2017elements, reiter2023formalising}.

Among the possible models for time series graphs are the classic and widely used SVAR processes \citep{lutkepohl2005new, brockwell2009time}. In previous work \cite{reiter2023formalising}, we showed that SVAR processes can be reformulated as a linear structural causal model of stochastic processes. This type of causal model we called \emph{structural equation process (SEP)} \citep{reiter2023formalising}. In this formulation, each edge on the process graph is assigned a linear filter and each vertex is assigned an independent noise process, so that each SVAR component process can be written as the noise process plus the sum of the filtered parent processes.  
An implication of this reformulation is a notion of process-level causal effects and a generalisation of the classical path rule \citep{wright1934method, Wright1921CorrelationAndCausation}. It turns out that the computations of these process-level causal effects become more tractable in the frequency domain after the application of the Fourier transform. The Fourier transformed causal effect filters are rational functions on the complex unit circle parameterised by the underlying SVAR coefficient vector \cite{reiter2023formalising}. These functions describe direct and indirect impulse response relationships in the frequency domain between the SVAR component processes. When one process directly or indirectly causes another, the spectral density of the latter is partially determined by the spectral density of the causing process. This part of the spectral density can be quantified and it is called the spectral contribution of the causing process.

Estimating causal effects and spectral contributions in the frequency domain is of potential interest in many scientific applications \citep{runge2019inferring, Seth3293}. However, in order to draw conclusions from such estimates, it is necessary to have means of assessing the associated statistical uncertainty. One possibility to approach the uncertainty of estimators is to consider their asymptotic behaviour, i.e. the behaviour of the estimator for large sample sizes. In this paper we study aspects of first-order asymptotic theory for estimators of causal quantities in the frequency domain. The present paper is structured as follows. 

In Section \ref{sec: preliminaries}, we recall the necessary definitions and notions from causal time series modelling. In Section \ref{sec: asymptotic distribution}, we establish rational expressions for the asymptotic covariance of ordinary least squares (OLS) based estimators for causal effects and spectral contributions in the frequency domain. Based on these expressions we formulate Wald type tests, and we approximate the uncertainty of the effect and contribution estimators. 
For our computations we assume that there are no unobserved processes and that the structure of the process graph and the contemporaneous links on the time series graph are known. However, we do not need to know the exact time lags with which one process possibly drives another. To illustrate the uncertainty approximation in OLS-based estimates of causal effects and spectral contributions we generate synthetic time series data and apply the developed formulas to them. 

Section \ref{sec: asymptotic efficiency} is motivated by a recent result of \citep{Guo2020EfficientLS}, where for causally sufficient linear models the authors identify among all consistent and differentiable causal effect estimators the one with the lowest asymptotic variance. We generalise this result to estimators for frequency domain causal quantities in time series models. 

In Section \ref{sec: application}, we use our characterisations of the asymptotic distribution to re-examine if and how anomalies in solar activity affect anomalies in the Northern Atlantic Oscillation (NAO), a question that has been intensely studied and debated in the climate sciences \citep{gray2010solarinfluenceclimate, gray2013responsesolarcycle, gray2016solarcycleNAO, Characterizationofthe11YearSolarSignalUsingaMultipleRegressionAnalysisoftheERA40Dataset}. In our analysis, we focus on frequencies that represent oscillations that last about 10 years, as the variability in solar activity is concentrated on this time scale. Our Wald type test suggests with 95 $\%$ confidence that variability in solar activity contributes to variations of the NAO on this time scale. 

\section{Preliminaries}\label{sec: preliminaries}
\subsection{Graphical modelling} Suppose $V$ is a finite set that indexes the processes to be modeled. A \emph{time series graph} for the processes $V$ is a \emph{directed acyclic graph (DAG)} $\G = (V \times \Z, \D)$, so that a vertex $v(t)$  on $\G$ represents the state of the process indexed by $v$ at time $t$. The set of edges $\D$ must be such that for any link $(u(t-k), v(t))\in \D$ it holds that $k \geq 0$ and also that $\{(u(s-k), v(s)): s \in \Z) \} \subset \D$, which in the following we will denote by $u \to_k v $. The \emph{order} of the time series graph $\G$ is the maximum time lag with one process is causing another, that is $p \coloneqq \max_{u,v\in V}\{k : u \to_k v\}$. We denote by $\Lag^\G_v$ the set of all $(u,k)$ such that $u \to_k v$. The \emph{contemporaneous DAG} $\G_0 = (V, \D_0)$ contains a direct link from $u$ to $v$ if and only if there is a contemporaneous effect $u\to_0 v$ on $\G$. Consequently, the set of contemporaneous parents of $v\in V$, written $\pa_0(v)$, consists of all such processes $u$ that satisfy $u \to_0 v$.
The \emph{process graph} of $\G$ is a finite graph $G=(V,D)$ over the set of process indices $V$. This graph contains a directed link from $u$ to $v$, denoted $u \to v$, if and only if $u \neq v$ and $u\to_k v$ for some $k \geq 0$. The parent set of a process $v \in V$ is $\pa(v) \coloneqq \{u \in V : u \to v \}$. In this work we will always assume that the structure of the process graph $G$ and of the contemporaneous graph $\G_0$, together with an upper bound $q \geq p$ for the order of $\G$ is known. However, the details of the time series graph in form of the sets $\Lag_v^\G$ that go beyond the structure of the contemporaneous graph may not be known. The information on $\G$ given $G$ and $\G_0$ and an upper bound $q \geq p$ can be encoded by selecting for every process $v\in V$ a finite set of time lagged relations $\Lag_v$ such that 
\begin{align}\label{condition: time lags}
	 \Lag_v^\G \subset  \Lag_v &\subset \pa_0(v)  \times \{0\} \cup (\{v\} \cup \pa(v)) \times [1,q].
\end{align} 
Throughout this work we use the symbol $\Lag $ to refer to a collection of time lagged relations $\{\Lag_v\}_{v \in V}$ such that each $\Lag_v$ satisfies (\ref{condition: time lags}). Accordingly, $\Lag^\G$ denotes the collection $\{\Lag^\G_v\}_{v \in V}$. 
We illustrate the relation between a time series graph and its process  graph with an example in Figure \ref{fig: preliminaries tsg}. 
\begin{figure}
    \centering
    \includegraphics[width=0.4\textwidth]{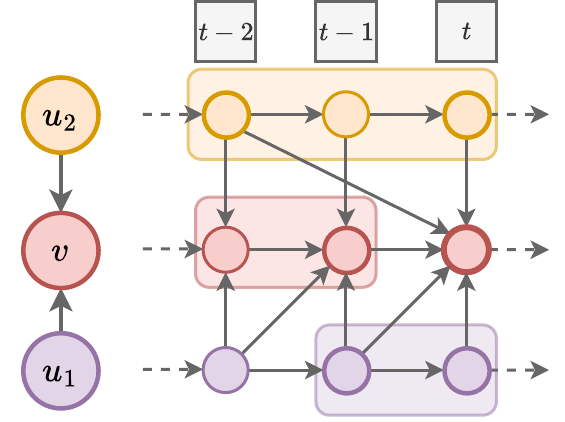}
    \caption{A process graph (left) together with an excerpt of its underlying time series graph (right).}
    \label{fig: preliminaries tsg}
\end{figure}
To avoid overly complicated technicalities, we will often need to make the following assumption in this work. 
\begin{assumption}[AL2]\label{assumption: AL2}
    A set of time lagged relations $\Lag_v$ is said to satisfy the AL2 assumption if for every $u \in V$ if the set $\{j \mid \in (u,j) \in \Lag_v \}$ is either empty or contains at least two elements. 
\end{assumption}
\subsection{Structural vector autoregressive process processes}
A \emph{structural vector autoregressive (SVAR) process} for the time series graph $\G$ of order $p$ is a multivariate discrete time stochastic process $\X = (\X_v(t))_{v \in V, t \in \Z}$. The process $\X$ is specified by a multivariate Gaussian white noise process $\eta = (\eta_v(t))_{v\in V, t \in \Z}$, i.e. the variables $\eta_v(t)$ are mutually independent Gaussians for all $v \in V$ and $t\in \Z$, 
and a finite coefficient vector $\Phi = (\phi_{u,v}(k))_{u\to_k v} \in \R^{\Lag^\G}$. The dimensions of $\Phi$ are indexed by the direct lagged and contemporaneous causal relations. The vector $\Phi$ and the noise process $\eta$ must be such that 
\begin{equation}
    \X_v(t) = \eta_{v}(t) + \sum_{(u,k) \in \Lag_v^\G} \phi_{u,v}(k) \X_u(t-k)
\end{equation}
for every $v\in V$. Throughout this work we require every SVAR coefficient vector to be such that the process $\X$ is \emph{stable} \citep{lutkepohl2005new, brockwell2009time} and condition (\ref{assimption: stability}) in the supplementary material. With the finite set of time lagged relations $\Lag_v$ we associate a parameter vector $\Phi^{\Lag_v} =(\phi_{u,v}^{\Lag_v}(k))_{(u,k) \in \Lag_v}$ and a stochastic process $\X^{\Lag_v}$, i.e.
\begin{align*}
    \phi^{\Lag_v}_{u,v}(k) &\coloneqq \begin{cases}
        \phi_{u,v}(k) & \text{if } (u,k) \in \Lag_v^\G \\
        0 & \text{if } (u,k) \in \Lag_v \setminus \Lag_v^\G
    \end{cases} &
    \X^{\Lag_v}(t) &\coloneqq \begin{bmatrix}
        \X_{u}(t-k)
    \end{bmatrix}_{(u, k) \in \Lag_v}.
\end{align*}
\begin{example}
	Let us illustrate the notation with the example depicted in Figure \ref{fig: preliminaries tsg}. For process $v$ the true set of time lagged causal effects is 
     \begin{align*}
         \Lag_v^\G = \{(u_1, 0), (u_1, 2), (v, 1), (u_2, 0), (u_2, 1)\}.
     \end{align*} 
	In Figure \ref{fig: preliminaries tsg} we represent the set of considered time lags $\Lag_v$ by the nodes inside the boxes in the time series graph. 
	The choice of the orange box in the row corresponding to process $u_2$ reflects the information that the time lag with which $u_1$ is driving $v$ is not larger than two. Similarly, the box in the row of $u_2$ encodes the assumption that the time lag with which $u_2$ is causing $v$ is at most one, and the time lag with which $v$ is driving itself is at most two (red box in the second row). So the coefficient vector and stochastic process associated with the lags $\Lag_v$ are as follows
	\begin{align*}
		\Phi^{\Lag_v} &= \begin{bmatrix}  \phi_{u_1, v}(0) & \phi_{u_1, v}(1)   \phi_{u_2,v}(0) & 0 & \phi_{u_2,v}(2)  & \phi_{v, v}(1) & 0 \end{bmatrix} \\
		\X^{\Lag_v}(t) &= \begin{bmatrix} \X_{u_1}(t) & \X_{u_1}(t-1) & \X_{u_2}(t) & \X_{u_2}(t-1) & \X_{u_2}(t-2) &  \X_{v}(t-1) & \X_{v}(t-2) \end{bmatrix}.
	\end{align*}
\end{example}

Since $\Lag_v$ satisfies (\ref{condition: time lags}) it holds that $\X_v(t) = (\Phi^{\Lag_v})^\top \X^{\Lag_v}(t) + \eta_v(t)$, and the parameter $\Phi^{\Lag_v}$ satisfies the regression relation
\begin{equation}\label{eq: regression relation}
    \Phi^{\Lag_v} = \E[(\X^{\Lag_v}(t)) (\X^{\Lag_v}(t))^\top]^{-1} \E[(\X^{\Lag_v}(t) )(\X_v(t))].
\end{equation}
Suppose we have a time series of length $T+q$, where $q\geq p$ is the maximum time lag occurring in $\Lag$, then one can estimate the above covariance information and use relation (\ref{eq: regression relation}) to obtain the ordinary least squares (OLS) estimator $\hat{\Phi}^{\Lag_v}$ for the estimand $\Phi^{\Lag_v}$. For large $T$ the estimator $\hat{\Phi}^{\Lag_v}$ is approximately normally distributed, i.e.  
\begin{align}\label{eq: asymptotic normality phi}
    \sqrt{T}(\hat{\Phi}^{\Lag_v}- \Phi^{\Lag_v}) &\to_d \mathcal{N}(\mathbf{0}, P^{\Lag_v}) & P^{\Lag_v} &= \omega_v \E[\X^{\Lag_v}(t) (\X^{\Lag_v}(t))^\top]^{-1}.
\end{align}

\subsection{Structural causal models of stochastic processes}
A SVAR process admits an equivalent formulation as a \emph{structural causal model of stochastic processes} at the level of its process graph $G$ \citep{reiter2023formalising}. This model consists of a set of stochastic processes $\X^\internal = (\X^\internal_v)_{v \in V}$ and a \emph{link filter} $\lambda_{u,v}= (\lambda_{u,v}(s))_{s\in \Z}$ for every link $u\to v$ on the process graph $G$ such that 
\begin{equation}\label{eq: direct link filters}
    \X_v = \X^\internal_v + \sum_{u \in \pa(v)} \lambda_{u,v} \ast \X_u,
\end{equation}
where $\ast$ denotes the convolution of two sequences. Infinite sequences and convolution operation on them do not seem practical for computations. However, the computations become more tractable in the frequency domain after application of the Fourier transform $\F$. In order to compactly express the Fourier transformed link filters in terms of the underlying SVAR parameter $\Phi$ let us introduce for any two $x,y \in V$ their associated \emph{lag polynomial}
\begin{equation}\label{eq: lag polynomial}
    \varphi_{x,y}(z) \coloneqq \sum_{(x,k) \in \Lag_y}\phi_{x,y}^{\Lag_y}(k)z^k = \sum_{(x,k)\in \Lag^\G_y}\phi_{x,y}(k)z^k,
\end{equation}
evaluated at frequency $z \in S^1 = \{z \in \C: |z| =1 \}$. 
The \emph{link function} \cite{reiter2023formalising} $\fsep_{u,v}$ of $u \to v$ is the Fourier transform of the link filter $\lambda_{u,v}$. This function is defined on the complex unit circle $S^1$ and parameterised as a fraction of lag polynomials, that is 
\begin{equation}\label{eq: link function}
    \fsep_{u,v}(z) \coloneqq (\F(\lambda_{u,v}))(z) = \frac{\varphi_{u,v}(z)}{1 - \varphi_{v,v}(z)}.
\end{equation}

\section{Asymptotic distribution of causal effect estimators in the frequency domain}\label{sec: asymptotic distribution}

\subsection{Asymptotic uncertainty in the estimation of link functions}
For the rest of this section, we fix a time series graph DAG $\G$ with order $p \geq 0$ and process graph $G=(V,D)$. Our information on $\G$ relative to the process graph $G$ and the contemporaneous graph $\G_0$ is encoded by a collection of time lagged relations $\Lag = \{ \Lag_v \}_{v \in V}$. 
In order to quantify the uncertainty in the estimation of link functions we would like to consider them as real valued functions in the parameters of the SVAR model. So from now on we read a given link function as a two dimensional real valued function of which the first coordinate is the real part and the second is the imaginary part. To express the function for the real resp. imaginary part as a rational function in the SVAR coefficients we make use of the fact that the non-zero complex numbers are isomorphic to a particular subgroup of the orthogonal $2\times 2$ dimensional matrices. Specifically, if $0 \neq z \in \C$ is a complex number, then we define its corresponding operator as 
\begin{align}\label{eq: complex operator}
	M(z) &= \begin{bmatrix}\re(z) & -\imag(z) \\ \imag(z) & \re(z) \end{bmatrix} & M(z^{-1}) = M(z)^{-1} &= \frac{1}{|z|^2} \begin{bmatrix}\re(z) & \imag(z) \\ -\imag(z) & \re(z) \end{bmatrix} 
{}\end{align}  
Note that $M(z)e_1 = z$ and $M(zz') = M(z)M(z')$ and $M(z^\ast) = M(z)^\top$. Furthermore, for $x, y \in V$ the evaluation of the real and imaginary part of the lag polynomial $\varphi_{x,y}$ are polynomials in the coefficients $\Phi^{\Lag_y}$, i.e. 
\begin{align}\label{eq: real imag lag polynomial}
	\re(\varphi_{x,y}(z)) &= \sum_{(x,k) \in \Lag_y} \phi_{x,y}^{\Lag_y}(k)\re(z^k) & \imag(\varphi_{x,y}(z)) &= \sum_{(x,k) \in \Lag_y} \phi_{x,y}^{\Lag_y}(k)\imag(z^k).
\end{align} 
By combining (\ref{eq: complex operator}) and (\ref{eq: real imag lag polynomial}) we conclude that the real and imaginary part of the link function are rational functions in the parameters $\Phi^{\Lag_v}$, i.e., 
\begin{align*}
	\re(\fsep_{u,v}(z)) &= \frac{\re(\varphi_{u,v}(z))\re(1- \varphi_{u,v}(z)) - \imag(\varphi_{u,v}(z))\imag(\varphi_{v,v}(z))}{\re(1- \varphi_{v,v}(z))^2 + \imag(\varphi_{w,w}(z))^2} \\
	\imag(\fsep_{u,v}(z)) &= \frac{\re(\varphi_{u,v}(z)) \imag(\varphi_{v,v}(z)) + \imag(\varphi_{u,v}(z))\re(1-\varphi_{v,v}(z))}{\re(1- \varphi_{v,v}(z))^2 + \imag(\varphi_{v,v}(z))^2}
\end{align*}
\begin{example}
 	We use the example depicted in Figure \ref{fig: preliminaries tsg} to demonstrate that the real and imaginary part of the link function $\fsep_{u_1, v}(z)$  are rational functions in the SVAR coefficients. 
	For the sake of easier readability we abbreviate $x_k = \phi^{\Lag_v}_{u_1, v}(k)$ and $y_j = \phi^{\Lag_v}_{v,v}(j)$ for $k=0,1$ and $j=1,2$. We compute the rational functions for the real and the imaginary part
	\begin{align*}\re(\fsep_{u_1, v}(z)) &= \frac{p(x_0, x_1, y_1, y_2)}{q(y_1, y_2)} & 
	\imag(\fsep_{u_1, v}(z))& = \frac{p'(x_0, x_1, y_1, y_2)}{q(y_1, y_2)}.\end{align*} The denominator $q$ is a polynomial in the two variables $y_1, y_2$, i.e. 
	\begin{align*}
		q(y_1, y_2) &= 1- \mu_1 y_1 - \mu_2 y_2  + \mu_{1,1}y_1^2 + \mu_{2,2}y_2^2 + \mu_{1,2}y_1y_2
	\end{align*}
	with coefficients $\mu_i = 2 \re(z^i)$ and $\mu_{i,i} = \re(z^i)^2 + \imag(z^i)^2$ for $i=1,2$, and $\mu_{1,2} =2 (\re(z)\imag(z^2) + \re(z^2)\imag(z))$. Similarly, the numerator of the real resp. imaginary part is a polynomial of degree two in four variables, i.e. 
	\begin{align*}
		p(x_0, x_1, y_1, y_2) &= \alpha_0 x_0 + \alpha_1 x_1 + \alpha_{0,1} x_0y_1 + \alpha_{0,2 }x_0y_2 + \alpha_{1,1}x_1y_1 + \alpha_{1,2}x_1y_2, \\
		p'(x_0, x_1, y_1, y_2) &= \beta_0 x_0 + \beta_1 x_1 + \beta_{0,1} x_0y_1 + \beta_{0,2 }x_0y_2 + \beta_{1,1}x_1y_1 + \beta_{1,2}x_1y_2
	\end{align*}
	with coefficients $\alpha_i = \re(z^i)$ and $\beta_i = \imag(z^i)$, and $\alpha_{i,j} = \imag(z^i)\imag(z^j)-\re(z^i) \re(z^j)$ and $\beta_{i,j} = \re(z^j)\imag(z^i) - \re(z^i)\imag(z^j)$ for $i=0,1$ and $j=1,2$. 
\end{example} 
For a link $u\to v$ on the process graph, we write $\hat{\fsep}_{u,v}(z)$ to denote the estimator for $\fsep_{u,v}(z)$ obtained by plugging the OLS estimator $\hat{\Phi}^{\Lag_v}$ into the equations (\ref{eq: lag polynomial}) and (\ref{eq: link function}). Thus, the uncertainty in $\hat{\fsep}_{u,v}(z)$ is determined by the uncertainty in the asymptotically normal OLS estimator $\hat{\Phi}^{\Lag_v}$. From the delta method \citep{Vaart_1998} it follows that the estimator $\hat{\fsep}_{u,v}(z)$ follows a two-variate normal distribution if the gradient of the link function as a function of $\Phi^{\Lag_v}$ has full rank. If $\Lag_v$ satisfies the assumption \ref{assumption: AL2}, then the gradient has full rank for generic choices of $\Phi$, which means that the set of parameters $\Phi$ for which the rank is not full has Lebesgue measure zero. Furthermore, all the link function estimators for the links pointing to the process vertex $v$ and the estimator for the \emph{internal function} of $v$, which we define as $\intsep_v \coloneqq (1- \varphi_{v,v})^{-1}$, asymptotically follow a joint normal distribution. Let us write $\Fsep_{\bullet, v}(z)$ to denote the vector composed of the internal function at $v$ and the link functions associated with the parents of $v$, each evaluated at $z \in S^1$. Even though we do not need to use the internal functions yet, we will introduce them at this point as it will be useful later on.
\begin{proposition}\label{prop: asymptotic normality link functions} 
Suppose $\Lag_v$ satisfies the AL2 assumption \ref{assumption: AL2}. Then for generic choices of SVAR coefficients $\Phi$ and for all but finitely many $z\in S^1$, the asymptotically normal OLS estimator $\hat{\Fsep}_{\bullet, w}(z)$, i.e.
\begin{align}\label{eq: asymptotic cov link functions} 
        \sqrt{T}(\hat{\Fsep}_{\bullet,v}(z)- \Fsep_{\bullet,v}(z)) &\to_d \mathcal{N}(\mathbf{0}; \begin{bmatrix}
        \acov^{\Lag_v}(u_1,u_2;z)
    \end{bmatrix}_{u_1, u_2 \in \pa(v)\cup \{v\}})
\end{align}
has positive definite asymptotic covariance matrix. 
The asymptotic covariance matrix is block structured into $2\times 2$-matrices indexed by tuples of $\pa(v) \cup \{v\}$, and the $2\times 2$ block matrix $\acov^{\Lag_v}(u_1,u_2;z)$, i.e. the asymptotic covariance of $\hat{\fsep}_{u_1,v}(z)$ and $\hat{\fsep}_{u_2,v}(z)$, is a rational function in the SVAR parameters $\phi_{u_1,v}(k), \phi_{u_2,v}(j), \phi_{v,v}(l)$. 
\end{proposition}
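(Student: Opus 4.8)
The plan is to present $\hat{\Fsep}_{\bullet,v}(z)$ as a rational transformation of the asymptotically normal OLS estimator $\hat{\Phi}^{\Lag_v}$ and to read all three assertions off the delta method. Let $g_z\colon\R^{\Lag_v}\to\R^{2(|\pa(v)|+1)}$ be the map sending a coefficient vector to the stacked real and imaginary parts of the internal function $\intsep_v(z)$ and of the link functions $\fsep_{u,v}(z)$, $u\in\pa(v)$. By the rational expressions for $\re(\fsep_{u,v}(z))$ and $\imag(\fsep_{u,v}(z))$ obtained above, $g_z$ is a rational map, defined and smooth on a neighbourhood of $\Phi^{\Lag_v}$ whenever $1-\varphi_{v,v}(z)\neq 0$, which excludes at most finitely many $z\in S^1$; and by construction $\hat{\Fsep}_{\bullet,v}(z)=g_z(\hat{\Phi}^{\Lag_v})$.

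First I would apply the delta method \citep{Vaart_1998} to $g_z$ together with (\ref{eq: asymptotic normality phi}): this yields the asymptotic normality (\ref{eq: asymptotic cov link functions}) with asymptotic covariance matrix $J_z\,P^{\Lag_v}\,J_z^\top$, where $J_z$ denotes the Jacobian of $g_z$ at $\Phi^{\Lag_v}$. Since the coordinates of $g_z$ come in pairs indexed by $\pa(v)\cup\{v\}$, the matrix $J_z P^{\Lag_v} J_z^\top$ is automatically partitioned into $2\times 2$ blocks indexed by pairs from $\pa(v)\cup\{v\}$, the $(u_1,u_2)$ block being $J_z^{(u_1)}P^{\Lag_v}(J_z^{(u_2)})^\top$ with $J_z^{(u)}$ the $2\times|\Lag_v|$ sub-Jacobian of the $u$-component; this is the asserted block structure. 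For rationality, $J_z$ has entries rational in $\Phi^{\Lag_v}$ because $g_z$ is rational, while $P^{\Lag_v}=\omega_v\,\E[\X^{\Lag_v}(t)(\X^{\Lag_v}(t))^\top]^{-1}$ is rational in the SVAR parameters because the autocovariances of a stable SVAR process are rational functions of its coefficients (Yule--Walker equations, \citep{lutkepohl2005new}); as rational functions are closed under products and inverses, every block $\acov^{\Lag_v}(u_1,u_2;z)$ is a rational function of the SVAR parameters, and since $\phi_{u_1,v}(k),\phi_{u_2,v}(j),\phi_{v,v}(l)$ enter only through $J_z^{(u_1)}$ and $J_z^{(u_2)}$, it is in particular rational in these.

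What is left is positive definiteness, and this is the step I expect to be the main obstacle. The matrix $P^{\Lag_v}=\omega_v\,\E[\X^{\Lag_v}(t)(\X^{\Lag_v}(t))^\top]^{-1}$ is positive definite: the covariance matrix $\E[\X^{\Lag_v}(t)(\X^{\Lag_v}(t))^\top]$ is nonsingular because the spectral density matrix of $\X$ is everywhere nonsingular (by stability and positivity of the noise variances), so that no non-trivial linear combination of the entries of $\X^{\Lag_v}(t)$ vanishes almost surely. Consequently $w^\top(J_z P^{\Lag_v}J_z^\top)w=(J_z^\top w)^\top P^{\Lag_v}(J_z^\top w)$ is positive unless $J_z^\top w=0$, and the asymptotic covariance is positive definite exactly when $J_z$ has full row rank $2(|\pa(v)|+1)$. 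Here Assumption~\ref{assumption: AL2} enters. Using (\ref{eq: complex operator}) one computes, in complex notation, $\partial\fsep_{u,v}(z)/\partial\phi_{u,v}(k)=z^k\intsep_v(z)$, $\partial\fsep_{u,v}(z)/\partial\phi_{v,v}(l)=z^l\fsep_{u,v}(z)\intsep_v(z)$ and $\partial\intsep_v(z)/\partial\phi_{v,v}(l)=z^l\intsep_v(z)^2$, with all other partials vanishing; so, ordering the columns of $J_z$ with the coefficient blocks $\{\phi_{u,v}(\cdot)\}_{u\in\pa(v)}$ first and $\{\phi_{v,v}(\cdot)\}$ last, the coefficients of a given parent $u$ appear only in the $u$-component of $g_z$.

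To conclude, I would show that $J_z^\top w=0$ forces $w=0$, for $w=(w_u)_{u\in\pa(v)\cup\{v\}}$ with $w_u\in\R^2$ and for all but finitely many $z$. Restricting $J_z^\top w = 0$ to the columns of a fixed parent $u_0$ leaves only $w_{u_0}^\top$ applied to the $2\times|\{k:(u_0,k)\in\Lag_v\}|$ matrix with columns $M(z^k\intsep_v(z))e_1$; since $u_0$ is a parent of $v$, its lag set in $\Lag_v$ is non-empty, hence by Assumption~\ref{assumption: AL2} contains two distinct elements $k_1\neq k_2$, and because $\intsep_v(z)\neq 0$ while $z^{k_1}$ and $z^{k_2}$ are $\R$-linearly independent in $\C$ for every $z\in S^1$ outside the finitely many $2|k_1-k_2|$-th roots of unity, this matrix has rank $2$ and $w_{u_0}=0$. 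Applying this to every parent leaves $w=(0,\dots,0,w_v)$, and then restricting to the columns $\{\phi_{v,v}(\cdot)\}$ gives $w_v^\top$ applied to the matrix with columns $M(z^l\intsep_v(z)^2)e_1$, which has rank $2$ by the same reasoning (Assumption~\ref{assumption: AL2} for $v$), so $w_v=0$. Thus $J_z$ has full row rank, and the asymptotic covariance is positive definite, for every stable $\Phi$ and every $z\in S^1$ outside a finite exceptional set---in particular for generic $\Phi$. The crux is the near-triangular structure of $J_z$ (only $\phi_{v,v}(\cdot)$ is shared among the components) combined with the elementary fact that two distinct monomials $z^{k_1},z^{k_2}$ are real-linearly independent on $S^1$ away from finitely many roots of unity---precisely what Assumption~\ref{assumption: AL2} provides. (If $\Lag_v$ carries no self-lags for $v$, then $\intsep_v$ is constant and the statement is to be read for the sub-vector indexed by $\pa(v)$, to which the same argument applies unchanged.)
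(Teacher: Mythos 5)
Your proposal is correct, and its skeleton is the same as the paper's: apply the delta method to the asymptotically normal OLS estimator $\hat{\Phi}^{\Lag_v}$, read off the $2\times 2$ block structure and rationality from the rational parameterisation of $\Fsep_{\bullet,v}(z)$, and reduce positive definiteness to full row rank of the Jacobian (since $P^{\Lag_v}$ is positive definite). Where you genuinely diverge is in how the rank is established. The paper factorises the Jacobian as $a_v(z)\,\mathbf{A}(z)\,\mathbf{Z}$, with $\mathbf{Z}$ a Vandermonde-type frequency matrix whose full rank follows from Assumption \ref{assumption: AL2} and the exclusion of finitely many roots of unity, and $\mathbf{A}(z)$ a block-triangular matrix built from the matrices $A(z), A_u(z), A_v(z)$ of Lemma \ref{lemma: derivative transfer function}; invertibility of $\mathbf{A}(z)$ hinges on $\det(A_v(z))\neq 0$, a polynomial condition in $\phi^{\Lag_v}_{v,v}(\cdot)$, and this is exactly where the ``generic $\Phi$'' hypothesis is consumed. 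You instead work with the holomorphic derivatives $\partial\fsep_{u,v}/\partial\phi_{u,v}(k)=z^k\intsep_v(z)$, $\partial\fsep_{u,v}/\partial\phi_{v,v}(l)=z^l\fsep_{u,v}(z)\intsep_v(z)$, $\partial\intsep_v/\partial\phi_{v,v}(l)=z^l\intsep_v(z)^2$, and exploit the sparsity that each parent's coefficients enter only its own component; full row rank then reduces to $\intsep_v(z)\neq 0$ (automatic on $S^1$ under the stability condition (\ref{assimption: stability})) together with the $\R$-linear independence of $z^{k_1},z^{k_2}$ for two distinct lags, which fails only at finitely many roots of unity determined by $\Lag_v$ alone. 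This buys a slightly stronger conclusion — positive definiteness for \emph{every} stable $\Phi$, not merely generic ones, with an exceptional $z$-set independent of $\Phi$ — and avoids the explicit (and in the paper somewhat error-prone) computation of $A_u(z)$ and $A_v(z)$. You also correctly flag the edge case $\lag_v=\emptyset$, allowed by Assumption \ref{assumption: AL2} but implicitly excluded in the paper's rank argument, where the internal-function block is deterministic and the full matrix cannot be positive definite. One small reading point: the blocks also depend on the remaining SVAR parameters through $P^{\Lag_v}$; your Yule--Walker remark covers this, so the claim ``rational in $\phi_{u_1,v},\phi_{u_2,v},\phi_{v,v}$'' should be understood, as in the paper, with $P^{\Lag_v}$ carrying the residual covariance dependence.
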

In the appendix we provide the explicit rational functions for the block entries in the asymptotic covariance $\acov^{\Lag_v}(z)$. 
\subsection{The asymptotic covariance of the causal effect estimator}
Viewing an SVAR process as a structural causal model of stochastic processes enables one to quantify causal effects between processes at the level of the process graph. These process level effects can be computed by means of a generalized path rule.
A \emph{directed path} between processes is a sequence of consecutive edges on the process graph. In particular, we allow paths to visit a vertex more than once. In this work, we treat the empty path at any process node $v$ as a directed path and denote it as $\epsilon_v$. Suppose $v,w \in V$ are process indices, then we write $\paths(v,w)$ for the set of all directed paths starting at $v$ and ending at $w$. With the direct link filters (\ref{eq: direct link filters}) it is possible to systematically quantify the influence the processes have on another. Specifically, if $\Pi \subset \paths(v,w)$, then the part of $\X_w$ that is determined by the process $\X_v$ along all the paths $\Pi$ is identified by the formula
\begin{align}\label{eq: path wise contribution}
    \X^\Pi &= (\sum_{\pi \in \Pi} \lambda^{(\pi)}) \ast \X_v, &\lambda^{(\pi)} &= \prod_{(x,y) \in \pi} \lambda_{x,y},
\end{align}
where $\lambda^{(\pi)}$ refers to the path filter of $\pi$, which is the convolution of all the link filters on $\pi$, see \cite{reiter2023formalising}. This description quantifies the causal structure among the processes of $\X$ in terms of the structure of the process graph $G$, and generalises the classical path rule \citep{wright1934method}. Once more, the expressions are easier to compute in the frequency domain, i.e. 
\begin{align}\label{eq: path function}
     \fsep^\Pi &\coloneqq \F(\sum_{\pi \in \Pi} \lambda^{(\pi)}) = \sum_{\pi \in \Pi} \fsep^{(\pi)} & \fsep^{(\pi)} &\coloneqq \F(\lambda^{(\pi)}) = \prod_{(v,w)\in \pi} \fsep_{v,w},
\end{align} 
where $\fsep^{(\pi)}$ is the \emph{path function} \cite{reiter2023formalising} associated with the path $\pi$, which is the point-wise multiplication of the link functions associated with the links on $\pi$, and $\fsep^\Pi$ is the \emph{total effect function} for the set of paths $\Pi$. In particular, the path function of the empty path is the constant function. The evaluation of a path or total effect function at some frequency $z$ is a rational function in the SVAR parameter $\Phi^\Lag = (\Phi^{\Lag_v})_{v \in V}$. This follows because every link function is a rational function in $\Phi^\Lag$ and by using the identities below (\ref{eq: complex operator}). In the following, we write $\hat{\fsep}^{(\pi)}$ to refer to the OLS-based estimator for $\fsep^{(\pi)}$ that we obtain by inserting the OLS result $\hat{\Phi}^\Lag$ into (\ref{eq: path function}).

In the following proposition we give an expression for the asymptotic covariance of OLS-based path function estimators. To keep the notation simple, we consider only directed paths $\pi$ that visit each process vertex of $G$ at most once, i.e. paths that do not contain cycles.
\begin{proposition} \label{prop: asymptotic covariance path function}
    Suppose $\pi$ and  $\rho$ are two directed paths without repeating vertices on the process graph $G$ with starting points $v$ resp. $v'$. Then the asymptotic covariance between the estimators $\hat{\fsep}^{(\pi)}$ and $\hat{\fsep}^{(\rho)}$ is 
    \begin{align*}
        \acov^\mathbf{L}(\pi, \rho) &= \sum_{v \in V(\pi \cap \rho)\setminus\{v,v'\}} \fsep^{(\pi\setminus v)} [\acov^{\Lag_v}(\pi(v), \rho(v))] (\fsep^{(\rho\setminus v)})^\ast, & \fsep^{(\pi\setminus v)} &\coloneqq \prod_{x \to y \in \pi : y \neq v} \fsep_{x,y}
    \end{align*}
    where $V(\pi \cap \rho)$ is the set of all vertices that are visited by both $\pi$ and $\rho$, and if $v$ is a vertex visited by $\pi$, then $\pi(v)\in V$ is the vertex that precedes $v$ on the path $\pi$. 
\end{proposition}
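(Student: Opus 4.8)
The plan is to realise $\fsep^{(\pi)}$ and $\fsep^{(\rho)}$, evaluated at a fixed frequency $z \in S^1$, as differentiable two-dimensional real functions of the stacked OLS parameter $\hat{\Phi}^{\Lag} = (\hat{\Phi}^{\Lag_v})_{v \in V}$, and to apply the multivariate delta method, exploiting that the per-equation blocks $\hat{\Phi}^{\Lag_v}$ are mutually asymptotically uncorrelated. The asymptotic covariance $\acov^{\Lag}(\pi,\rho)$ is then the appropriate off-diagonal block of $J\Sigma J^\top$, where $J$ stacks the Jacobians of $\fsep^{(\pi)}$ and $\fsep^{(\rho)}$ at $\Phi^{\Lag}$ and $\Sigma$ is the joint asymptotic covariance of $\hat{\Phi}^{\Lag}$; the content of the proposition is to rewrite this block through the $\acov^{\Lag_v}$ of Proposition~\ref{prop: asymptotic normality link functions}.

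The first step is to pin down $\Sigma$. Each block obeys (\ref{eq: asymptotic normality phi}); what is additionally needed is that $\Sigma$ is block-diagonal across $V$, with $v$-th block $P^{\Lag_v}$. Writing $\sqrt{T}(\hat{\Phi}^{\Lag_v} - \Phi^{\Lag_v}) \approx \E[\X^{\Lag_v}(t)(\X^{\Lag_v}(t))^\top]^{-1}\, T^{-1/2}\sum_t \X^{\Lag_v}(t)\eta_v(t)$, the cross-block limiting covariance is governed by $\E[\X^{\Lag_v}_a(t)\,\eta_v(t)\,\eta_{v'}(s)\,\X^{\Lag_{v'}}_b(s)]$ for $v \neq v'$; expanding the regressors in moving-average form and applying the Gaussian fourth-moment (Isserlis) identity, the only surviving pairing would have to pair $\eta_v(t)$ with a noise term carried contemporaneously by a regressor of equation $v'$ and $\eta_{v'}(s)$ with one carried contemporaneously by a regressor of equation $v$; causality forces $s = t$ and that only contemporaneous links be involved, which yields directed paths $v \rightsquigarrow v'$ and $v' \rightsquigarrow v$ in the contemporaneous DAG $\G_0$, a contradiction. (This is the familiar fact that, for a recursive SVAR with independent noise components, equation-by-equation OLS coincides with the Gaussian MLE.) I expect this to be the one genuinely model-specific step; the rest is the chain rule and the delta method.

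The second step records the structure of the map $\Phi^{\Lag} \mapsto \fsep^{(\pi)}(z)$. By (\ref{eq: link function}) and the identities below (\ref{eq: complex operator}), $\fsep^{(\pi)}$ is a rational function of $\Phi^{\Lag}$, differentiable at every parameter for which the denominators $1 - \varphi_{v,v}(z)$ along $\pi$ do not vanish (all but finitely many $z$). Because $\pi$ has no repeated vertices, every non-starting vertex $v$ of $\pi$ is the head of exactly one edge $\pi(v) \to v$ of $\pi$; the factor $\fsep_{\pi(v),v}$ is the unique factor of $\fsep^{(\pi)}$ depending on $\Phi^{\Lag_v}$, and the remaining factors assemble into $\fsep^{(\pi\setminus v)}$, which is free of $\Phi^{\Lag_v}$. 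Hence, in the $M(\cdot)$ representation of (\ref{eq: complex operator}), $\fsep^{(\pi)} = M(\fsep^{(\pi\setminus v)})\,\fsep_{\pi(v),v}$ and
\begin{align*}
    \frac{\partial \fsep^{(\pi)}}{\partial \Phi^{\Lag_v}} &= M(\fsep^{(\pi\setminus v)})\,\frac{\partial \fsep_{\pi(v),v}}{\partial \Phi^{\Lag_v}} \quad\text{for } v \text{ a non-starting vertex of }\pi, & \frac{\partial \fsep^{(\pi)}}{\partial \Phi^{\Lag_v}} &= 0 \text{ otherwise,}
\end{align*}
the ``otherwise'' case using that a cycle-free path has no self-loop, so its starting vertex never occurs as a head.

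The last step assembles everything. By the delta method applied to $\Phi^{\Lag} \mapsto (\fsep^{(\pi)}(z), \fsep^{(\rho)}(z)) \in \R^4$ together with the block-diagonality of $\Sigma$,
\[
    \acov^{\Lag}(\pi,\rho) = \sum_{v \in V} \frac{\partial \fsep^{(\pi)}}{\partial \Phi^{\Lag_v}}\, P^{\Lag_v}\, \Big(\frac{\partial \fsep^{(\rho)}}{\partial \Phi^{\Lag_v}}\Big)^{\!\top}.
\]
A summand is nonzero only when $v$ is a non-starting vertex of both $\pi$ and $\rho$, and by the second step these are exactly the $v \in V(\pi \cap \rho)\setminus\{v,v'\}$. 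For such $v$, substituting the block Jacobians and using $M(\fsep^{(\rho\setminus v)})^\top = M((\fsep^{(\rho\setminus v)})^\ast)$ gives
\[
    \frac{\partial \fsep^{(\pi)}}{\partial \Phi^{\Lag_v}}\, P^{\Lag_v}\, \Big(\frac{\partial \fsep^{(\rho)}}{\partial \Phi^{\Lag_v}}\Big)^{\!\top} = M(\fsep^{(\pi\setminus v)})\,\Big[\,\frac{\partial \fsep_{\pi(v),v}}{\partial \Phi^{\Lag_v}}\; P^{\Lag_v}\; \Big(\frac{\partial \fsep_{\rho(v),v}}{\partial \Phi^{\Lag_v}}\Big)^{\!\top}\Big]\, M((\fsep^{(\rho\setminus v)})^\ast),
\]
and the bracketed matrix is, by the delta method applied within the single equation for $v$ --- which is precisely how the blocks of Proposition~\ref{prop: asymptotic normality link functions} are defined --- the $2\times 2$ block $\acov^{\Lag_v}(\pi(v),\rho(v);z)$ (note $\pi(v),\rho(v) \in \pa(v)$). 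Written in the product shorthand of the proposition, this is the claimed identity. The only real obstacle is the first step, the block-diagonality of $\Sigma$; once that is in place, the combinatorial bookkeeping of which parameter block a given path function actually depends on --- where the cycle-freeness of $\pi$ and $\rho$ is essential --- together with the routine delta-method computation, completes the argument.
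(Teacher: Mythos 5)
Your proposal is correct and follows essentially the same route as the paper: an OLS linearization giving joint asymptotic normality with block-diagonal covariance across the equations (the paper's Lemmas \ref{lemma: asymptotic difference } and \ref{lemma: block diagonal structure of }), followed by the delta method and the observation that each link function depends only on the parameter block of its head vertex, so cycle-free paths contribute exactly one factor per shared non-starting vertex. Your Isserlis/acyclicity argument for the vanishing cross-block covariances is a slightly more careful justification of the same fact the paper establishes via independence and $\E[\eta_j(s)]=0$ in Lemma \ref{lemma: block diagonal structure of }.
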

In particular, the asymptotic covariance of two path functions is zero if the paths do not intersect. 
\begin{figure}
    \centering
    \includegraphics[width=0.5\textwidth]{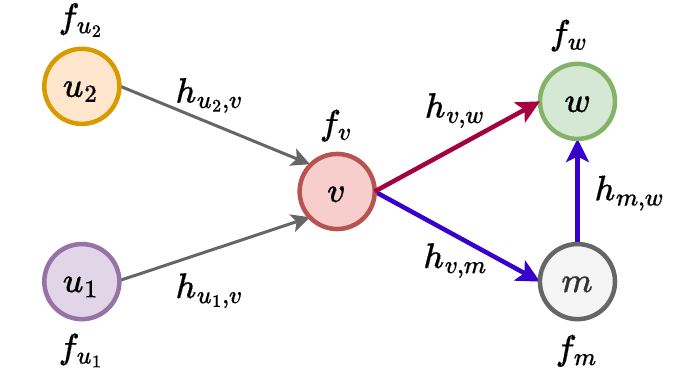}
    \caption{A process graph representing causal interactions among the processes $V= \{u_1, u_2, v, m, w\}$. Each vertex is annotated by the function generating the internal dynamics, and each edge is annotated by the associated link function.}
    \label{fig:process graph}
\end{figure}
Before moving to the next subsection, we introduce two more notations. Suppose $\Pi_1$ and $\Pi_2$ are sets of directed paths, then we define the asymptotic block covariance matrix for the path function estimators as 
\begin{align*}
    \acov^\Lag(\Pi_1, \Pi_2) \coloneqq \begin{bmatrix}
        \acov^\Lag(\pi_1, \pi_2)
    \end{bmatrix}_{\pi_1 \in \Pi_1, \pi_2 \in \Pi_2}.
\end{align*}
The asymptotic covariance of the total effect estimators $\hat{\fsep}^{\Pi_1}$ and $\hat{\fsep}^{\Pi_2}$ is  
\begin{align}
    \acov^\Lag(\hat{\fsep}^{\Pi_1}, \hat{\fsep}^{\Pi_2})\coloneqq \sum_{\pi_1 \in \Pi_1} \sum_{\pi_2 \in \Pi_2} \acov^\Lag(\pi_1, \pi_2).
\end{align}
\begin{example}
	For the process graph depicted in Figure \ref{fig:process graph} we illustrate the formula with which to compute the asymptotic covariance of the estimators for the causal effect of the process $v$ on the process $w$. There are two directed paths along which the process $v$ could possibly impact the process $w$. So the asymptotic covariance $\acov^{\Lag}(\Pi, \Pi)$ is a $2\times 2$ block matrix and each pair of paths from $v$ to $w$ indexes a block entry. Using Proposition \ref{prop: asymptotic covariance path function} we obtain the following expressions for the block entries 
\begin{equation} \label{eq: example acov path functions}
    \begin{split}
    \acov^\Lag(\pi_{v,w}, \pi_{v,w}) &= \acov^\Lag_w(v,v) \\
    \acov^\Lag(\pi_{v,w}, \pi) &=\acov^\Lag_w(v,m)(\fsep_{v,m})^\ast \\
    \acov^\Lag(\pi, \pi) &= \fsep_{v,m}\acov^\Lag_w(m,m)(\fsep_{v,m})^\ast + \fsep_{m,w}\acov^\Lag_m(v,v)(\fsep_{m,w})^\ast
    \end{split}
\end{equation}
\end{example}
\subsection{The asymptotic covariance of the spectral contribution estimator}
Path and total effect functions provide a frequency domain description of how one process $w \in V$ responds to variations in another process $v \in V$ through a set of directed paths $\Pi \subset \paths(v,w)$. To get a better understanding of what these functions quantify let us consider the \emph{spectral density} of the SVAR process $\X$ and its internal dynamics $\X^\internal$, which we denote by $\sd$ and $\sd^\internal$ respectively. The spectral density (of the internal dynamics) is the Fourier transform of the \emph{autocovariance (ACS)} $\Sigma = (\Sigma(k))_{k \in \Z}$ and $\Sigma^\internal$ respectively, which are defined as follows
\begin{align*}
    \Sigma(k) &\coloneqq \E[\X(t) \X(t-k)^\top] & \Sigma^\internal(k) &\coloneqq \E[\X^\internal(t) \X^\internal(t-k)].
\end{align*}
The spectral density is a statistical measure of the dynamics of the SVAR process $\X$ derived from the \emph{spectral representation} \cite{brockwell2009time} of $\X$. The spectral representation of $\X$ is a superposition of oscillations, i.e. infinitely recurring temporal patterns, and each oscillation is characterised by its frequency, which encodes the time it takes the oscillation to complete a cycle. The complex-valued amplitude of each oscillation is a random variable, and the amplitudes of two different oscillations are independent of each other. The spectral density $\sd_{w}(z)$ measures the variance of the amplitude associated with the oscillation at frequency $z$. The spectral density thus provides information on how the variation of the component process $\X_w$ is distributed across time scales. The fraction of $\sd_w(z)$ determined by the process $\X_v$ along the set of paths $\Pi$ is quantified by the path or total effect functions. Specifically, this fraction is the spectral density of the process $\X^\Pi$ as defined in (\ref{eq: path wise contribution}). We call this fraction the \emph{spectral contribution} of $v$ to $w$ along $\Pi$ and denote it by $\sd^\Pi$.
The spectral density $\sd^\Pi$ admits a rational expression in terms of the SVAR parameters $\Phi^\Lag$. To see this, we observe that the definition of $\X^\Pi$ implies that $\sd^\Pi = |\fsep^\Pi|^2\sd_v$. Since $|\fsep^\Pi|$ is a rational function in $\Phi^\Lag$, it remains to find a rational parameterisation of $\sd_v$. To do this, we recall that the set of ancestors of $v$, written $\anc(v)\subset V$, consists of those vertices $u \in V$ that satisfy $P(u,v)\neq \emptyset$, and due to our convention regarding empty paths, we consider each vertex $v$ as an ancestor of itself, i.e. $v\in \anc(v)$. Then, using the frequency domain trek rule \citep{reiter2023formalising}, we get that
\begin{align*}
	\sd_v &= \sum_{u \in \anc(v)} |\fsep^{\paths(u,v)}|^2 \sd_u^\internal & \sd^\internal_u(z) &= \omega_u|\intsep_{u}(z)|^2  \\
	&= \sum_{u \in \anc(v)} |\fsep^{\paths(u,v)} \fsep_{u,u}|^2 & &= \omega_u |1- \varphi_{u,u}(z)|^{-2}
\end{align*}
This shows that $\sd^\Pi$ is a rational function of the SVAR parameter. 
If $\pi\in \paths(u,v)$ is a path, then we refer to the function $\csep^{(\pi)} \coloneqq \fsep^{(\pi)}\intsep_{v}$ as the \emph{weighted path function} of $\pi$, and for $\Pi \subset \paths(v,w)$ we denote by $\csep^{\Pi}$ the sum of all weighted path functions of the paths in $\Pi$. In particular, the weighted path function of the empty path $\epsilon_v$ at $v$ is the internal function of $v$, i.e. $\csep^{(\epsilon_v)} = \intsep_v$. We conclude that $\sd^\Pi$, i.e. the spectral density of $\X^\Pi$, may be written in terms of weighted path functions
\begin{align}\label{eq: spectral contribution}
    \sd^{\Pi} \coloneqq |\fsep^\Pi|^2 \sd_v 
    = \sum_{u \in \anc(v)} \omega_u |\csep^{\Pi_u}|^2 ,
\end{align}
where $\Pi_u = \paths(u,v) + \Pi= \{ \rho_u + \pi : \rho_u \in \paths(u,v), \pi \in \Pi \}$ is the set of all possible concatenations. The function $|\csep^{\Pi_u}|^2$ quantifies the \emph{spectral contribution} of $u$ via $v$ on $w$ along $\Pi$, up to the amplitude $\omega_u$. We now wish to determine the asymptotic uncertainty of the estimator $\hat{\sd}^\Pi$, which is $\hat{\Phi}^\Lag$ plugged into (\ref{eq: spectral contribution}). Since we expressed this contribution in terms of weighted path functions, we begin with an expression for the asymptotic covariance between any two weighted path function estimators. 
\begin{proposition} \label{prop: asymptotic covariance weighted path function}
    Suppose $\pi$ and $\rho$ are two directed paths without cycles on the process graph $G$, where $\pi$ is going from $v$ to $w$ and $\rho$ from $v'$ to $w'$. The asymptotic covariance between the estimator $\hat{\csep}^{(\pi)}$ and $\hat{\csep}^{(\rho)}$ is 
    \begin{align*}
        \acov^\Lag(\pi^\internal, \rho^\internal) &= \begin{cases}
        \intsep_{v}[\acov^\Lag(\pi, \rho)](\intsep_{v'})^\ast & \text{, if } v \neq v' \\
        \intsep_{v}[\acov^\Lag(\pi, \rho)](\intsep_{v})^\ast + \fsep^{(\pi)} [\acov^{\Lag_v}(v,v) ]\left(\fsep^{(\rho)}\right)^\ast & \text{, if } v = v'
        \end{cases}
    \end{align*}
\end{proposition}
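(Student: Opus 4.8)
The plan is to obtain $\acov^\Lag(\pi^\internal,\rho^\internal)$ from the multivariate delta method applied to the joint OLS estimator $\hat\Phi^\Lag = (\hat\Phi^{\Lag_v})_{v\in V}$, exploiting two structural facts. First, the asymptotic covariance matrix of $\sqrt{T}(\hat\Phi^\Lag - \Phi^\Lag)$ is block diagonal with respect to the partition of the coordinates into the vertex blocks $\Phi^{\Lag_v}$: the residual of the $v$-th equation-by-equation OLS regression is the innovation $\eta_v$, the innovations $\{\eta_v\}_{v\in V}$ are mutually independent white noise, so the cross blocks of the asymptotic covariance vanish and the $v$-th diagonal block equals $P^{\Lag_v}$ from \eqref{eq: asymptotic normality phi}; this is the same mechanism that underlies Proposition \ref{prop: asymptotic covariance path function}. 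Second, in the product $\csep^{(\pi)} = \fsep^{(\pi)}\intsep_v$, with $v$ the starting vertex of $\pi$, every link function $\fsep_{x,y}$ occurring in $\fsep^{(\pi)}$ depends only on the block $\Phi^{\Lag_y}$ (through $\varphi_{x,y}$ and $\varphi_{y,y}$), and since $\pi$ is cycle-free the heads $y$ of its edges range exactly over $V(\pi)\setminus\{v\}$; whereas the internal function $\intsep_v = (1-\varphi_{v,v})^{-1}$ depends only on the self-loop coefficients $\phi_{v,v}(\cdot)$, which sit inside the block $\Phi^{\Lag_v}$.

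With these two facts at hand I would proceed as follows. (i) By block diagonality the asymptotic covariance $\acov^\Lag(\pi^\internal,\rho^\internal)$ is the sum over vertices $y\in V$ of $(\partial_{\Phi^{\Lag_y}}\csep^{(\pi)})\,P^{\Lag_y}\,(\partial_{\Phi^{\Lag_y}}\csep^{(\rho)})^\ast$, and only vertices $y$ lying on both $\pi$ and $\rho$ contribute. (ii) Differentiate $\csep^{(\pi)} = \fsep^{(\pi)}\intsep_v$ by the product rule. For a shared vertex $y\notin\{v,v'\}$ the factors $\intsep_v,\intsep_{v'}$ are constant in $\Phi^{\Lag_y}$, so $\partial_{\Phi^{\Lag_y}}\csep^{(\pi)} = \intsep_v\,\partial_{\Phi^{\Lag_y}}\fsep^{(\pi)}$; pulling the scalars $\intsep_v$ and $(\intsep_{v'})^\ast$ out of the $y$-summand and identifying the remaining sum (over such $y$) with the right-hand side of Proposition \ref{prop: asymptotic covariance path function} gives the term $\intsep_v[\acov^\Lag(\pi,\rho)](\intsep_{v'})^\ast$. (iii) Treat the starting vertices. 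If $v = v'$ the block $\Phi^{\Lag_v}$ is shared, and since no edge of $\pi$ has head $v$ we have $\partial_{\Phi^{\Lag_v}}\fsep^{(\pi)} = 0$, hence $\partial_{\Phi^{\Lag_v}}\csep^{(\pi)} = \fsep^{(\pi)}\,\partial_{\Phi^{\Lag_v}}\intsep_v$ and the $v$-summand equals $\fsep^{(\pi)}\,(\partial_{\Phi^{\Lag_v}}\intsep_v)\,P^{\Lag_v}\,(\partial_{\Phi^{\Lag_v}}\intsep_v)^\ast\,(\fsep^{(\rho)})^\ast$; by the delta method its inner factor is the asymptotic variance of $\hat\intsep_v$, i.e. the diagonal block $\acov^{\Lag_v}(v,v)$ appearing in Proposition \ref{prop: asymptotic normality link functions} — the internal function being precisely the $v$-indexed coordinate of $\Fsep_{\bullet,v}$ — which yields the additional term $\fsep^{(\pi)}[\acov^{\Lag_v}(v,v)](\fsep^{(\rho)})^\ast$. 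If $v\neq v'$ the internal factor of $\rho$ lives in the block $\Phi^{\Lag_{v'}}$, disjoint from the one carrying $\intsep_v$, and one keeps only the term from (ii).

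I expect the main obstacle to be the careful bookkeeping of the product-rule cross-terms in step (iii): one has to pin down exactly which vertex enters through which factor and verify that, beyond $\intsep_v[\acov^\Lag(\pi,\rho)](\intsep_{v'})^\ast$, the only surviving contribution is the diagonal one that appears when $v = v'$ — in particular one must control the edge case where the starting vertex of one path also lies on the other path. As in Propositions \ref{prop: asymptotic normality link functions} and \ref{prop: asymptotic covariance path function}, one also has to confirm that for generic SVAR coefficients $\Phi$ and all but finitely many $z\in S^1$ the rational functions $\csep^{(\pi)}$ are differentiable with the required non-degeneracy, so that the delta method is applicable. The remaining work — the product rule for rational functions and rewriting the conjugations via $M(z^\ast) = M(z)^\top$ and $M(zz') = M(z)M(z')$ from \eqref{eq: complex operator} — is routine.
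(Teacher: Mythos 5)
Your proposal is correct and takes essentially the same route as the paper's proof: the delta method applied blockwise using the block-diagonal asymptotic covariance of $(\hat{\Phi}^{\Lag_v})_{v \in V}$ (the paper's auxiliary lemma), then the product rule exploiting that each link function $\fsep_{x,y}$ depends only on the block $\Phi^{\Lag_y}$ while $\intsep_v$ depends only on $\Phi^{\Lag_v}$, so that only shared vertices contribute and the extra $\fsep^{(\pi)}[\acov^{\Lag_v}(v,v)](\fsep^{(\rho)})^\ast$ term appears exactly when the start vertices coincide; the paper merely carries out the same bookkeeping explicitly via the $2\times 2$ matrix representation of complex multiplication. The edge case you flag (the start vertex of one path lying in the interior of the other) is precisely where the paper's general formula in the supplement picks up additional cross terms via the convention $\fsep_{x_0,x_1}\coloneqq\intsep_{x_1}$, so your caution there is well placed.
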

To quantify the asymptotic uncertainty of the spectral contribution estimator, we need to determine for any two $u,u'\in \anc(v)$ the asymptotic covariance $\acov^\Lag(\hat{\csep}^{\Pi_u}, \hat{\csep}^{\Pi_{u'}})$, which we express in the following. 
\begin{proposition}\label{prop: asymptotic covariance spectral contribution}
    Let $v\in V$ be such that for every $u \in \anc(v)$ the set of paths $\Pi_u$ is finite. Then for $u,u'\in \anc(v)$ it holds that   
    \begin{align*}
        \acov^\Lag(\hat{\csep}^{\Pi_u}, \hat{\csep}^{\Pi_{u'}}) &= \fsep^\Pi [\acov^\Lag(\hat{\csep}^{\paths(u,v)}, \hat{\csep}^{\paths(u',v)})] (\fsep^\Pi)^\ast +\csep^{\paths(u,v)} [\acov(\hat{\fsep}^\Pi)] (\csep^{\paths(u',v)})^\ast
    \end{align*}
\end{proposition}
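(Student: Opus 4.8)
The plan is to present $\csep^{\Pi_u}$ as a product of an ``upstream'' factor and a ``downstream'' factor and then push the delta method through the product. The algebraic backbone is the identity
\[
\csep^{\Pi_u}\;=\;\csep^{\paths(u,v)}\cdot\fsep^{\Pi},
\]
holding for every $u\in\anc(v)$ as an identity of rational functions in $\Phi^\Lag$. It follows by unfolding definitions: every path in $\Pi_u=\paths(u,v)+\Pi$ is a concatenation $\rho+\pi$ with $\rho\in\paths(u,v)$, $\pi\in\Pi$; by (\ref{eq: path function}) the path function is multiplicative under concatenation, $\fsep^{(\rho+\pi)}=\fsep^{(\rho)}\fsep^{(\pi)}$; and the internal weight attached to $\rho+\pi$ is that of its common source $u$, so the double sum over $\Pi_u$ factorises as $\big(\sum_{\rho}\fsep^{(\rho)}\intsep_u\big)\big(\sum_{\pi}\fsep^{(\pi)}\big)=\csep^{\paths(u,v)}\,\fsep^{\Pi}$. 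I would record this first, noting that $\csep^{\paths(u,v)}$ carries the internal function of the \emph{source} $u$, which is what makes the factorisation hold on the nose.

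Next I would differentiate. The maps $\Phi^\Lag\mapsto\csep^{\paths(u,v)}$ and $\Phi^\Lag\mapsto\fsep^{\Pi}$ are rational and, by stability, well defined and $C^1$ near the truth, and $\hat{\Phi}^\Lag$ is asymptotically normal, so the delta method \citep{Vaart_1998} applied to the bilinear complex multiplication $(a,b)\mapsto ab$ --- realised through the operators $M(\cdot)$ of (\ref{eq: complex operator}) --- gives
\[
\sqrt{T}\big(\hat{\csep}^{\Pi_u}-\csep^{\Pi_u}\big)=M(\fsep^{\Pi})\sqrt{T}\big(\hat{\csep}^{\paths(u,v)}-\csep^{\paths(u,v)}\big)+M(\csep^{\paths(u,v)})\sqrt{T}\big(\hat{\fsep}^{\Pi}-\fsep^{\Pi}\big)+o_P(1).
\]
Forming the asymptotic covariance of this linear combination with its counterpart for $u'$ (the $\hat{\fsep}^{\Pi}$-term being literally common to $u$ and $u'$) and using $M(z)^\top=M(z^\ast)$ expands $\acov^\Lag(\hat{\csep}^{\Pi_u},\hat{\csep}^{\Pi_{u'}})$ into four blocks: two of them are exactly the summands $\fsep^{\Pi}[\acov^\Lag(\hat{\csep}^{\paths(u,v)},\hat{\csep}^{\paths(u',v)})](\fsep^{\Pi})^\ast$ and $\csep^{\paths(u,v)}[\acov(\hat{\fsep}^{\Pi})](\csep^{\paths(u',v)})^\ast$ of the claimed formula, and the remaining two are $M(\fsep^{\Pi})[\acov^\Lag(\hat{\csep}^{\paths(u,v)},\hat{\fsep}^{\Pi})](\csep^{\paths(u',v)})^\ast$ and its companion with $u\leftrightarrow u'$.

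The heart of the argument --- and the step I expect to be the main obstacle --- is showing that those two cross blocks vanish, i.e. $\acov^\Lag(\hat{\csep}^{\paths(u,v)},\hat{\fsep}^{\Pi})=0$: the upstream bundle $\paths(u,v)$ and the downstream bundle $\Pi$ yield asymptotically uncorrelated estimators. I would prove this by tracking parameter dependence. As a rational function, $\csep^{\paths(u,v)}$ involves only the coefficient blocks $\Phi^{\Lag_y}$ for $y$ a vertex visited by some path in $\paths(u,v)$ (the source $u$ via $\intsep_u$, each non-source vertex $y$ via the link functions $\fsep_{x,y}$), whereas $\fsep^{\Pi}$ involves only $\Phi^{\Lag_y}$ for $y$ a non-source vertex of a path in $\Pi$. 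Since all concatenations $\rho+\pi$ being summed are acyclic (which the finiteness hypothesis, together with the standing restriction in Propositions \ref{prop: asymptotic covariance path function} and \ref{prop: asymptotic covariance weighted path function} to cycle-free paths, encodes), no $\paths(u,v)$-path and $\Pi$-path can share a vertex other than $v$; and since $v$ is the source of every path in $\Pi$, the two collections of coefficient blocks are disjoint. Because the OLS blocks $\hat{\Phi}^{\Lag_y}$ for distinct targets $y$ enter the joint asymptotic law block-diagonally --- in (\ref{eq: asymptotic normality phi}), $P^{\Lag_y}$ couples only coefficients with the same target --- the gradients of $\csep^{\paths(u,v)}$ and $\fsep^{\Pi}$ are orthogonal in the asymptotic-covariance inner product, so the cross block is $0$. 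The same conclusion also falls straight out of Propositions \ref{prop: asymptotic covariance path function} and \ref{prop: asymptotic covariance weighted path function}: $\acov^\Lag(\hat{\csep}^{(\rho)},\hat{\fsep}^{(\pi)})$ is a sum over the common non-source vertices of $\rho$ and $\pi$, and acyclicity of $\rho+\pi$ leaves none.

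Finally I would collect the two surviving blocks, using bilinearity of $\acov^\Lag$ and the definitions of $\acov^\Lag(\hat{\csep}^{\paths(u,v)},\hat{\csep}^{\paths(u',v)})$ and $\acov(\hat{\fsep}^{\Pi})$ as path-function block sums --- all finite by hypothesis --- to obtain the stated identity. The one bookkeeping point to spell out is that, these being complex quantities, $(\fsep^{\Pi})^\ast$ and $(\csep^{\paths(u',v)})^\ast$ denote the transpose operators $M(\fsep^{\Pi})^\top$ and $M(\csep^{\paths(u',v)})^\top$ arising when one passes from the linear expansion to covariances, consistent with the convention below (\ref{eq: complex operator}).
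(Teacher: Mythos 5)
Your proposal is correct and takes essentially the same route as the paper: the paper proves your four-block expansion path-by-path, via a lemma for a single pair of concatenations $\pi_1+\pi_2$, $\rho_1+\rho_2$ obtained by combining Propositions \ref{prop: asymptotic covariance path function} and \ref{prop: asymptotic covariance weighted path function}, in which the cross blocks are absent for exactly the reason you give (upstream and downstream segments share no parameter blocks other than through $v$, whose block enters only upstream, and the per-vertex OLS blocks are asymptotically independent). Summing that lemma over $\paths(u,v)\times\Pi$ is just the per-path form of your aggregate factorisation $\csep^{\Pi_u}=\csep^{\paths(u,v)}\,\fsep^{\Pi}$ followed by the delta-method product rule.
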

We give the proof in the supplementary material. 
In the next subsection, we will use Proposition \ref{prop: asymptotic covariance spectral contribution} to derive an asymptotic confidence region for the estimator $\hat{\sd}^{\Pi}$.
\begin{example}
	For the example process graph shown in Figure \ref{fig:process graph} we use Proposition \ref{prop: asymptotic covariance weighted path function} and \ref{prop: asymptotic covariance spectral contribution} to spell out the asymptotic covariance for the estimation of the spectral contribution of $v$ to  the process $w$, which we identified as $|(\fsep_{v,w} + \fsep^{(\pi)})|^2 \sd_v$. First, we need to derive the expressions for the asymptotic covariance of the spectral density of the process $v$, which is computed as follows 
\begin{align*}
    \sd_v 
    &= |\csep^{(\pi_{u_1, v})}|^2 + |\csep^{(\pi_{u_2, v})}|^2 + |\csep^{(\epsilon_{v})}|^2 .
\end{align*}
According to Proposition \ref{prop: asymptotic covariance spectral contribution} we need to compute the asymptotic block covariance of the multivariate estimator of weighted path functions $(\hat{\csep}^{(\pi_{u_1, v})}, \hat{\csep}^{(\pi_{u_2, v})}, \hat{\csep}^{(\epsilon_{v})})$. Using the notation from the previous section and Proposition \ref{prop: asymptotic covariance weighted path function} we get that the block entries on the diagonal are as follows 
\begin{equation} \label{eq: example diagonal acov weigthed path functions}
    \begin{split}
    \acov^\Lag(\pi_{u_1, v}^\internal, \pi_{u_1,v}^\internal) &= \intsep_{u_1}\acov^{\Lag_v}(u_1, u_1)\intsep_{u_1}^\ast \\
    \acov^\Lag(\pi_{u_2, v}^\internal, \pi_{u_2,v}^\internal) &= \intsep_{u_2}\acov^{\Lag_v}(u_2, u_2)\intsep_{u_2}^\ast \\
    \acov^\Lag(\epsilon_{v}^\internal, \epsilon_{v}^\internal) &= \acov^{\Lag_v}(v, v)
    \end{split}
\end{equation}
The expressions for the off diagonal entries are computed analogously (see the supplement \ref{sec: supplement numerical example}). The estimator for the spectral contribution of $u_1$ along $\Pi$ to $w$ is $\hat{\csep}^{\Pi_{u_1}} = \hat{\csep}^{(\pi_{u_1, v})} \hat{\fsep}^{\Pi}$. Proposition \ref{prop: asymptotic covariance spectral contribution} identifies its asymptotic covariance as 
\begin{align*}
    \acov^{\Lag}(\hat{\csep}^{\Pi_{u_1}}, \hat{\csep}^{\Pi_{u_1}}) &= (\fsep^{\Pi}\intsep_{u_1})\acov^{\Lag_v}(u_1, u_1)(\fsep^{\Pi}\intsep_{u_1})^\ast + (\fsep_{u_1,v} \intsep_{u_1}) \acov^\Lag(\hat{\fsep^{\Pi}}, \hat{\fsep^{\Pi}}) (\fsep_{u_1,v} \intsep_{u_1})^\ast
\end{align*}
The remaining blocks in the asymptotic covariance of the estimator for the spectral contribution are shown in the supplement \ref{sec: supplement numerical example}. 
\end{example}

\subsection{Asymptotic significance tests and confidence regions}\label{subsec: asymptotic uncertainty and significance}
A process graph $G$ expresses a collection of hypotheses about the process $\X$. Namely, each path $\pi \in \paths(v,w)$ on the process graph implicitly hypothesises that the behaviour of process $v$ influences the behaviour of process $w$. Since these hypotheses can be quantified using (weighted) path functions, they can be tested on observational data. In this subsection, we construct asymptotic hypothesis tests to assess whether observational time series data of length $T+q$ are consistent with the hypothesis that process $v$ causally drives $w$ along a set of paths $\Pi$ at a given significance level $\alpha \in (0,1)$. We saw earlier that the estimators for (weighted) path functions asymptotically follow a normal distribution, and the expressions for their asymptotic covariance matrices admit explicit rational expressions in terms of $\Phi^\Lag$. 

Hypotheses about asymptotically normal estimators can be tested using what is known as the Wald test \cite{wald1943statistics} or $\chi_2$-test \cite{Vaart_1998}. Specifically, suppose we are interested in a vector valued quantity $\theta \in \R^{m}$ for which we have an asymptotically normal estimator $\hat{\theta}_n$, i.e. $\sqrt{n}(\hat{\theta}_n - \theta) \to_d \mathcal{N}(\mathbf{0}; \Sigma)$, where $n$ is the number of samples from which we construct the estimator, and $\Sigma$ the asymptotic covariance matrix. We now want to decide whether the true but unknown value $\theta$ is different from the zero vector at some significance level $\alpha$. To do this, we assume the opposite and take this as the null hypothesis, i.e. $H_0: \theta = \mathbf{0} \iff \lVert \theta \rVert = 0$. Consequently, the alternative hypothesis becomes $H_1: \theta \neq \mathbf{0} \iff \lVert \theta \rVert > 0$. Under the null hypothesis $H_0$ it holds that 
\begin{align}\label{eq: general wald statistic}
	W_n \coloneqq n \hat{\theta}_n^\top  \hat{\Sigma}^{-1} \hat{\theta}_n \to_d \chi^2(m).
\end{align}
This means that under $H_0$ and for large sample size $n$, the Wald statistic $W_n$ approximately follows the distribution of a $\chi^2(m)$ distributed random variable. The Wald test rejects $H_0$ if $\Prb(\chi^2(m) \geq W_n)) < 1 - \alpha$, which is equivalent to $W_n \geq x_{\alpha; m}$, where $x_{\alpha; m}$ is the $\alpha$-quantile of the $\chi^2(m)$ distribution. One can use convergence (\ref{eq: general wald statistic}) to derive an approximate confidence region for the estimator $\hat{\theta}_n$ and significance level $\alpha$. Specifically, for the elliptical set 
\begin{align}\label{eq: asymptotic confidence region}
    C_{\alpha} = \left\{ \mathbf{x}\in \R^m \mid n(\mathbf{x}-\theta)^\top\Sigma^{-1}(\mathbf{x}-\theta) \leq x_{\alpha; m} \right\}, 
\end{align}
it holds asymptotically that $\Prb(\hat{\theta}_n \in C_\alpha) = \alpha$. We obtain an estimate $\hat{C}_\alpha$ if we replace the asymptotic covariance $\Sigma$ by the sample covariance $\hat{\Sigma}$ and the true $\theta$ by its estimate $\hat{\theta}_n$. In particular cases the euclidean norm $\lVert \theta \rVert$ is actually the quantity of interest. From the definition of the confidence region $C_\alpha$, it is possible to directly derive an asymptotic confidence interval for the estimator $\lVert \hat{\theta}_n \rVert$, since asymptotically $\Prb(\lVert \hat{\theta}_n \rVert \in I_\alpha =[\max_{\mathbf{x} \in C_\alpha} \lVert \mathbf{x} \rVert, \min_{\mathbf{x} \in C_\alpha} \lVert \mathbf{x} \rVert]) \geq \alpha$. Using the estimated region $\hat{C}_\alpha$ we obtain an estimated confidence interval $\hat{I}_\alpha$. 

We are now applying the observations on Wald statistics to derive approximations to the uncertainty in the estimation of frequency domain causal quantities. In the following we denote by $\hat{\acov}^\Lag(\bullet, \bullet)$ the estimated asymptotic covariance obtained by substituting the OLS estimate $\hat{\Phi}^\Lag$ into the rational expression for $\acov^\Lag(\bullet, \bullet)$. Let us return to the problem of testing whether the process $v$ drives the process $w$ along a set of paths $\Pi$. If the asymptotic covairance $\acov^\Lag(\Pi, \Pi)$ is positive definite, then there are three approaches to this question, and each approach has a corresponding Wald test. First, we could test for each path $\pi\in \Pi$ whether the estimated path function $\fsep^{(\pi)}$ is significantly non-zero. By doing this, we simultaneously test whether all the link functions of the links on $\pi$ are significantly non-zero. Under the null hypothesis of no effect, we obtain the Wald statistic:
\begin{align}\label{eq: test statistic path function}
    T[\hat{\fsep}^{(\pi)}]^\top[\hat{\acov}^\Lag(\pi, \pi)]^{-1} [\hat{\fsep}^{(\pi)}] \to_d \chi^2(2)
\end{align}
Second, we could test whether there is a significant causal effect along at least one path $\pi \in \Pi$, which is equivalent to testing whether $\sum_{\in \Pi} |\hat{\fsep}^{(\pi)}(z)|$ is significantly non-zero. Under the null hypothesis of no effect along any path $\pi\in \Pi$ we get the statistic:
\begin{align}\label{eq: test statistic set of path functions}
    T[\hat{\fsep}^{(\pi)}]_{\pi \in \Pi}^\top[\hat{\acov}^\Lag(\Pi, \Pi)]^{-1} [\hat{\fsep}^{(\pi)}]_{\pi \in\Pi} \to_d \chi^2(2|\Pi|)
\end{align}
Finally, we could test whether the estimated total effect function $\hat{\fsep}^{\Pi} = \sum _{\pi \in \Pi} \hat{\fsep}^{(\pi)}$ is significantly different from zero. Under the null hypothesis we obtain the Wald statistic:
\begin{align}\label{eq: test statistic total effect}
    T[\hat{\fsep}^{\Pi}]^\top[\hat{\acov}^\Lag(\hat{\fsep}^{\Pi}, \hat{\fsep}^{\Pi})]^{-1} [\hat{\fsep}^{\Pi}] \to_d \chi^2(2)
\end{align}

The estimation of the function $\fsep^{\Pi}$ is subject to uncertainty, and this uncertainty can be approximated using the estimated asymptotic covariance matrix $\hat{\acov}^\Lag(\hat{\fsep}^\Pi, \hat{\fsep}^\Pi)$. For the significance level $\alpha$,
the construction (\ref{eq: asymptotic confidence region}) yields a region $\hat{C}_{\alpha}(z) \subset \R^2$ such that asymptotically $\Prb(\hat{\fsep}^{\Pi}(z) \in \hat{C}_\alpha(z)) = \alpha$. For interpretation and visualisation, however, it is more convenient to consider the absolute value $|\fsep^{\Pi}|$, that is, the strength with which the process $v$ influences the process $w$ along $\Pi$. By the construction presented at the beginning of this section we get an estimated confidence interval $\hat{I}_\alpha(z)$ for the estimated effect strength $|\hat{\fsep}^\Pi(z)|$. 

Analogously, one can approximate the uncertainty in the estimated spectral contribution of each ancestor $u\in \anc(v)$ through $v$ to $w$ along $\Pi$. Regarding the uncertainty in the estimate of the total contribution $\sd^\Pi$, we recall that it is equal to the sum over the squared absolute values of the individual contributions $\hat{\csep}^{\Pi_u}$. So the total contribution is non-zero if and only if $\csep^{\Pi_u}$ is significantly non-zero for at least one ancestor $u \in \anc(v)$. Thus, whether the process $v$ contributes significantly to the dynamics of $w$ can be decided with the following Wald statistic: 
\begin{align*}
    T[\hat{\csep}^{(\Pi_u)}]_{u \in \anc(v)}^\top [\hat{\acov}^\Lag(\hat{\csep}^{\Pi_u}, \hat{\csep}^{\Pi_{u'}})]_{u,u'\in \anc(v)}^{-1} [\hat{\csep}^{(\Pi_{u'})}]_{u' \in \anc(v)} \to_d \chi^2(2 |\anc(v)|)
\end{align*}
Finally, to quantify the uncertainty in the estimated total spectral contribution $\hat{\sd}^\Pi$ at a given significance level $\alpha$, we denote by $\hat{C}_\alpha(z) \subset \R^{2|\anc(v)|}$ the elliptical confidence region for the multivariate estimator $[\hat{\csep}^{\Pi_u}(z)]_{u \in \anc(v)}$. Since $\sd^{\Pi} = \lVert [\hat{\csep}^{\Pi_u}(z)]_{u \in \anc(v)} \rVert^2$ it follows that $\{ x^2 \mid x \in \hat{I}_\alpha(z)\}$ is an approximate confidence interval for the total contribution estimator.
\begin{remark}\label{rem: robust wald test}
    Path coefficients and spectral contributions admit a clear interpretation with respect to the stochastic process. However, they depend non-linearly on the coefficients of the SVAR parameter and possibly involve a large number of coefficients. These two factors may yield highly uncertain estimators. As a result one may need fairly long time series in order to statistically recognize a given effect by one of the tests presented in this subsection. If the objective is only to decide on the existence of a causal effect along a set of paths at a some frequency $z \in S^1$, then one could make use of the observation
    \begin{align*}
        \fsep_{v,w}(z) = \frac{\varphi_{v,w}(z)}{1-\varphi_{w,w}(z)} =0 \iff \varphi_{v,w}(z) = 0,
    \end{align*}
    where the right hand side in contrast to the left hand side depends linearly on the SVAR coefficients. In particular, $\fsep^{(\pi)} = 0 $ iff $ \varphi^{(\pi)} = 0$. The asymptotic covariance among the estimators $(\hat{\varphi}_{v,w}(z))_{v\in \pa(w)}$ can be obtained from Proposition \ref{prop: asymptotic covariance link functions} by setting the matrix $A(z)= \mathbb{I}_2$ and $A_{v}(z)= \mathbf{0}$ for $v \in \pa(w)$ as in Proposition \ref{prop: asymptotic covariance link functions}. The asymptotic covariance between estimators for (weighted) path functions can then be computed by the formulas as presented in Proposition \ref{prop: asymptotic covariance path function} and \ref{prop: asymptotic covariance weighted path function}, which in turn can then be used to formulate Wald type tests as in (\ref{eq: test statistic path function}) and (\ref{eq: test statistic set of path functions}).
\end{remark}
\begin{figure*}
    \centering
    \includegraphics[width=0.9\textwidth]{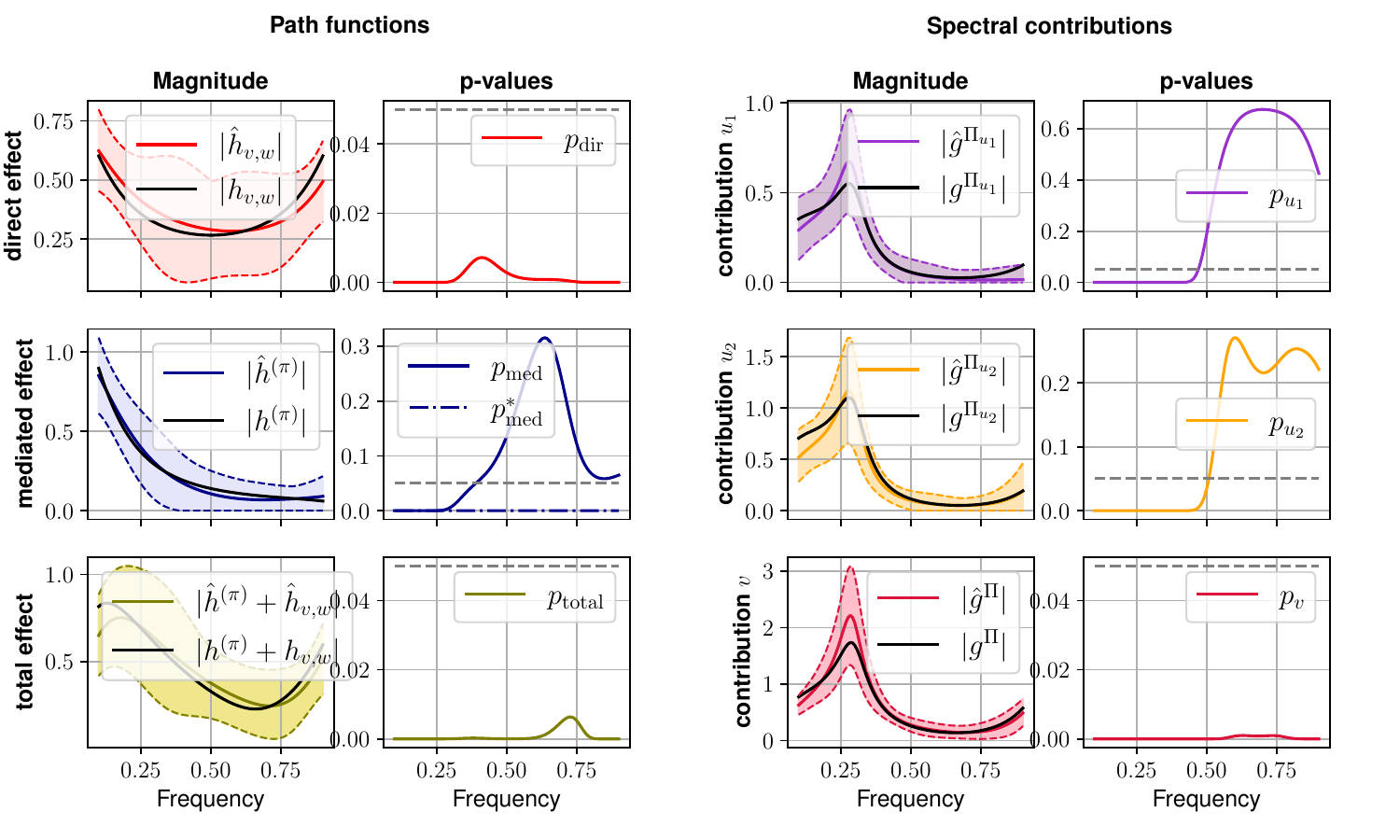}
    \caption{Statistical analysis of the causal structure of the process shown in Figure \ref{fig:process graph} based on a synthetically generated time series of length $500$. Analysis of the estimates for direct mediated and total effects of $v$ on $w$ are shown on the left. The analysis of the spectral contribution of $\anc(v)$ on $w$ via $\Pi = \paths(v,w)$ is displayed on the right.}
    \label{fig: example effect and contribution}
\end{figure*}
\begin{example}
To illustrate the frequency domain Wald tests and uncertainty approximations we generate a synthetic time series of length $500$ for the process graph $G$ displayed in Figure \ref{fig:process graph} and underlying time series graph $\G$ and SVAR coefficient vector $\Phi$ (see supplementary material). With this time series we analyse and test how the process $v$ is driving $w$. To test whether the data supports a significant causal effect of $v$ on $w$ we employ either of the tests (\ref{eq: test statistic path function})-(\ref{eq: test statistic total effect}). In the left part of Figure \ref{fig: example effect and contribution} we display the estimated magnitudes of the direct, mediated and total effect function together with the estimated $95\%$ confidence region and $p$-values. Additionally, we compute the true magnitudes of the respective effects and note that for each effect the true magnitudes lie within the estimated confidence region. While the direct and total effect are detected for all frequencies with $95\%$ confidence, the mediated effect does not get recognised for frequencies greater than approximately $0.4 \pi$. However, we detect the effect with the Wald test described in Remark \ref{rem: robust wald test}, as indicated by the p-values $p^\ast_\mathrm{med}$ (dash dotted line) in the second row and second column of the left part in Figure \ref{fig: example effect and contribution}. The right part of Figure \ref{fig: example effect and contribution} displays the estimated spectral contributions of each of the processes $u_1, u_2,v$ to the process $w$, together with the estimated $95\%$ confidence regions and the true contributions. Also here we observe that the true values lie within the estimated confidence region. In addition we display the p-values of the estimated contributions.
\end{example}

\section{Asymptotic efficiency}\label{sec: asymptotic efficiency}
Suppose we seek to estimate a given effect or spectral contribution from time series data. We saw that any such quantity is a rational function in the SVAR parameters. In the previous section we considered one type of estimator, i.e. estimating the SVAR coefficient and feeding it into the rational function. In general, however, there might be many more possible estimators with which the desired quantity can be estimated. The objective of this section is to identify the estimator considered in the previous section as the asymptotically optimal, i.e. the estimator with lowest asymptotic variance, with respect to a specific class of estimators. With this we generalise the main result in \citep{Guo2020EfficientLS} to causal effects in the frequency domain. For the rest of this section, we fix a time series DAG $\G$ having order $p$. We assume that the associated contemporaneous graph $\G_0$ and the possibly cyclic process graph $G$ are known. Once more, we encode the information on $\G$ by a collection of time lagged relations $\Lag = \{\Lag_v \}_{v\in V}$ such that (\ref{condition: time lags}) holds for every $v$. 

In what follows we assume that the processes $V = \{ v_1, \dots, v_m \}$ are topologically ordered with respect to the contemporaneous graph $\G_0$, that is $v < w$ only if $w \not\in \anc_0(v)$, where $\anc_0(v)$ are the ancestors of $v$ with respect to $\G_0$. 
In this section, we view any stable SVAR parameter vector as a tuple of matrices, that is $\Phi = [\Phi(k)]_{k =0}^q  \in \R^{m \times m} \otimes \R^q$, where $q \geq p$. So the entry $\Phi_{j,i}(k) = \phi_{v_i, v_j}(k)$ if $v_i \to_k v_j$ and $0$ otherwise. Due to the ordering of the process indices $V$ it follows that $\Phi(0) \in LT(m)$, which is the subspace of lower triangular $m \times m$-dimensional matrices. For the arguments to come it will be important to note that there is an explicit relation between the ACS $\Sigma$ and the parameter tuple $(\Phi, \Omega)$. This relation is based on the following SVAR recursion: 
\begin{align*}
    \X(t) &= \sum_{k= 0}^q \Phi(k)\X(t-k) + \eta(t) \\
    &=\sum_{k=1}^q \tilde{\Phi}(k)\X(t-k) + \tilde{\eta}(t) & \tilde{\Phi}(k) &= (\mathbb{I}- \Phi(0))^{-1} &\tilde{\eta}= (\mathbb{I}- \Phi(0))^{-1} \eta(t)
\end{align*}
This representation reveals the recursive structure (Yule-Walker equations \cite{brockwell2009time}) in the ACS sequence:
\begin{align} \label{eq: acs recursion}
    \Sigma(k) &= 
    \begin{cases}
        \tilde\Phi(1)\Sigma(-1) + \cdots + \tilde\Phi(p)\Sigma(-p) + \tilde\Omega &, k =0 \\
        \tilde\Phi(1)\Sigma(k-1) + \dots + \tilde\Phi(p)\Sigma(k-p) &, k \neq 0
    \end{cases}
\end{align}
Since $\Sigma(k) = \Sigma(-k)^\top$ it then follows that the ACS $\Sigma$ is uniquely determined by the first $p$ entries $\Sigma(0), \dots, \Sigma(q-1)$. Suppose $\Sigma = (\Sigma(k))_{k \in \Z}$ is a collection of $m \times m $-dimensional matrices and $I, J \subset \Z$ finite subsets, then we consider the following block matrix constructions: 
\begin{align} \label{eq: symmetric pos def}
    \Sigma(I) &= \begin{bmatrix}
        \Sigma(i)
    \end{bmatrix}_{i \in I} \in \R^{m \times m } \otimes \R^{|I|} & \Sigma(I, J)&=\begin{bmatrix}
        \Sigma(j-i)
    \end{bmatrix}_{i \in I, j \in J} \in \R^{m\times m } \otimes \R^{|I| \times |J|}
\end{align} 
Based on these matrix constructions we consider the following space of matrix valued tuples: 
\begin{align*}
    \mathcal{E}_q(m) = \{ (\Sigma(i))_{-q \leq i \leq q} \mid \Sigma(k) = \Sigma(-k)^\top, \Sigma([1,q], [1,q]) \in \mathrm{PD}(m(p-1)) \} 
\end{align*}
The following Proposition establishes a diffeomorphic relation between the parameter pair $(\Phi, \Omega)$ and the ACS $\Sigma$ of any stable SVAR process. 
\begin{proposition}\label{prop: diffeomorphism}
    There are open sets $U \subset LT(m) \times (\R^{m \times m} \otimes \R^q) \times \mathrm{diag}_{+}(m)$ such that every parameter pair $(\Phi, \Omega) \in U$ defines a stable process, and an open set $V \subset \mathcal{E}_q(m)$ such that there is a  diffeomorphism $U \cong V$. 
\end{proposition}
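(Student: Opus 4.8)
The plan is to construct the map explicitly from the parameter side, show it is smooth, and then invoke the inverse function theorem after checking that its differential is invertible on a suitable open set. Concretely, define $\Psi\colon U \to \mathcal{E}_q(m)$ by sending $(\Phi,\Omega)$ to the tuple $(\Sigma(i))_{-q\le i\le q}$, where $\Sigma$ is the ACS of the stable SVAR process determined by $(\Phi,\Omega)$. The first task is to argue that $\Psi$ is well-defined and smooth. Well-definedness: stability guarantees a unique stationary solution with an absolutely summable ACS, and the stability region is open in $LT(m)\times(\R^{m\times m}\otimes\R^q)\times\mathrm{diag}_+(m)$ (it is cut out by the condition that the companion-matrix eigenvalues lie strictly inside the unit disk, which is an open condition on the coefficients after passing to the reduced form $\tilde\Phi(k)=(\mathbb{I}-\Phi(0))^{-1}\Phi(k)$; note $\mathbb{I}-\Phi(0)$ is invertible since $\Phi(0)$ is strictly lower triangular, hence nilpotent). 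Smoothness: using the Yule–Walker recursion (\ref{eq: acs recursion}), the finitely many entries $\Sigma(0),\dots,\Sigma(q-1)$ are the unique solution of a linear system whose coefficients are polynomials in the entries of $\tilde\Phi(1),\dots,\tilde\Phi(p)$ and $\tilde\Omega=(\mathbb{I}-\Phi(0))^{-1}\Omega(\mathbb{I}-\Phi(0))^{-\top}$; by Cramer's rule the solution is a rational function of $(\Phi,\Omega)$ with nonvanishing denominator on the stability region, hence $C^\infty$, and the remaining entries $\Sigma(q),\dots,\Sigma(2q)$ are obtained by further applications of (\ref{eq: acs recursion}), which is again polynomial. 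That the image lands in $\mathcal{E}_q(m)$ is immediate: $\Sigma(k)=\Sigma(-k)^\top$ by stationarity, and $\Sigma([1,q],[1,q])$ is a covariance matrix of a non-degenerate Gaussian vector (non-degeneracy because $\Omega\in\mathrm{diag}_+(m)$), hence positive definite.

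Next I would construct a smooth left inverse, i.e. recover $(\Phi,\Omega)$ from $\Sigma$. This is exactly the content of the regression relation (\ref{eq: regression relation}) applied component-wise: ordering $V$ topologically with respect to $\G_0$, each row $\Phi_{j,\bullet}$ together with $\Phi_{j,j}$-style entries is the OLS coefficient of $\X_{v_j}(t)$ on the stacked regressor $\X^{\Lag_{v_j}}(t)$, and every entry of $\E[\X^{\Lag_{v_j}}(t)\X^{\Lag_{v_j}}(t)^\top]$ and $\E[\X^{\Lag_{v_j}}(t)\X_{v_j}(t)]$ is a specific block of $\Sigma$ among $\Sigma(-q),\dots,\Sigma(q)$. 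The Gram matrix is invertible on the region where $\Sigma([1,q],[1,q])$ (enlarged to include lag $0$ blocks appropriately) is positive definite — which holds on $\mathcal{E}_q(m)$ after possibly shrinking — so by Cramer's rule the recovered $\Phi$ is a rational, hence smooth, function of the $\Sigma(i)$; then $\Omega$ is recovered from the $k=0$ Yule–Walker equation, $\tilde\Omega=\Sigma(0)-\sum_{k=1}^p\tilde\Phi(k)\Sigma(-k)$ followed by conjugation by $\mathbb{I}-\Phi(0)$, which is again smooth. Call this map $\Theta$. Since $\Theta\circ\Psi=\mathrm{id}$ on $U$ by construction, $\Psi$ is injective and an immersion; to get a diffeomorphism onto an open subset of $\mathcal{E}_q(m)$ I would shrink: let $V\subset\mathcal{E}_q(m)$ be the open set on which $\Theta$ is defined and lands in $U$ and $\Psi\circ\Theta=\mathrm{id}$; shrink $U$ to $\Theta(V)$. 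On these shrunk sets $\Psi$ and $\Theta$ are mutually inverse smooth bijections, giving $U\cong V$.

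The main obstacle I anticipate is bookkeeping rather than conceptual: one must be careful that the ``regression'' inversion $\Theta$ genuinely uses only the finitely many ACS blocks that appear in $\mathcal{E}_q(m)$ and that the relevant Gram matrices are invertible there — in particular the positive-definiteness condition baked into $\mathcal{E}_q(m)$ (stated there as $\Sigma([1,q],[1,q])\in\mathrm{PD}(m(p-1))$, which looks like it should be $m\cdot q$ or $m(q+1)$ after including lag $0$; I would reconcile this dimension count and, if necessary, enlarge the positivity requirement in the definition of $\mathcal{E}_q(m)$ so that the Gram matrix of the full regressor stack is controlled). A secondary subtlety is that not every tuple in $\mathcal{E}_q(m)$ need be the ACS of a \emph{stable} SVAR process with the prescribed zero pattern dictated by $\Lag$; this is why the statement only claims a diffeomorphism onto an \emph{open} subset $V$, and the cleanest route is the one above — never try to characterize the image globally, just take $V=\Psi(U)$ after shrinking $U$ so that $\Theta$ is a genuine two-sided inverse. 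Once the two smooth maps and the identity $\Theta\circ\Psi=\mathrm{id}$ are in hand, openness of $V=\Psi(U)$ follows from invariance of domain (or directly from the inverse function theorem, since $d\Psi$ is injective between spaces of equal dimension once $U$ is cut down appropriately), completing the proof.
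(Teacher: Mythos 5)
Your proposal is correct and follows essentially the same route as the paper: both directions are constructed explicitly from the Yule--Walker recursion (parameters $\to$ ACS, with invertibility of the associated linear system secured on an open set) and from regression recursively along the topological order of $\G_0$ (ACS $\to$ parameters), and the two rational maps are then observed to be mutually inverse on suitably restricted open sets. Your added bookkeeping --- flagging the dimension count in the definition of $\mathcal{E}_q(m)$ and invoking the inverse function theorem/invariance of domain after shrinking --- only makes explicit what the paper's proof asserts directly.
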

\begin{proof}
    Let us begin with the map in the inverse direction. So we pick some $(\Sigma(i))_{-p \leq i \leq p} \in \mathcal{E}_p(m)$. Then we set
    \begin{align*}
        \tilde\Phi &= \Sigma([1,q], [1,q])^{-1} \Sigma([1,p]) & \tilde\Omega = \Sigma(0)- \sum_{j=1} \Sigma(j)\tilde\Phi(j),
    \end{align*} 
    which is the unique solution to the recursion in Equation (\ref{eq: acs recursion}). 
    Then we obtain $\Phi(0)$ and $\Omega$ from $\tilde\Omega$ by recursive regression along the topological order of $\G_0$, and for $k > 1$ we set $\Phi(k) = (\mathbb{I}- \Phi(0))^{-\top}\tilde\Phi(k)$.
    
    Now let us consider a collection of coefficients $\Phi \in  UT(m) \in \times (\R^{m \times m} \otimes \R^q)$ and positive diagonal matrix $\Omega \in \mathrm{diag}_{+}(m)$. 
    The ACS $\Sigma$ of the SVAR process specified by $(\Phi, \Omega)$ then satisfies the recursion $(\ref{eq: acs recursion})$. Consequently, we obtain for any two $v,w \in V$ and $-p+1 \leq j \leq p-1$ the relation
    \begin{align*}
        \sigma_{v,w}(j) &= \tilde\omega_{v,w}(j) + \sum_{x,y \in V} \sum_{k_x=1}^p \sum_{k_y=j+1}^p \tilde\phi_{x,v}(k_x) \sigma_{x,y}(k_x - k_y) \tilde\phi_{y,w}(k_y-j), \\
        \tilde\omega_{v,w}(j) &= \begin{cases}
            [\tilde\Omega]_{v,w} & \text{if } j = 0 \\
            0 & \text{if } j \neq 0 
        \end{cases}
    \end{align*}
    In addition, it must hold that $\sigma_{v,w}(j) = \sigma_{w,v}(-j)$. That means for every $\Phi$ we get a linear system of equations that can be represented by a quadratic $m(q-1) \times m (q-1)$-dimensional matrix. We observe that the determinant of this matrix is a non-zero polynomial in the parameters $\Phi$. This follows as the coefficient of the zero degree monomial is one. So we conclude that the set of parameters $\Phi$ for which the corresponding system of equations has a unique solution is open. For such parameters, the two maps constructed in this proof are rational functions that are inverse to each other. Furthermore, the set of SVAR parameters that defining a stable process contains a set that is open in $LT(m) \in \times (\R^{m \times m} \otimes \R^q)$, see \cite{reiter2023formalising}. This finishes the proof.  
\end{proof}

We now define the class of estimators among which we seek to identify the asymptotically optimal one. First, let $U\subset V$ be a set of processes, and $w\in V \setminus U$ a target process. For each $u \in U$ we furthermore pick a possibly infinite subset of paths $\Pi_u\subset \paths(u,w)$. Then, for some $z \in S^1$ we denote by $\tau(z)$ either the vector of path functions $[\fsep^{\Pi_u}(z)]_{u \in U}$ or of weighted path functions $[\csep^{\Pi_u}]_{u \in U}$.
We require the sets of paths $\Pi_u$ to be such that they correspond to a \emph{(controlled) causal effect} \cite{reiter2023formalising} of $U$ on $w$. This requirement ensures that $\tau$ is a rational function in the SVAR parameters.

Similarly to \citep{Guo2020EfficientLS} we introduce the space of all consistent estimators for $\tau(z)$ that can be written as differentiable functions of a finite collection of the (estimated) ACS entries of the SVAR process, i.e. for $q \geq p $ we consider
\begin{align*}
    \mathcal{T}^{(q)}(z) = \left\{ \hat{\tau}(z): \mathcal{E}_q(m) \to \R^2 \otimes \R^{|U|} \mid \text{$\hat{\tau}$ is differentiable and a consistent estimator of $\tau$} \right\}
\end{align*}
Since $\tau(z)$ is a rational function in the SVAR parameters it follows from Proposition \ref{prop: diffeomorphism} that $\mathcal{T}$ is non-empty. 
The set $\mathcal{T}^{(q)}(z)$ contains all possible estimators that are based on least square regression. In particular, this includes the estimator $\hat{\tau}^\Lag$ with which we refer to the estimator that evaluates the rational function defining $\tau$ on the OLS estimate $\hat{\Phi}^\Lag$. 
But this is not the only type of estimator for frequency domain causal quantities. The following example indicates a less obvious estimator for an indirect causal effect in the frequency domain. 
\begin{example}
    We consider an SVAR process with process graph $G = (V,D)$, where $V = [3]$ and $D=\{ 1 \to 2, 2 \to 3 \}$. The underlying time series graph $\G$ has order $1$. The lagged parent sets are $L_{i}^\G = \pa(i) \times [0,1] \cup \{i\} \times \{1\}$ for every $i \in V$. The total causal effect of $1$ on $3$, which is the the path function of $\pi = 1 \to 2 \to 3$, i.e. 
    \begin{align*}
        \fsep^{(\pi)}(z) &= \frac{p_0 + p_1 z  + p_2z^2}{1- q_1 z - q_2 z^2}, \\
        p_0 &= \phi_{1,2}(0)\phi_{2,3}(0) & p_1 &= \phi_{1,2}(0)\phi_{2,3}(1) + \phi_{1,2}(1)\phi_{2,3}(0) & p_2 &= \phi_{1,2}(1)\phi_{2,3}(1) \\
        & & q_1& = \phi_{2,2}(1) + \phi_{3,3}(1) & q_2 &=\phi_{2,2}(1)\phi_{3,3}(1)
    \end{align*}
    Then we denote by $\lambda^{(\pi)}$ the path filter of $\pi$, as defined and characterised as sums of path coefficients on the time series graph $\G$ in \citep{reiter2023formalising}. With this filter we can establish a polynomial equation system for the coefficients $p_k$ and $q_l$ as follows 
    \begin{align*}
        \lambda^{(\pi)}(0) &= p_0 \\
        \lambda^{(\pi)}(1) &= q_1 p_0 + p_1 \\
        \lambda^{(\pi)}(2) &= (q_1^2 - q_2) p_0 +  q_1 p_1 + p_2 \\
        \lambda^{(\pi)}(3) &= (q_1^3 - 2 q_2 q_1)p_0 + (q_1^2 - q_2) p_1 +  q_1 p_2 \\
        \lambda^{(\pi)}(4) &= (q_1^4 - 3q_2q_1^2 + q_2^2)p_0 + (q_1^3 - 2 q_2 q_1)p_1 + (q_1^2 - q_2) p_2
    \end{align*}
    The values of the path filter $\lambda^{(\pi)}$ can be obtained with OLS regression from the ACS information $\Sigma_{1,1}(i)$ for $i=[1,5]$ and $\Sigma_{3,1}(i)$ for $i\in [0,4]$. The triangular structure of the first three equations lets us rewrite the above system as a system of two equations for the parameters $q_1, q_2$. The solution to this system of equations is unique if numerator and denominator of $\fsep^{(\pi)}$ are coprime, which is the case for generic SVAR parameter choices \cite{reiter2024causal}. This yields a consistent estimator $\hat{\tau} = \hat{\fsep}^{(\pi)} \in \mathcal{T}_a^{(5)}$. 
\end{example}
We are now prepared to state the main result of this section, which can be seen as a frequency version of \citep{Guo2020EfficientLS}.
\begin{theorem}\label{thm: asymptotic efficiency}
    Let $G=(V,D)$ be a process graph with underlying time series graph $\G$ with order $p$, and let $\tau$ be a vector of sums of (weighted) path functions. If the OLS estimators $\hat{\tau}^{\Lag_\G}(z)$ is asymptotically normal, then for every $q \geq p$ and every asymptotically normal estimator $\hat{\tau}(z) \in \mathcal{T}^{(q)}(z)$ it holds that 
    \begin{align}\label{eq: absolute asymptotic efficiency}
          \acov(\hat{\tau}(z)) \succeq \acov(\hat{\tau}^{\Lag^\G}(z)),
    \end{align}
    where for two positive semi-definite matrices $A,B$ we write $A \succeq B$ if $A - B$ is positive semi-definite. Furthermore, suppose $\Lag'$ and $\Lag''$ are collections of time lagged relations such that both satisfy condition (\ref{condition: time lags}) and $\Lag'_v \subset \Lag''_v$ for every $v \in V$, then 
    \begin{align}\label{eq: relative asymptotic efficiency}
        \acov(\hat{\tau}^{\Lag''}(z)) \succeq \acov(\hat{\tau}^{\Lag'}(z)).
    \end{align}
\end{theorem}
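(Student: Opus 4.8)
The plan is to reduce both inequalities, via the delta method, to a linear-algebra statement about tangent spaces of the model sitting inside the space of autocovariance sequences. Fix $q \geq p$ and write $\hat{\Sigma}^{(q)}$ for the tuple of sample autocovariances at lags $|k| \le q$, regarded as a point of $\mathcal{E}_q(m)$ (an open subset of a Euclidean space of matrix tuples). For a stable Gaussian SVAR process $\hat{\Sigma}^{(q)}$ is $\sqrt{T}$-consistent and asymptotically normal, $\sqrt{T}(\hat{\Sigma}^{(q)} - \Sigma^{(q)}) \to_d \mathcal{N}(\mathbf{0}, W)$ with $W \succeq 0$ (Bartlett's formula, see \citep{brockwell2009time}). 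By definition every $\hat{\tau}(z) \in \mathcal{T}^{(q)}(z)$ equals $g(\hat{\Sigma}^{(q)})$ for some differentiable $g$, and the OLS plug-in $\hat{\tau}^{\Lag^\G}(z)$ equals $g^\G(\hat{\Sigma}^{(q)})$, where $g^\G$ first forms $\hat{\Phi}^{\Lag^\G}$ by the empirical version of the regression relation (\ref{eq: regression relation})---which involves only autocovariances at lags $|k| \le p \le q$---and then evaluates the rational formulas (\ref{eq: lag polynomial})--(\ref{eq: link function}); $g^\G$ is differentiable at $\Sigma^{(q)}$ for generic parameters. The delta method \citep{Vaart_1998} then yields $\acov(\hat{\tau}(z)) = (Dg)\,W\,(Dg)^\top$ and $\acov(\hat{\tau}^{\Lag^\G}(z)) = (Dg^\G)\,W\,(Dg^\G)^\top$, the Jacobians being evaluated at $\Sigma^{(q)}$.

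Next I would use consistency. Let $\mathcal{M} \subset \mathcal{E}_q(m)$ be the image, under the autocovariance map, of the stable SVAR parameters with lag structure $\Lag^\G$; by Proposition \ref{prop: diffeomorphism} (restricted to such parameters) $\mathcal{M}$ is a smooth embedded submanifold on which the autocovariance map is a diffeomorphism, with inverse $\beta$. Consistency forces every admissible $g$, in particular $g^\G$, to agree with $\tau \circ \beta$ on $\mathcal{M}$; differentiating along $\mathcal{M}$ shows that $Dg$ restricted to the tangent space $T\mathcal{M}$ is one and the same linear map for all of them. Writing $Dg = Dg^\G + R$, the remainder $R$ vanishes on $T\mathcal{M}$, so its rows lie in $(T\mathcal{M})^\perp$, and consequently $\acov(\hat{\tau}(z)) - \acov(\hat{\tau}^{\Lag^\G}(z))$ is the sum of the cross term $(Dg^\G)\,W\,R^\top + R\,W\,(Dg^\G)^\top$ and the term $R\,W\,R^\top$. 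Since $W \succeq 0$ the latter is positive semi-definite, so (\ref{eq: absolute asymptotic efficiency}) follows once the cross term vanishes for every admissible $R$; as the rows of admissible $R$ exhaust $(T\mathcal{M})^\perp$, this is equivalent to the assertion that every column of $W\,(Dg^\G)^\top$ lies in $T\mathcal{M}$.

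This tangency assertion is the heart of the argument. Now $W\,(Dg^\G)^\top$ is the asymptotic cross-covariance of $\hat{\Sigma}^{(q)}$ with $\hat{\tau}^{\Lag^\G}(z) = \tau(\hat{\Phi}^{\Lag^\G})$, hence equals $\acov(\hat{\Sigma}^{(q)}, \hat{\Phi}^{\Lag^\G})\,(D\tau)^\top$, so it suffices that every column of $\acov(\hat{\Sigma}^{(q)}, \hat{\Phi}^{\Lag^\G})$ is tangent to $\mathcal{M}$. The key fact is that $\hat{\Phi}^{\Lag^\G}$, together with the residual variances $\hat{\omega}_v$, is exactly the conditional Gaussian maximum-likelihood estimator of the SVAR model: along the contemporaneous topological order the conditional likelihood factorises over vertices and time into separate Gaussian linear regressions of $\X_v(t)$ on $\X^{\Lag^\G_v}(t)$, each maximised by OLS. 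Hence its influence function---the stack over $v$ of $\E[\X^{\Lag^\G_v}(t)(\X^{\Lag^\G_v}(t))^\top]^{-1}\,\X^{\Lag^\G_v}(t)\,\eta_v(t)$---lies in the linear span of the model scores, and applying the standard identity $\acov(S, \hat{\theta}_{\mathrm{MLE}}) = (\partial_\theta\E_\theta[S])\,I(\theta)^{-1}$ with $S = \hat{\Sigma}^{(q)}$, for which $\partial_\theta\E_\theta[\hat{\Sigma}^{(q)}] = \partial_\theta\Sigma^{(q)}$, exhibits each column of $\acov(\hat{\Sigma}^{(q)}, \hat{\Phi}^{\Lag^\G})$ as a linear combination of the parameter-derivatives of $\Sigma^{(q)}$, which by construction span $T\mathcal{M}$. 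I expect this step---recognising the OLS influence function as a model-tangent direction, together with the bookkeeping of which autocovariance lags enter it---to be the only genuinely substantive part of the proof; it can alternatively be checked directly from the influence function above using Isserlis' theorem for the Gaussian fourth moments. (One can also bypass the explicit computation altogether: the asymptotically normal members of $\mathcal{T}^{(q)}(z)$ are regular estimators of $\tau$, $\hat{\tau}^{\Lag^\G}(z)$ attains the parametric efficiency bound $\dot{\tau}\,I(\theta)^{-1}\,\dot{\tau}^\top$ because it is the plug-in of the MLE, and the H\'ajek--Le Cam convolution theorem yields (\ref{eq: absolute asymptotic efficiency}).)

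Finally, for (\ref{eq: relative asymptotic efficiency}) I would apply the first part to an auxiliary time series graph. Let $\G'$ be the time series graph whose collection of lagged relations is exactly $\Lag'$, so that $\Lag^{\G'} = \Lag'$; since $\Lag^\G_v \subset \Lag'_v$ the true process still lies in the $\G'$-model, $\Lag''$ still satisfies (\ref{condition: time lags}) relative to $\G'$, and by construction the OLS plug-in associated with $\G'$ is precisely $\hat{\tau}^{\Lag'}(z)$. Moreover $\hat{\tau}^{\Lag''}(z)$ is a differentiable function of $\hat{\Sigma}^{(q)}$, is consistent for the $\G'$-model---the coefficients indexed by $\Lag''_v \setminus \Lag'_v$ are estimated toward zero and are in any case ignored by (\ref{eq: lag polynomial})---and is asymptotically normal, being a smooth image of the asymptotically normal $\hat{\Phi}^{\Lag''}$; hence it belongs to $\mathcal{T}^{(q)}(z)$ relative to $\G'$. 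Applying (\ref{eq: absolute asymptotic efficiency}) to $\G'$ gives $\acov(\hat{\tau}^{\Lag''}(z)) \succeq \acov(\hat{\tau}^{\Lag'}(z))$; combined with (\ref{eq: absolute asymptotic efficiency}) for $\G$ this yields the chain $\acov(\hat{\tau}^{\Lag''}(z)) \succeq \acov(\hat{\tau}^{\Lag'}(z)) \succeq \acov(\hat{\tau}^{\Lag^\G}(z))$.
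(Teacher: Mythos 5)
Your argument is correct in outline, but it is organised quite differently from the paper's proof. The paper pulls everything back to SVAR--parameter space: it precomposes the competing estimator with the diffeomorphism of Proposition \ref{prop: diffeomorphism} to obtain a function $\tilde\tau$ of $(\Phi^{\tilde\Lag},\Omega)$, uses the Guo--Perkovi\'c-style Lemma \ref{lemma: jacobaian estimator} (consistency forces $\nabla^{\Lag^\G}\tilde\tau=\nabla^{\Lag^\G}\hat\tau^\G$ and $\nabla^\Omega\tilde\tau=\mathbf{0}$), and then compares asymptotic variances vertex by vertex through the Schur-complement inequality of Lemma \ref{lemma: lower bound} together with Corollary \ref{cor: schur inverison} and the block-diagonal structure of Lemma \ref{lemma: block diagonal structure of }; statement (\ref{eq: relative asymptotic efficiency}) is obtained by repeating the same Schur argument for nested lag sets. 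You instead stay in $\mathcal{E}_q(m)$, write $Dg=Dg^\G+R$ with $R$ vanishing on the model tangent space $T\mathcal{M}$, and reduce (\ref{eq: absolute asymptotic efficiency}) to the tangency of the cross-covariance $W(Dg^\G)^\top$, which you justify by identifying the per-equation OLS (with residual variances) as the conditional Gaussian MLE and invoking the information identity (or the convolution theorem). This is a legitimate and more geometric route --- it makes transparent \emph{why} the plug-in of the true-lag OLS is the distinguished element of $\mathcal{T}^{(q)}(z)$, and your reduction of (\ref{eq: relative asymptotic efficiency}) to (\ref{eq: absolute asymptotic efficiency}) via the auxiliary graph $\G'$ with $\Lag^{\G'}=\Lag'$ is cleaner than the paper's ``similar arguments''. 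What the paper's route buys is self-containedness: it needs only the delta method and finite-dimensional Schur-complement algebra on the OLS precision matrices, whereas your key step rests on machinery you do not carry out --- either regularity of the sample autocovariances / the identity $\acov(S,\hat\theta_{\mathrm{MLE}})=(\partial_\theta\E_\theta[S])I(\theta)^{-1}$ in the dependent-data (LAN) setting, or the direct Isserlis/Bartlett computation of $\acov(\hat\Sigma^{(q)},\hat\Phi^{\Lag^\G})$; a complete write-up would have to supply one of these, and that verification is essentially the substantive content that the paper's Lemma \ref{lemma: jacobaian estimator} plus Schur step replaces. Two small inaccuracies, neither fatal: the cross term vanishing only \emph{requires} (not ``is equivalent to'') the tangency, since you never need admissible $R$'s to exhaust $(T\mathcal{M})^\perp$; and $\Lag''$ need not literally satisfy condition (\ref{condition: time lags}) relative to $\G'$ (e.g.\ a contemporaneous pair in $\Lag''_v\setminus\Lag'_v$), but this is harmless because all you actually use --- and correctly argue --- is that $\hat\tau^{\Lag''}(z)$ is a differentiable, consistent, asymptotically normal member of $\mathcal{T}^{(q)}(z)$ for the $\G'$-model, the extra regressors remaining uncorrelated with the noise under every parameter of that model.
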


\section{Impact of the solar cycle on the NAO}\label{sec: application}
\subsection{Background}
In this section, we employ the tests for causal effects in the frequency domain from section \ref{subsec: asymptotic uncertainty and significance} to analyse the possible effect of variations in solar activity, represented by sun spot numbers, on northern European climate during winter time, represented by the North Atlantic Oscillation (NAO).
Solar activity undergoes an approximately 11-year cycle, where slightly enhanced solar insulation coincides with increases in the number of sunspots. While the direct impact on global climate of the solar cycle is very small consistent with the energy budget of the planet \citep{gray2013responsesolarcycle, gray2010solarinfluenceclimate}, it has been suggested that the sunspot cycle can influence the NAO through its effect on temperature gradients through the atmosphere \cite{Lockwood2012SolarIO}. The NAO represents the tendency towards stronger or weaker westerly circulation. Empirical analysis suggests that during minima in the sunspot cycle there is a tendency for the NAO to become negative, resulting in northerly and easterly flow, which is conducive to cold winter weather, as observed during the 2009/2010 solar minimum. However, the atmospheric circulation can also be influenced by remote effects of the El Nino Southern Oscillation (ENSO), the Atlantic multidecadal oscillation in sea surface temperatures (AMO), aerosol effects from human influences and volcanic forcing are reflected through their optical depth (AOD). 
\begin{figure}
    \centering
    \includegraphics[width=0.5\textwidth]{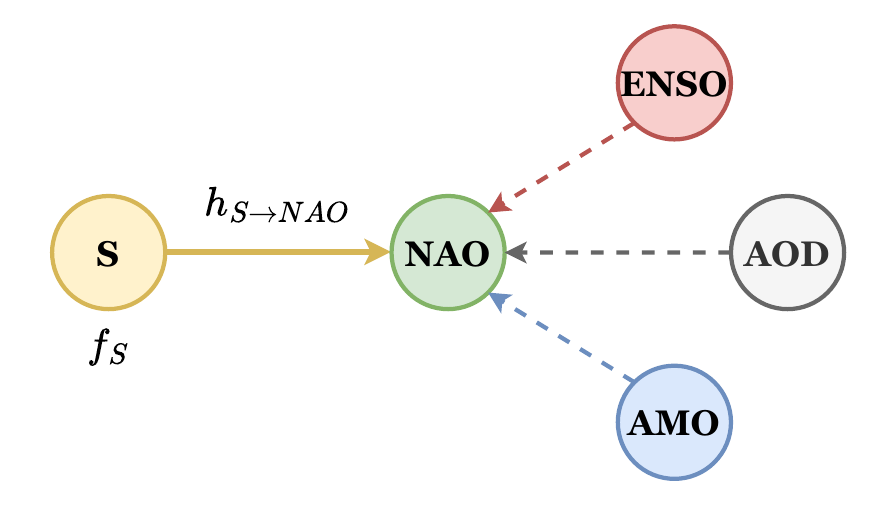}
    \caption{The process graph based on which the solar effect on NAO is investigated, where S stands for solar activity represented by sun spot numbers.}
    \label{fig:application process graph}
\end{figure}
\subsection{Data}
We construct the time series for our analysis from the following publicly available sources: NAO \citep{jonesNAO1997}, ENSO \citep{raynerSST2003}, AMO \citep{enfieldAMO2001}, AOD \citep{satoAOD1993}, SNS \citep{sidc}. From each of those monthly time-series we construct yearly winter series by averaging the December, January and February values. We normalise each time series by deducting its mean and dividing by its variance. The Diceky-Fuller test \citep{fuller1979unitroottest} applied to the annual winter time time series suggests that the SNS and AMO time series are non-stationary for which we adjust by considering the time series of first order differences \citep{lutkepohl2005new} instead. These pre-processing steps give five annual time series that are normalised and stationary, as suggested by the Dickey Fuller test with $95 \%$ confidence. Those time series cover the period from 1870 until 2010. 
\subsection{Result}
The hypothesised causal structure between the processes is represented by the process graph shown in Figure \ref{fig:application process graph}. The short length of the time series makes it necessary to constrain the underlying time series graph to reduce the number of parameters to be estimated. We follow the choices explained in \citep{gray2013responsesolarcycle}.
The magnitude of the possible direct influence of solar activity on the NAO is the function $|\fsep_{\SNS, \NAO}|$ and the possible spectral contribution of the solar process to the NAO is $\csep_{\SNS \to \NAO} = |\fsep_{\SNS \to \NAO}|\sqrt{\sd_{\SNS}}$, which is the fraction of the spectral density of NAO anomalies determined by solar activity anomalies. Solar activity anomalies oscillate with a period between 10 and 11 years. We therefore restrict our analysis to oscillations with periods between 8 and 13 years. We use our assumptions about the structure of the time series graph to compute the OLS estimator $\hat{\Phi}$ for the SVAR coefficients and plug it into the formulas for $\fsep_{\SNS\to \NAO}$ and $\csep_{\SNS\to \NAO}$ respectively. In Figure \ref{fig: sns nao direct effect} we show their estimated magnitudes together with their $95 \%$ confidence interval and their p-values. The Wald test detects a significant causal effect and spectral distribution for all frequencies considered with at least $95\%$ confidence, as can be seen from the plotted p-values.
\begin{figure*}
    \centering
    \includegraphics[width=0.9\textwidth]{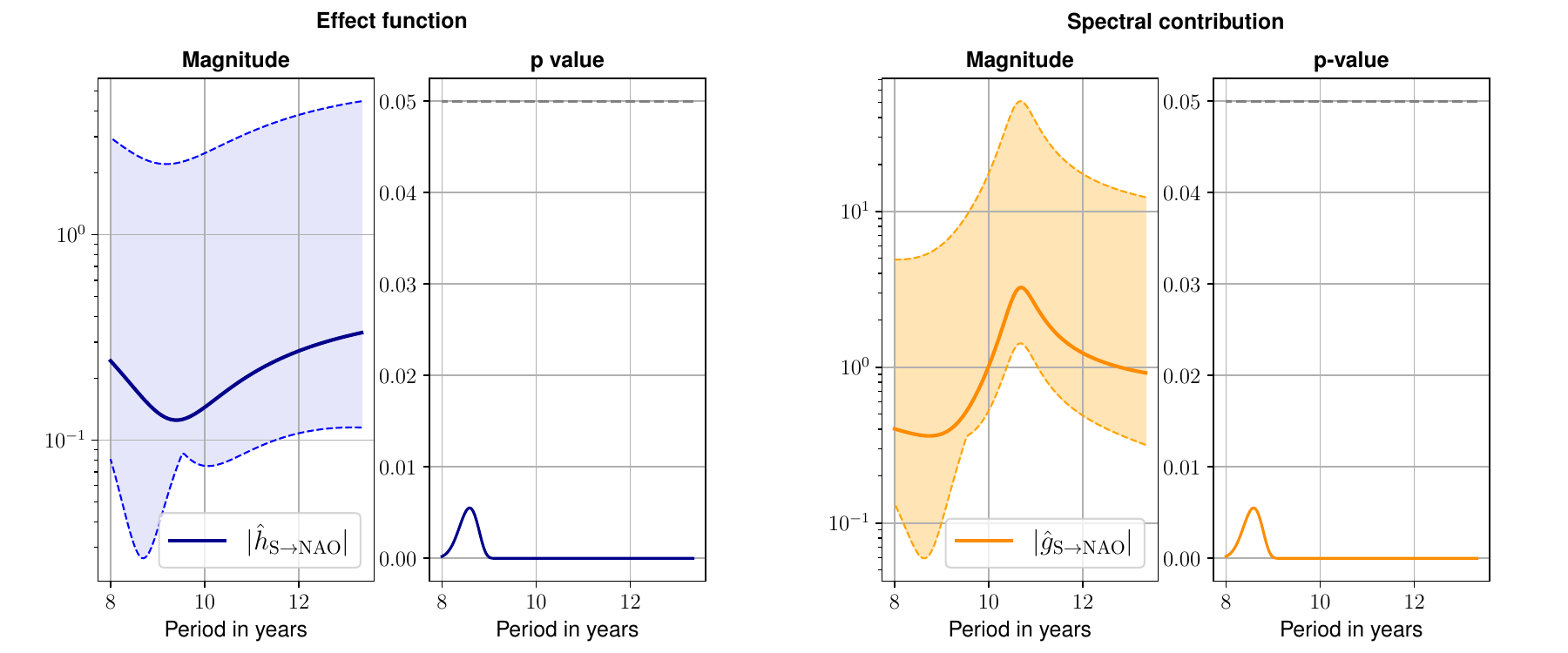}
    \caption{The statistical analysis of the solar influence on the NAO for oscillations with periods between 8 and 13 years.}
    \label{fig: sns nao direct effect}
\end{figure*}
\section{Discussion}
Frequency domain path functions and spectral contributions compactly capture and visualise the causal structure of linear SVAR processes. As they can be estimated from observations, they could be useful for investigating many scientific questions. However, in order to draw conclusions from an estimate, it is important to assess its uncertainty. For large sample sizes, this uncertainty can be approximated by the asymptotic Gaussian distribution of the estimator, for which we computed structured expressions using the delta method. Based on these expressions, we then constructed Wald type tests to assess the significance of the estimated frequency domain causal quantity. 
The problem with approximating distributions by the normal distribution obtained by the delta method is that it may be far from the true distribution, for instance if the sample size is too small. How many samples you need for the approximated distribution to be meaningful depends, for example, on the number of SVAR parameters involved, or how far the true SVAR parameter is from a parameter that defines an unstable process. In addition, problems may arise if the SVAR coefficient is close to a parameter for which the rank of the gradient used to compute the asymptotic distribution degenerates. We hope that this work will lead to further research into the challenges and uncertainties of estimating causal quantities in the frequency domain.

Furthermore, for any given causal quantity in the frequency domain, we could identify the estimator with the lowest asymptotic variance among all estimators that can be expressed as a differentiable function in finitely many entries of the autocovariance. This optimal estimator is constructed in two steps: first, one computes all the SVAR coefficients by regressing on the lagged parents as prescribed by the time series graph. In the second step, we evaluate the rational function (in the SVAR parameters) that describes the causal quantity of interest on the estimated SVAR coefficients. However, the construction of this estimator requires knowledge of the complete time series graph, and this information may not always be available. Instead, it may only be possible to know the structure of the process graph together with an upper bound on the order of the SVAR process. In future work, it would be interesting to search for the asymptotically optimal estimator relative to some fixed information about the time series graph. Another direction in which this work could be extended is to consider the scenario where some of the processes are hidden, i.e. they influence the observable processes but cannot be measured directly. In a companion work \cite{reiter2024causal} we establish the theoretical foundations for identifying frequency domain causal effects in the presence of latent confounding processes. 

We believe that further research into the estimation of causal effects in the frequency domain could be not only theoretically interesting but also practically relevant, which we demonstrate with a case study of the influence of the solar cycle on the variability of the NAO. While previous studies have looked at lagged relationships between solar activity and the NAO, in this study we have analysed the effect directly at the relevant time scale in the frequency domain, where the solar variability pattern is clearly captured. Our results confirm a significant influence of the solar cycle on the NAO, which is consistent with previous studies on this question \citep{gray2010solarinfluenceclimate, gray2016solarcycleNAO}. However, our results also suggest that further investigation of this frequency domain effect is needed. In particular, the plots in Figure \ref{fig: sns nao direct effect} suggest that the approximate distribution of the estimator for the solar influence on the NAO may be degenerate. Thus, to argue convincingly for a significant causal effect of the solar process on the NAO on the considered time scale, extended investigations are most likely required.

With a view to further applications of this type, it would be attractive to extend the framework of this analysis so that frequency-wise causal effects could also be analysed when the underlying data is structured not only in time but also in space. 

\section*{Acknowledgement}
J.W. and J.R. received funding from the European Research Council (ERC) Starting Grant CausalEarth under the European Union’s Horizon 2020 research and innovation program (Grant Agreement No. 948112).
N.R., G.H. and J.R. received funding from the European Union’s Horizon 2020 research and innovation programme under Marie Skłodowska-Curie grant agreement No 860100 (IMIRACLI).
N.R. and G.H. thank Mike Evans and Carla Rösch for their help and encouraging discussions.


\appendix
\section{Technical preliminaries}
\subsection{Schur complements}
Our indexing conventions for vectors and matrices are as follows: Suppose $I,J$ are finite sets, then $\R^I$ denotes an $|I|$-dimensional real vector space whose dimensions are indexed by $I$. A matrix $A = (a_{i,j})_{i \in I, j \in J}\in \R^{I \times J}$ represents a linear map from $\R^{J}$ to $\R^{I}$. For subsets $I' \subset I $ and $J' \subset J$ we define $[A]_{I', J'} = (a_{i,j})_{i \in I', j \in J'} \in \R^{I' \times J'}$ as the submatrix of $A$ corresponding to the rows $I'$ and the columns $J'$. For any two integers $p \leq q$ we use $[p,q]$ to denote the set $\{i: p \leq i \leq q\}$, and we use $[p]$ as a shorthand for the set $[0,p]$.

Suppose $I$ and $J$ are finite disjoint sets, and $M \in \R^{I \cup J \times I \cup J}$ a block structured matrix
\begin{align*}
    M = \begin{bmatrix}
        A & B  \\
        C & D
    \end{bmatrix}
\end{align*}
The \emph{Schur complement} \cite{gallier2011geometric} of $M$ with respect to $I$ resp. with respect to $J$ are defined as 
\begin{align*}
    M_{J \cdot I} &\coloneqq D - C A^{-1} B & M_{I \cdot J} &\coloneqq A - B D^{-1} C.
\end{align*}
\begin{theorem}[Invertibility Schur complement \cite{gallier2011geometric}]
    Let $M \in \R^{I \cup J \times I \cup J}$ be a block structured matrix and suppose one of the  matrices $M, M_{I \cdot J}, M_{J \cdot I}$ is invertible, then also the other two matrices are invertible. In this case, the inverses of these matrices can be computed as follows 
    \begin{align*}
        M^{-1} &= \begin{bmatrix}
            A^{-1} + A^{-1}B(M_{J \cdot I})^{-1} CA^{-1} & -A^{-1}B(M_{J \cdot I})^{-1} \\
            (M_{J \cdot I})^{-1} CA^{-1} & (M_{J \cdot I})^{-1}
        \end{bmatrix} \\
        &= \begin{bmatrix}
            (M_{I \cdot J})^{-1} & -(M_{I \cdot J})^{-1}BD^{-1} \\
            -D^{-1}C(M_{I \cdot J})^{-1} & D^{-1} + D^{-1}C (M_{I \cdot J})^{-1} BD^{-1}
        \end{bmatrix}
    \end{align*}
\end{theorem}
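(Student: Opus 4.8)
The plan is to prove the theorem via the two block triangular (block $LDU$-type) factorizations of $M$. These reduce every assertion — the equivalence of invertibility among $M$, $M_{I\cdot J}$, $M_{J\cdot I}$, and both closed forms for $M^{-1}$ — to a mechanical multiplication of unipotent block-triangular matrices, so the argument is essentially bookkeeping once the factorizations are in place.

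First I would record the identity, valid whenever $A$ is invertible,
\[
M \;=\; \begin{bmatrix}\mathbb{I} & \mathbf{0} \\ CA^{-1} & \mathbb{I}\end{bmatrix}\begin{bmatrix}A & \mathbf{0} \\ \mathbf{0} & M_{J\cdot I}\end{bmatrix}\begin{bmatrix}\mathbb{I} & A^{-1}B \\ \mathbf{0} & \mathbb{I}\end{bmatrix},
\]
checked by direct block multiplication using $M_{J\cdot I} = D - CA^{-1}B$. The two outer factors are unipotent block-triangular, hence invertible with inverse obtained by negating the off-diagonal block; in particular $\det M = \det A \cdot \det M_{J\cdot I}$, so (given $A$ invertible) $M$ is invertible if and only if $M_{J\cdot I}$ is. Symmetrically, whenever $D$ is invertible,
\[
M \;=\; \begin{bmatrix}\mathbb{I} & BD^{-1} \\ \mathbf{0} & \mathbb{I}\end{bmatrix}\begin{bmatrix}M_{I\cdot J} & \mathbf{0} \\ \mathbf{0} & D\end{bmatrix}\begin{bmatrix}\mathbb{I} & \mathbf{0} \\ D^{-1}C & \mathbb{I}\end{bmatrix},
\]
so $M$ is invertible if and only if $M_{I\cdot J}$ is. Since the statement presupposes that both Schur complements are defined, i.e. that $A$ and $D$ are invertible, chaining the two equivalences shows that invertibility of any one of $M$, $M_{I\cdot J}$, $M_{J\cdot I}$ forces invertibility of the other two.

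For the closed forms I would simply invert the first factorization: $M^{-1}$ equals the product of the inverses of the three factors in reverse order,
\[
M^{-1} \;=\; \begin{bmatrix}\mathbb{I} & -A^{-1}B \\ \mathbf{0} & \mathbb{I}\end{bmatrix}\begin{bmatrix}A^{-1} & \mathbf{0} \\ \mathbf{0} & (M_{J\cdot I})^{-1}\end{bmatrix}\begin{bmatrix}\mathbb{I} & \mathbf{0} \\ -CA^{-1} & \mathbb{I}\end{bmatrix},
\]
and multiplying the three blocks out gives the first displayed expression for $M^{-1}$; inverting the second factorization in the same way gives the second expression. As a byproduct, equating the two expressions for $M^{-1}$ reproves the standard Schur identities (for instance $(M_{I\cdot J})^{-1} = A^{-1} + A^{-1}B(M_{J\cdot I})^{-1}CA^{-1}$) by comparing corresponding blocks, so one could alternatively verify only one formula and obtain the other this way.

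The only point requiring care — and the closest thing to an obstacle — is hypothesis bookkeeping rather than computation: the Schur complements $M_{I\cdot J}$ and $M_{J\cdot I}$ are only defined once $A$ and $D$ are known to be invertible, so the statement should be read as ``$A$ and $D$ invertible, together with invertibility of any one of $M$, $M_{I\cdot J}$, $M_{J\cdot I}$, implies invertibility of all three''. Beyond keeping this straight, the proof is the routine verification of the two block factorizations and the expansion of a product of three $2\times 2$ block matrices, with no conceptual difficulty.
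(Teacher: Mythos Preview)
The paper does not supply a proof of this theorem; it is quoted as a standard result from \cite{gallier2011geometric} and used as a tool (specifically to derive Corollary~\ref{cor: schur inverison}). Your block-$LDU$ argument is the standard proof and is correct: the two triangular factorizations are valid under the stated hypotheses, invertibility transfers via the determinant identity, and inverting the factorizations yields the closed forms.

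Two remarks. First, your caveat about hypothesis bookkeeping is exactly right and worth stating explicitly: the theorem as written tacitly assumes $A$ and $D$ invertible (otherwise neither Schur complement is defined), and your proof uses precisely this. Second, when you carry out the multiplication, the bottom-left block of the first displayed formula comes out as $-(M_{J\cdot I})^{-1}CA^{-1}$, not $(M_{J\cdot I})^{-1}CA^{-1}$ as printed; this is a sign typo in the paper's statement rather than a flaw in your argument, and the second displayed formula (with the minus signs present) is the one your factorization actually reproduces.
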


\begin{corollary}\label{cor: schur inverison}
    Suppose the block structured  matrix $M \in \R^{I \cup J \times I \cup J}$ as above is invertible, then it holds that 
    \begin{align*}
        D^{-1} = [M^{-1}]_{J \cdot I},
    \end{align*}
    where the right hand side is the Schur complement of $M^{-1}$ with respect to $I$. 
\end{corollary}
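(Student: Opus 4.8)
The plan is to obtain the identity by applying the block-inversion Theorem stated just above not to $M$ itself but to its inverse. The underlying principle is that Schur complementation and matrix inversion interact in a self-dual way: the $J$-block of $M^{-1}$ is the inverse of the Schur complement $M_{J\cdot I}$, and conversely forming the Schur complement of $M^{-1}$ with respect to $I$ should recover the $J$-block of $(M^{-1})^{-1}$, i.e. $D$.

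First I would extract from the Theorem everything it already gives for $M$. Since $M$ is invertible (and $A=[M]_{I,I}$ is invertible, as is implicit in the very definition of $M_{J\cdot I}=D-CA^{-1}B$), the Theorem asserts that both Schur complements $M_{I\cdot J}$ and $M_{J\cdot I}$ are invertible, and reading off the two displayed expressions for $M^{-1}$ yields $[M^{-1}]_{I,I}=(M_{I\cdot J})^{-1}$ and $[M^{-1}]_{J,J}=(M_{J\cdot I})^{-1}$. In particular the $I\times I$ block of $M^{-1}$ is invertible, so the Schur complement $[M^{-1}]_{J\cdot I}$ is a legitimate object. Now I would apply the same Theorem to $N\coloneqq M^{-1}$: it is invertible with $N^{-1}=M$ and, by the previous line, has invertible upper-left block, so the Theorem applies and its formula for the bottom-right block of $N^{-1}$ gives $[N^{-1}]_{J,J}=(N_{J\cdot I})^{-1}$. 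Since $N^{-1}=M$, the left-hand side is exactly $D$, while $N_{J\cdot I}=[M^{-1}]_{J\cdot I}$; hence $D=([M^{-1}]_{J\cdot I})^{-1}$, and inverting gives the claim.

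Alternatively, and to sidestep tracking the auxiliary invertibility of $[M^{-1}]_{I,I}$, I would argue directly from the four block equations encoded in $MM^{-1}=I$. Writing $M^{-1}=\begin{bmatrix} P & Q \\ R & S\end{bmatrix}$, the relations $CP+DR=0$ and $AP+BR=I$ give $R=-D^{-1}CP$ and then $(A-BD^{-1}C)P=I$, so $P=(M_{I\cdot J})^{-1}$ is invertible; similarly $AQ+BS=0$ and $CQ+DS=I$ give $Q=-A^{-1}BS$ and $S=(M_{J\cdot I})^{-1}$. Substituting into $[M^{-1}]_{J\cdot I}=S-RP^{-1}Q$, using $RP^{-1}=-D^{-1}C$ and $CA^{-1}B=D-M_{J\cdot I}$, the expression collapses: $RP^{-1}Q=D^{-1}(D-M_{J\cdot I})(M_{J\cdot I})^{-1}=S-D^{-1}$, hence $[M^{-1}]_{J\cdot I}=S-(S-D^{-1})=D^{-1}$. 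Either route is short; the only real point of care — which is exactly what the first application of the Theorem to $M$ settles — is ensuring that every Schur complement appearing is formed with an invertible pivot block, so the main (mild) obstacle is the bookkeeping of invertibility hypotheses rather than any substantive computation.
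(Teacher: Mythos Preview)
Your proposal is correct, and your first route---applying the block-inversion Theorem to $N=M^{-1}$ after using it once on $M$ to secure the needed invertibility of $[M^{-1}]_{I,I}$---is precisely the intended corollary argument; the paper states the result without proof, as an immediate consequence of that Theorem. Your alternative direct computation from $MM^{-1}=I$ is also sound and makes the cancellation explicit, but is not needed.
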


\subsection{Time lagged regression}
Suppose $\mathcal{G} = (V \times \Z, \D)$ is a time series DAG of order $p$. The associated contemporaneous graph is denoted $\G_0= (V, \D_0)$ and the process graph $G= (V, D)$. We assume that the process indices $V$
are topologically ordered with respect to $\G_0$, i.e. for two distinct processes $v,w$ the relation $v < w$ implies $w \not \in \anc_0(v)$.
In the following, we assume that the SVAR parameter $\Phi$ defines a stable process $\X$. This assumption ensures that $\X$ is stationary, i.e. the mean sequence and ACS are time independent. One convenient way to ensure stability is to require that
\begin{align}\label{assimption: stability}
    \sum_{v \in V} \sum_{(u,k) \in \Lag_v} |\phi_{v,w}(k)| < 1.
\end{align}
By requiring this condition we get that the space of all considered SVAR parameters is an open semi-algebraic subset of $\R^{\G} = \prod_{v \in V} \R^{\Lag_v}$, i.e. the euclidean space whose dimensions are indexed by the lagged relations. 

Furthermore, let $q\geq p$ a non-negative integer, then we define $\Bar{\Lag} = \{ \Lag_v \}_{v \in V}$ as the collection of time lagged relations where 
\begin{align}\label{eq: time lags coarse}
    \tilde\Lag_v &= \{ u \in V \mid u < v \}  \times [0,q] \cup \{ u \in V \mid u \geq v \} \times [1,q].
\end{align}
For two collections of time lagged relations $\Lag'' = \{\Lag_v'' \}_{v \in V}$ and $\Lag' = \{\Lag'_v \}_{v \in V}$, we write $\Lag' \subset \Lag''$ if $\Lag'_v \subset \Lag''_v$ for every $v \in V$, and we define $\Lag'' \setminus  \Lag' \coloneqq \{\Lag''_v \setminus \Lag'_v \}_{v \in V}$. Throughout the supplementary material we will only consider collections of time lagged relations $\Lag$ that either satsify condition (\ref{condition: time lags}) from the main paper or satisfy the coarser condition 
\begin{align} \label{condition: time lags coarse}
    \Lag^\G \subset \Lag \subset \Bar{\Lag}.
\end{align}
Condition (\ref{condition: time lags coarse}) guarantees that the OLS estimator $\hat{\Phi}^{\Lag}$, as defined in the main paper, is a consistent estimator for $\Phi$. 

In the main part of the paper we defined the random vector $\X^{\Lag_v}(t)$ containing the super set of lagged parents of $\X_v(t)$. For the purpose of the computations we will structure $\X^{\Lag_v}$ as follows 
\begin{align*}
    \X^{\Lag_v}(t) &= \begin{bmatrix}
        \X^{\Lag_v}_u(t)
    \end{bmatrix}_{v \in V}, & \X^{\Lag_v}_u(t)& = \begin{bmatrix}
        \X_u(t-k)
    \end{bmatrix}_{(u,k) \in \Lag_v}
\end{align*}
The covariance matrix associated with the process $\X^{\Lag_v}(t)$ is defined as follows:
\begin{align*}
    \Sigma^{\Lag_v} \coloneqq \E[\X^{\Lag_v}(t) (\X^{\Lag_v}(t))^\top] \in \R^{\Lag_v \times \Lag_v}
\end{align*}
We now equip this matrix with a block structure. For $u \in V$ we define $\Lag_v(u)$ to be the set $\{u \} \times [q] \cap \Lag_v$, and for $u_1, u_2 \in V$ we define the matrix
\begin{align*}
    \Sigma_{u_1,u_2}^{\Lag_v} &\coloneqq \E[(\X^{\Lag_v}_{u_1}(t)) (\X^{\Lag_v}_{u_2}(t))^\top] \\
    &= [\Sigma^{\Lag_v}]_{\Lag_v(u_1), \Lag_v(u_2)} \in \R^{\Lag_v(u_1) \times \Lag_v(u_2)}. 
\end{align*}
So let us structure the covariance matrix like this 
\begin{align*}
    \Sigma^{\Lag_v} &= \left[
        \Sigma^{\Lag_v}_{u_1,u_2}
    \right]_{u_1,u_2 \in V}
\end{align*}
Finally, we equip the associated precision matrix with the very same block structure, i.e.
\begin{equation}
    \begin{split}
        P^{\Lag_v} &= \omega_v \left(\Sigma^{\Lag_v}\right)^{-1} \\
        &=\left[P^{\Lag_v}_{u_1,u_2}
    \right]_{u_1, u_2 \in \pa(v) \cup \{v\}} \in \R^{\Lag_v \times \Lag_v},
    \end{split}
\end{equation}
where $P^{\Lag_v}_{u_1,u_2} = [P^{\Lag_v}]_{\Lag_{v}(u_1), \Lag_v(u_2)} \in \R^{\Lag_v(u_1), \Lag_v(u_2)}$.

For the proof of Proposition \ref{prop: asymptotic covariance path function}, Proposition \ref{prop: asymptotic covariance weighted path function} and Theorem \ref{thm: asymptotic efficiency} from the main paper, we need two auxiliary lemmas. The first is a time series adaptation of Lemma 23 in \citep{Guo2020EfficientLS}. The second establishes a block diagonal structure on the asymptotic covariance of the OLS estimator $\hat{\Phi}^\Lag$. 
\begin{lemma}[Time series adaption of \citep{Guo2020EfficientLS}]\label{lemma: asymptotic difference } 
    Let $v\in V$ a process and $\Lag_v$ be a set of time lagged relations as specified above, then it holds that 
    \begin{align*}
        \hat{\Phi}^{\Lag_v} - \Phi^{\Lag_v} &= \frac{1}{T} \sum_{t=1}^T(\Sigma^{\Lag_v})^{-1}\X^{\Lag_w}(t)\eta_v(t) + O_p(T^{-1}).
    \end{align*}
\end{lemma}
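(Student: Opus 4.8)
The plan is to expand the OLS estimator using the regression relation \eqref{eq: regression relation} and control the difference between sample and population second moments with a central limit argument. Write $\hat{\Sigma}^{\Lag_v} = \frac{1}{T}\sum_{t=1}^T \X^{\Lag_v}(t)(\X^{\Lag_v}(t))^\top$ and $\hat{c}^{\Lag_v} = \frac{1}{T}\sum_{t=1}^T \X^{\Lag_v}(t)\X_v(t)$, so that $\hat{\Phi}^{\Lag_v} = (\hat{\Sigma}^{\Lag_v})^{-1}\hat{c}^{\Lag_v}$ while $\Phi^{\Lag_v} = (\Sigma^{\Lag_v})^{-1}\E[\X^{\Lag_v}(t)\X_v(t)]$. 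First I would substitute the structural equation $\X_v(t) = (\Phi^{\Lag_v})^\top\X^{\Lag_v}(t) + \eta_v(t)$ into $\hat{c}^{\Lag_v}$, which gives $\hat{c}^{\Lag_v} = \hat{\Sigma}^{\Lag_v}\Phi^{\Lag_v} + \frac{1}{T}\sum_{t=1}^T\X^{\Lag_v}(t)\eta_v(t)$; here one uses that $\Lag_v \supset \Lag_v^\G$ together with the definition of $\Phi^{\Lag_v}$ (zero on the extra lags) to ensure the cross term is exactly the noise term. Multiplying by $(\hat{\Sigma}^{\Lag_v})^{-1}$ yields the exact identity
\begin{align*}
    \hat{\Phi}^{\Lag_v} - \Phi^{\Lag_v} = (\hat{\Sigma}^{\Lag_v})^{-1}\frac{1}{T}\sum_{t=1}^T \X^{\Lag_v}(t)\eta_v(t).
\end{align*}

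The next step is to replace $(\hat{\Sigma}^{\Lag_v})^{-1}$ by $(\Sigma^{\Lag_v})^{-1}$ at the cost of an $O_p(T^{-1})$ remainder. I would write $(\hat{\Sigma}^{\Lag_v})^{-1} = (\Sigma^{\Lag_v})^{-1} + \big((\hat{\Sigma}^{\Lag_v})^{-1} - (\Sigma^{\Lag_v})^{-1}\big)$ and note that, by stationarity and ergodicity of the stable SVAR process (the mixing / summable-autocovariance properties guaranteed by condition \eqref{assimption: stability}), $\hat{\Sigma}^{\Lag_v} - \Sigma^{\Lag_v} = O_p(T^{-1/2})$, hence $(\hat{\Sigma}^{\Lag_v})^{-1} - (\Sigma^{\Lag_v})^{-1} = O_p(T^{-1/2})$ as well, using that $\Sigma^{\Lag_v}$ is invertible. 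Likewise the martingale-type average $\frac{1}{T}\sum_{t=1}^T \X^{\Lag_v}(t)\eta_v(t)$ has mean zero (since $\eta_v(t)$ is independent of the lagged regressors $\X^{\Lag_v}(t)$) and is therefore $O_p(T^{-1/2})$ by a CLT for stationary ergodic sequences. The product of the two $O_p(T^{-1/2})$ terms is $O_p(T^{-1})$, which is exactly the claimed remainder, giving
\begin{align*}
    \hat{\Phi}^{\Lag_v} - \Phi^{\Lag_v} = (\Sigma^{\Lag_v})^{-1}\frac{1}{T}\sum_{t=1}^T \X^{\Lag_v}(t)\eta_v(t) + O_p(T^{-1}).
\end{align*}

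I would close by recording which probabilistic inputs make the $O_p$ bounds rigorous: stationarity of $\X$ (so the population moments are well defined and time-independent), a law of large numbers and a $\sqrt{T}$-CLT for the relevant sample averages, and invertibility of $\Sigma^{\Lag_v}$ (which holds because $\X$ is a genuinely $m$-dimensional stable process with nondegenerate noise, so the regressor covariance is positive definite). All of these are standard consequences of stability for SVAR/VARMA processes and can be cited from \citep{lutkepohl2005new, brockwell2009time}. The main obstacle, and the only nontrivial point, is the bookkeeping that makes the cross term collapse exactly to the noise term: one must check carefully that extending the lag set from $\Lag_v^\G$ to $\Lag_v$ does not break the orthogonality $\E[\X^{\Lag_v}(t)\eta_v(t)] = 0$ and that $\Phi^{\Lag_v}$, padded with zeros, still satisfies the population normal equations $\Sigma^{\Lag_v}\Phi^{\Lag_v} = \E[\X^{\Lag_v}(t)\X_v(t)]$; this is precisely where condition \eqref{condition: time lags} (or its coarser version \eqref{condition: time lags coarse}) is used, since it guarantees $\Lag_v^\G \subset \Lag_v$ and hence that the true structural coefficients are recovered as the regression coefficients on the enlarged regressor set.
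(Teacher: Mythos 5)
Your proposal is correct and follows essentially the same route as the paper, which simply defers to the computations of Lemma 23 in Guo et al.\ and cites the asymptotic normality of sample autocovariances for stationary processes to supply the $O_p(T^{-1/2})$ rates you use for $\hat{\Sigma}^{\Lag_v}-\Sigma^{\Lag_v}$ and the noise cross term. Your write-up just makes those deferred computations explicit, including the exact identity $\hat{\Phi}^{\Lag_v}-\Phi^{\Lag_v}=(\hat{\Sigma}^{\Lag_v})^{-1}\tfrac{1}{T}\sum_t \X^{\Lag_v}(t)\eta_v(t)$ and the orthogonality of $\eta_v(t)$ to the (lagged and contemporaneous-parent) regressors guaranteed by $\Lag_v^\G\subset\Lag_v$ and acyclicity of $\G_0$.
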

\begin{proof}
    The statement can be shown by the same computations as in the proof of Lemma 23 in \citep{Guo2020EfficientLS}. In order to conduct these computations we use the fact that the auto covariance estimator is asymptotically normal if the underlying process is stationary \citep{hanan1976serialcovariance}. 
\end{proof}
In the following, we assume that the vertices $V$ are topologically ordered with respect to the acyclic contemporaneous graph $\G_0 =(V, \D_0)$. That means, $V = \{ v_1 , \dots , v_m \}$ where $v_i < v_j $ implies $v_j \not \in \anc_0(v_i)$. 
\begin{lemma}\label{lemma: block diagonal structure of }
    Suppose $\Lag = \{ \Lag_v \}_{v \in V}$ is a collection of time lags such that $\Lag^\G \subset \Lag \subset \Bar{\Lag}$. If we denote by $(\hat{\Phi}^{L_v})_{v \in V}$ the OLS estimators for $\Phi^{\Lag_v}$ obtained by regressing $\X_v(t)$ on $\X^{\Lag_v}(t)$, then its asymptotic covariance has the following block diagonal structure 
    \begin{align}
        \sqrt{n}\begin{bmatrix}
            \hat{\Phi}^{\Lag_1} - \Phi^{\Lag_1}\\
            \vdots  \\
            \hat{\Phi}^{\Lag_m} - \Phi^{\Lag_m}
        \end{bmatrix} \to_d \mathcal{N}(\mathbf{0}, \mathrm{diag}[\omega_2 (\Sigma^{\Lag_1})^{-1}, \dots , \omega_m  (\Sigma^{\Lag_m})^{-1}])
    \end{align}
\end{lemma}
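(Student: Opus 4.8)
The plan is to establish the block diagonal structure of the asymptotic covariance by combining two facts: first, the componentwise asymptotic expansion of each OLS estimator $\hat{\Phi}^{\Lag_v}$ provided by Lemma~\ref{lemma: asymptotic difference }, and second, the mutual independence (indeed orthogonality) of the noise variables $\eta_v(t)$ across $v \in V$. Concretely, by Lemma~\ref{lemma: asymptotic difference } we may write, up to an $O_p(T^{-1})$ remainder that does not affect the $\sqrt{T}$-scaled limit,
\begin{align*}
    \sqrt{T}(\hat{\Phi}^{\Lag_v} - \Phi^{\Lag_v}) &= \frac{1}{\sqrt{T}} \sum_{t=1}^T (\Sigma^{\Lag_v})^{-1} \X^{\Lag_v}(t) \eta_v(t) + o_p(1).
\end{align*}
Stacking these over $v \in V$, the joint limiting distribution is that of a normalized sum of the stationary, ergodic, mean-zero vector sequence $\big( (\Sigma^{\Lag_v})^{-1} \X^{\Lag_v}(t)\eta_v(t) \big)_{v \in V}$, so by a martingale or mixing central limit theorem (the process $\X$ is stable, hence the relevant sequences are suitably mixing; cf. the auto-covariance CLT cited in the proof of Lemma~\ref{lemma: asymptotic difference }) the joint limit is centered Gaussian. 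It therefore remains only to identify the limiting covariance matrix and show it is block diagonal.

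The next step is to compute the cross-covariance between the $v_1$-block and the $v_2$-block for $v_1 \neq v_2$, which amounts to evaluating
\begin{align*}
    (\Sigma^{\Lag_{v_1}})^{-1}\, \E\!\big[ \X^{\Lag_{v_1}}(t)\, \eta_{v_1}(t)\, \eta_{v_2}(t)\, (\X^{\Lag_{v_2}}(t))^\top \big]\, (\Sigma^{\Lag_{v_2}})^{-1}.
\end{align*}
Assume without loss of generality (using the topological ordering of $V$ with respect to $\G_0$) that $v_1 < v_2$. Then every entry of $\X^{\Lag_{v_1}}(t)$ is of the form $\X_u(t-k)$ with either $k \geq 1$ or with $k = 0$ and $u <_{\G_0} v_1 < v_2$; in all cases these variables are measurable with respect to the noise variables $\{\eta_w(s) : (w,s) \neq (v_2, t)\}$, and similarly $\eta_{v_1}(t)$ is one such noise variable (since $v_1 \neq v_2$). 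Because $\eta_{v_2}(t)$ is an independent Gaussian that is mean zero and independent of all of these, conditioning on everything except $\eta_{v_2}(t)$ and using $\E[\eta_{v_2}(t)] = 0$ makes the expectation vanish. Hence the cross-blocks are zero and the limiting covariance is block diagonal.

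The diagonal blocks are then read off directly: for each $v$, the $v$-block limiting covariance is $(\Sigma^{\Lag_v})^{-1}\, \E[\X^{\Lag_v}(t) \eta_v(t)^2 (\X^{\Lag_v}(t))^\top]\, (\Sigma^{\Lag_v})^{-1}$, and since $\eta_v(t)$ has variance $\omega_v$ and is independent of $\X^{\Lag_v}(t)$ (again because all entries of $\X^{\Lag_v}(t)$ are strictly-past or strictly-earlier-in-topological-order variables), this equals $\omega_v (\Sigma^{\Lag_v})^{-1} \Sigma^{\Lag_v} (\Sigma^{\Lag_v})^{-1} = \omega_v (\Sigma^{\Lag_v})^{-1}$, matching the OLS variance formula~(\ref{eq: asymptotic normality phi}). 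I expect the main obstacle to be bookkeeping rather than anything deep: one must be careful that the independence argument for the cross-blocks genuinely uses the topological ordering and the acyclicity of $\G$ (so that no entry of $\X^{\Lag_{v_1}}(t)$ secretly depends on $\eta_{v_2}(t)$), and one must cite the correct CLT for the stacked sum under stability of $\X$; the indexing discrepancy between the $\omega$-subscripts in the stated matrix and the natural labelling is cosmetic and can be absorbed into the ordering convention.
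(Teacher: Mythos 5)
Your proposal is correct and rests on essentially the same ingredients as the paper's proof: the expansion from Lemma~\ref{lemma: asymptotic difference }, followed by the observation that along the topological order of $\G_0$ the ``latest'' noise variable is mean zero and independent of all remaining factors, which annihilates the off-diagonal blocks and leaves $\omega_v(\Sigma^{\Lag_v})^{-1}$ on the diagonal. The one step you leave under-justified is the reduction of the asymptotic covariance to the lag-zero covariance of the score: the expansion produces a double sum over time indices $s,t$, so you must either verify that the stacked score $\bigl((\Sigma^{\Lag_v})^{-1}\X^{\Lag_v}(t)\eta_v(t)\bigr)_{v\in V}$ is a martingale difference sequence with respect to the natural noise filtration (so that a martingale CLT applies and only $s=t$ contributes), or show directly that the $s\neq t$ cross terms vanish; the paper's proof keeps the full double sum and disposes of every $(s,t)$ term by the same independence-plus-mean-zero factorisation. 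Since your conditioning argument (peel off whichever of $\eta_{v_1}(t)$, $\eta_{v_2}(s)$ is latest in time, or latest in the contemporaneous order when $s=t$) applies verbatim to those remaining terms, this is a small repair rather than a flaw in the approach, and your identification of the diagonal blocks matches (\ref{eq: asymptotic normality phi}) as in the paper.
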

\begin{proof}
    Let $i, j \in [1, m]$ and suppose $i < j$. We wish to show that 
    \begin{align*}
        \lim_{T \to \infty} T \E[(\hat{\Phi}^{\Lag_i}- \Phi^{\Lag_i})(\hat{\Phi}^{\Lag_j} - \Phi^{\Lag_j})^\top] = 0,
    \end{align*}
    where $\hat{\Phi}^{\Lag_i}$ is the OLS estimator based on a sample of length $T$. Using Lemma \ref{lemma: asymptotic difference } we get 
    \begin{align*}
        \E[(\hat{\Phi}^{\Lag_i}- \Phi^{\Lag_i})(\hat{\Phi}^{\Lag_j} - \Phi^{\Lag_j})^\top] &= \frac{1}{T^2}\sum_{s,t=1}^T (\Sigma^{\Lag_i})^{-1} \E[\eta_i(t)\eta_j(s)\X^{\Lag_i}(t) (\X^{\Lag_j}(t))^\top] (\Sigma^{\Lag_j})^{-\top} \\ &+ O_p(T^{-2})
    \end{align*}
    Since $i < j$ it follows that $\eta_j(s)$ is independent of the variables $\eta_i(t)$ and $\X^{\Lag_j}(s)$ and $\X^{\Lag_i}(t)$, and since all those variables are jointly Gaussian it follows for all $s, t \in [1, T]$ that
    \begin{align*}
        \E[\eta_i(t)\eta_j(s)\X^{\Lag_i}(t) (\X^{\Lag_j}(t))^\top]= \E[\eta_j(s)] \E[\eta_i(t)\X^{\Lag_i}(t) (\X^{\Lag_j}(s))^\top] = 0, 
    \end{align*}
    since $\E[\eta_j(s)] = 0$. This finishes the proof. 
\end{proof}
\vspace*{-10pt}
\section{Asymptotic theory}
As mentioned already in the main paper, we will use the delta method \citep{Vaart_1998} to compute the asymptotic covariance of the link functions $\Fsep_{\bullet}(z)$.
\subsection{Asymptotic distributions of link function estimators}
Let $\G$ be a time series graph of order $p$ and let $G=(V, D)$ be the associated process graph. We fix a node $v\in V$ and a set $\Lag_v$ that satisfies condition (\ref{condition: time lags}) from the main paper. In view of the following arguments we introduce for every $u\in \pa(v)\cup \{v \} $ the ordered set
\begin{align*}
    \lag_u &= \left\{ k \mid (u,k) \in \Lag_v \right\} \subset [q],
\end{align*}
i.e. the set of time lags with which process $u$ is drving process $v$. 

For the computation of the asymptotic covariance of the estimator $\hat{\Fsep}_{\bullet, v}(z)$ with the delta method we need to compute  for every $u \in \pa(v)$ the Jacobian $\nabla^{\Lag_v} (\fsep_{u, v}(z))$ and the Jacobian $\nabla^{\Lag_v} (\intsep_{v}(z))$ , i.e. the derivative with respect to the parameters $\Phi^{\Lag_v}$. As a preparation we recall some simple but useful identities regarding multiplication and division of complex numbers. Suppose $z = x +iy \in \C\setminus\{0\}$, then its inverse is given as follows
\begin{align*}
    z^{-1} &= \frac{x - iy}{x^2 +y^2}. 
\end{align*}
Furthermore, if $z' = x' + iy'\in \C$, then 
\begin{align*}
    zz' &= (xx' -yy') + (xy'+ yx')i. 
\end{align*}
For some frequency $z \in S^1$ and an ordered set of time lags $\lag = \{l_1< \dots < l_m\}\subset[q]$, we denote as $\mathbf{z}^\lag$ the vectors
\begin{align*}
    \mathbf{z}^\lag &\coloneqq \begin{bmatrix}
        \re(\mathbf{z}^\lag) \\ \imag(\mathbf{z}^\lag) 
    \end{bmatrix}=\begin{bmatrix}
        \re(z^{l_1}) & \cdots & \re(z^{l_m}) \\
        \imag(z^{l_1}) & \cdots & \imag(z^{l_m})
    \end{bmatrix} \in \R^{2 \times |\lag|}
\end{align*}  
If $f$ is a function in the parameters $\Phi^{\Lag_v}$, then we denote the partial derivatives of $f$ with respect to $(\phi^{\Lag_w}_{u,v}(k))_{(u,k) \in \lag_u}$ as
\begin{align*}
    \nabla^{\lag_u} f & \coloneqq \left(\frac{\partial}{\partial \phi^{\Lag_v}_{u,v}(k)}f\right)_{k \in \lag_u}.
\end{align*}
\begin{lemma}\label{lemma: derivative transfer function}
    For $u \in \pa(v)$, the non-zero partial derivatives of the complex transfer function $\fsep_{u,v}(z)$ evaluated at $z \in S^1$ are as follows 
    \begin{align*}
        \nabla^{\lag_u}\fsep_{u,v}(z) &= a_v(z)A(z) \mathbf{z}^{\lag_u} & \nabla^{\lag_v} \fsep_{u,v}(z) &= a_v(z)A_u(z) \mathbf{z}^{\lag_v}, 
    \end{align*}
    where
    \begin{align*}
        a_v(z) &= |1- \varphi_{v,v}(z)|^{-2} \\
        A(z) &= \begin{bmatrix}
            \re(1- \varphi_{v,v}(z)) & -\imag(1- \varphi_{v,v}(z)) \\
            \imag(1- \varphi_{v,v}(z)) & \re(1- \varphi_{v,v}(z))
        \end{bmatrix} \\
        A_{u}(z) &= \begin{bmatrix}
            -\re(\varphi_{u,v}(z)) & \imag(\varphi_{u,v}(z)) \\
            \imag(\varphi_{u,v}(z)) & -\re(\varphi_{u,v}(z))
        \end{bmatrix} -2\begin{bmatrix}
             \re(\fsep_{u, v}(z)) & \re(\fsep_{u, v}(z)) \\
             \imag(\fsep_{u, v}(z))& \imag(\fsep_{u, v}(z))
         \end{bmatrix} 
    \end{align*}
    The non-zero partial derivatives of the internal function of $v$, i.e. $\intsep_v$ are as follows 
    \begin{align*}
        \nabla^{\lag_v} \intsep_v &= a_v(z)A_v(z) \mathbf{z}^{\lag_v},
    \end{align*}
    where
    \begin{align*}
        A_v(z) &=\left(\begin{bmatrix}
            1 & 0 \\ 0 & -1
        \end{bmatrix} - 2 \begin{bmatrix}
            \re(\intsep_v(z)) & \re(\intsep_v(z)) \\ - \imag(\intsep_v(z)) & - \imag(\intsep_v(z))
        \end{bmatrix}\right).
    \end{align*}
\end{lemma}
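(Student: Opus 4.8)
This is a direct differentiation, and the plan is to reduce everything to the two lag polynomials $\varphi_{u,v}(z)$ and $\varphi_{v,v}(z)$, which by (\ref{eq: lag polynomial}) are \emph{affine} in the parameter blocks $(\phi^{\Lag_v}_{u,v}(k))_{k\in\lag_u}$ and $(\phi^{\Lag_v}_{v,v}(j))_{j\in\lag_v}$ respectively. By (\ref{eq: real imag lag polynomial}), the Jacobian of the pair $(\re\varphi_{u,v}(z),\imag\varphi_{u,v}(z))$ with respect to the first block is exactly the matrix $\mathbf{z}^{\lag_u}$, and likewise for $\varphi_{v,v}$ and $\mathbf{z}^{\lag_v}$. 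To carry the complex arithmetic through the chain rule I would use the homomorphism $M(\cdot)$ from (\ref{eq: complex operator}): multiplication by a complex scalar $w$ on $\C\cong\R^2$ is left multiplication by $M(w)$, with $M(w^{-1})=M(w)^{-1}=|w|^{-2}M(w^\ast)$ and, on $S^1$, $M(z)^{-1}=M(z)^\top$.

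First the $u$-block. Since $\intsep_v=(1-\varphi_{v,v})^{-1}$ does not depend on $(\phi_{u,v}(k))_k$ at all, $\nabla^{\lag_u}\intsep_v=0$, which is why the statement only records $\nabla^{\lag_v}\intsep_v$. For the link function, $\fsep_{u,v}=\varphi_{u,v}\cdot\intsep_v$ and only $\varphi_{u,v}$ carries the dependence, and linearly, so $\partial_{\phi_{u,v}(k)}\fsep_{u,v}(z)=z^{k}\,\intsep_v(z)$. In real coordinates this is $M(\intsep_v(z))$ applied to the real representation of $z^k$; writing $M(\intsep_v(z))=|1-\varphi_{v,v}(z)|^{-2}M\big((1-\varphi_{v,v}(z))^\ast\big)=a_v(z)A(z)$ and collecting over $k\in\lag_u$ yields $\nabla^{\lag_u}\fsep_{u,v}(z)=a_v(z)A(z)\mathbf{z}^{\lag_u}$, with $A(z)$ the stated $2\times2$ matrix built from $\re(1-\varphi_{v,v}(z))$ and $\imag(1-\varphi_{v,v}(z))$ (possibly up to a transpose, depending on the sign convention used for the Fourier transform, which I would fix once at the outset).

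Now the $v$-block, which is where the factor $2$ and the correction matrices $A_u(z),A_v(z)$ come from. Here only the denominator depends on $(\phi_{v,v}(j))_j$. Differentiating $(1-\varphi_{v,v})^{-1}$ gives $\partial_{\phi_{v,v}(j)}\intsep_v(z)=\intsep_v(z)^2 z^{j}$ and hence $\partial_{\phi_{v,v}(j)}\fsep_{u,v}(z)=\fsep_{u,v}(z)\,\intsep_v(z)\,z^{j}$; this already proves the claim up to identifying the $2\times2$ matrices. To land precisely on the stated closed forms I would instead differentiate the explicit rational expressions for $\re\fsep_{u,v},\imag\fsep_{u,v}$ recorded just before the Example, together with $\re\intsep_v=\re(1-\varphi_{v,v})/|1-\varphi_{v,v}|^2$ and $\imag\intsep_v=\imag\varphi_{v,v}/|1-\varphi_{v,v}|^2$, using the quotient rule. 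The factor $2$ is exactly $\partial_{\phi_{v,v}(j)}|1-\varphi_{v,v}(z)|^2$; differentiating the numerator produces the ``holomorphic'' block ($\begin{bmatrix}1&0\\0&-1\end{bmatrix}$ for $\intsep_v$, the $\varphi_{u,v}$-block for $\fsep_{u,v}$), and differentiating the denominator produces the rank-one corrections, which after recognising the recurring quotients as $\re\intsep_v,\imag\intsep_v$ resp. $\re\fsep_{u,v},\imag\fsep_{u,v}$ collapse into the $-2[\,\cdots\,]$ terms in $A_v(z)$ resp. $A_u(z)$. Collecting over $j\in\lag_v$ and factoring out $a_v(z)$ gives the two formulas, using $|z|=1$ to absorb any leftover inverse powers of $z$.

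I expect the main obstacle to be purely bookkeeping: separating the genuine complex multiplications (handled cleanly by $M(\cdot)$) from the anti-holomorphic corrections produced by differentiating the squared modulus in the denominator, and fixing all signs and transposes so that $A(z),A_u(z),A_v(z)$ come out exactly as stated. There is no genuine analytic difficulty and no genericity assumption is needed here; I would verify the signs on the one-lag case, or on the running example of Figure \ref{fig: preliminaries tsg}, before committing to the final matrices.
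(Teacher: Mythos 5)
Your plan is correct and follows essentially the same route as the paper's proof: the paper also writes $\fsep_{u,v}(z)=\alpha_v(z)\cdot(\text{polynomial numerator})$ with $\alpha_v(z)=|1-\varphi_{v,v}(z)|^{-2}$, notes that the numerator is affine in $\Phi^{\Lag_v}$ with Jacobian built from $\mathbf{z}^{\lag_u}$, $\mathbf{z}^{\lag_v}$, and obtains the $-2[\cdots]$ correction in $A_u(z)$, $A_v(z)$ from $\nabla^{\lag_v}\alpha_v(z)=-2\alpha_v(z)^2[\cdots]\mathbf{z}^{\lag_v}$ via the product rule, exactly as in your quotient-rule step. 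Your complex-derivative identities $\partial_{\phi_{v,v}(j)}\intsep_v=\intsep_v^2z^j$ and $\partial_{\phi_{v,v}(j)}\fsep_{u,v}=\fsep_{u,v}\intsep_v z^j$ are a valid shortcut to the same result, and your caution about signs/transposes is warranted since the paper's own real/imaginary expansions contain minor sign discrepancies.
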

\begin{proof}
    Let $u\in \pa(v)$ and $\lag_u$ and $\lag_v$ be as defined above. In the following we will denote the real and imaginary parts of the complex valued link function as
    \begin{align*}
        \rfsep_{u,v}(z) &\coloneqq \mathrm{Re}(\fsep_{u,v}(z)) & \ifsep_{u,v}(z) &\coloneqq \mathrm{Im}(\fsep_{u,v}(z)) 
    \end{align*} 
    The rules for complex multiplication and division with respect to real and imaginary parts allow us to decompose the real and imaginary part of the complex transfer function as 
    \begin{align} \label{eq: real and complex transfer}
        \rfsep_{u,v}(z) &= \alpha_v(z) \tilde{\rfsep}_{u,v}(z) & \ifsep_{u,v}(z) &= \alpha_{v}(z) \tilde{\ifsep}_{u,v}(z),
    \end{align}
    where 
    \begin{align*}
        \tilde{\rfsep}_{u,v}(z) &= \mathrm{Re}(\varphi_{u,v}(z)) \mathrm{Re}(1- \varphi_{v,v}(z)) + \mathrm{Im}(\varphi_{u,v}(z))\mathrm{Im}(\varphi_{v,v}(z)) \\
        \tilde{\ifsep}_{u,v}(z) &= -\mathrm{Re}(\varphi_{u,v}(z))\mathrm{Im}(\varphi_{v,v}(z)) + \mathrm{Im}(\varphi_{u,v})\mathrm{Re}(1- \varphi_{v,v}(z)), 
    \end{align*}
    and
    \begin{align*}
        \alpha_v(z) &= |1- \varphi_{v,v}(z)|^{-2}.  
    \end{align*}

    The vector of partial derivatives with respect to the coefficients $(\phi^{\Lag_v}_{u,v}(k))_{k \in \lag_u}$ are linear combinations of the vectors $\re(\mathbf{z}^{\lag_v})$ and $\imag(\mathbf{z}^{\lag_v})$, i.e., 
    \begin{align*}
        \nabla^{\lag_u}\tilde{\rfsep}_{u,v}(z) &= \re(1 - \varphi_{v,v}(z) \re(\mathbf{z}^{\lag_u}) + \imag(\varphi_{v,v}(z)) \imag(\mathbf{z}^{\lag_v}) \\
        \nabla^{\lag_u}\tilde{\ifsep}_{u,v}(z) &= -\imag(\varphi_{v,v}(z)) \re(\mathbf{z}^{\lag_u})+ \re(1-\varphi_{v,v}(z)) \imag(\mathbf{z}^{\lag_u})
    \end{align*}
     From Equation (\ref{eq: real and complex transfer}) and the fact that $\alpha_v(z)$ is a function of the variables $(\phi^{\Lag_v}_{v,v}(k))_{k \in \lag_v}$ only, we obtain that
    \begin{align*}
        \nabla^{\lag_u} \fsep_{u,v}(z) &= \alpha_v(z)(1- \varphi_{v,v}(z)) \mathbf{z}^{\lag_u} 
    \end{align*}
    which shows the first part of the lemma. 
    
    We proceed with the computation of the partial derivatives with respect to the coefficients $(\Phi^{\Lag_v}_{v,v}(k))_{k \in \lag_v}$, which is also a linear combination of the vectors $\re(\mathbf{z}^{\lag_v})$ and $\imag(\mathbf{z}^{\lag_v})$, i.e.  
    \begin{align*}
        \nabla^{\lag_v}\tilde{\rfsep}_{u,v}(z) &= -\re(\varphi_{u,v}(z)) \re(\mathbf{z}^{\lag_v}) + \imag(\varphi_{u,v}(z)) \imag(\mathbf{z}^{\lag_v}) \\
        \nabla^{\lag_v}\tilde{\ifsep}_{u,v}(z) &= -\re(\varphi_{u,v}(z) \imag(\mathbf{z}^{\lag_v}) + \imag(\varphi_{u,v}(z)) \re(\mathbf{z}^{\lag_v})  \\
    \end{align*}
    Applying the product rule to Equation (\ref{eq: real and complex transfer}) yields  
    \begin{align}\label{eq: proof derivative transfer function}
        \nabla^{\lag_v} \fsep_{u,v}(z) &= \alpha_{v}(z)\begin{bmatrix}
            \nabla^{\lag_v}\tilde{\rfsep}_{u,v}(z) \\  \nabla^{\lag_v}\tilde{\ifsep}_{u,v}(z)
        \end{bmatrix}\left[\nabla^{\lag_v}\tilde{\fsep}_{u,v}(z) \right] + \begin{bmatrix}
             \tilde{\rfsep}_{u,v}(z) \\  \tilde{\ifsep}_{u,v}(z)
        \end{bmatrix} \left[\nabla^{\lag_v}\alpha_v(z)\right].
    \end{align}
    Furthermore, we compute that 
    \begin{align*}
        \nabla^{\lag_v} a_v(z) = -2a_v(z)^2 \begin{bmatrix}
            1 & 1 \\
            0 & 0 
        \end{bmatrix}
        \mathbf{z}^{\lag_v}.
    \end{align*}
    As a result we get for the second summand in Equation (\ref{eq: proof derivative transfer function}) the following expression:
    \begin{align*}
         \begin{bmatrix}
             \tilde{\rfsep}_{u,v}(z) \\  \tilde{\ifsep}_{u,v}(z)
        \end{bmatrix} \left[\nabla^{\lag_v}\alpha_v(z)\right] &= \begin{bmatrix}
             \tilde{\rfsep}_{u, v}(z) & \tilde{\rfsep}_{u, v}(z) \\
             \tilde{\ifsep}_{u, v}(z) & \tilde{\ifsep}_{u, v}(z)
         \end{bmatrix} \mathbf{z}^\lag
    \end{align*}
    So, we conclude with the following formula for the Jacoobian  
    \begin{align*}
        \nabla^{\lag_v} \fsep_{u, v} &= \alpha_v(z)(\begin{bmatrix}
            -\re(\varphi_{u,v}(z)) & \imag(\varphi_{u,v}(z)) \\
            \imag(\varphi_{u,v}(z)) & -\re(\varphi_{u,v}(z))
        \end{bmatrix} -2\begin{bmatrix}
             \rfsep_{u, v}(z) & \rfsep_{u, v}(z) \\
             \ifsep_{u, v}(z) & \ifsep_{u, v}(z)
         \end{bmatrix} )\mathbf{z}^\lag
    \end{align*}
    The derivative of the function $\intsep_v(z)$ is computed in a similar manner. 
\end{proof}
Let $z \in S^1$ be a frequency and $u_1, u_2\in \pa(v) \cup \{v\}$ two processes, then we define the $z$-dependent matrix  
\begin{align*}
    P^{\Lag_v}_{u_1,u_2}(z) \coloneqq \left[\mathbf{z}^{\lag_{u_1}} \right] P^{\Lag_v}_{u_1,u_2}  \left[ \mathbf{z}^{\lag_{u_2}} \right]^\top \in \R^{2 \times 2}. 
\end{align*}
The following Proposition is a more detailed version of Proposition \ref{prop: asymptotic normality link functions} in that it states the exact expressions for the asymptotic covariance of the link and internal function estimators.
\begin{proposition}\label{prop: asymptotic covariance link functions}
    Let $A: S^1 \to \R^{2\times2}$ and  $A_u: S^1 \to \R^{2\times2}$ for $u \in \pa(v) \cup \{v\}$ be the functions from Lemma \ref{lemma: derivative transfer function}, and for $u_1,u_2 \in \pa(v) \cup \{v\}$ the asymptotic covariance of the corresponding link function is given as
    \begin{align*}
            \acov^{\Lag_v}(u_1, u_2; z) &= a_v(z)^2 \begin{bmatrix}
                A(z) & A_{u_1}(z)
            \end{bmatrix}
            \begin{bmatrix}
                P^{\Lag_v}_{u_1,u_2}(z) & P^{\Lag_v}_{u_1,v}(z)  \\
                P^{\Lag_v}_{v,u_2}(z) & P^{\Lag_v}_{v,v}(z) 
            \end{bmatrix}
            \begin{bmatrix}
                A(z) & A_{u_2}(z)
            \end{bmatrix}^\top
    \end{align*}
    If $u \in \pa(v)$, then  
    \begin{align*}
        \acov^{\Lag_v}(u, v; z) & = a_v(z)^2 \begin{bmatrix}
                A(z) & A_{u}(z)
            \end{bmatrix}
            \begin{bmatrix}
                P^{\Lag_v}_{u,v}(z)  \\
                P^{\Lag_v}_{v,v}(z) 
            \end{bmatrix}
            \begin{bmatrix}
                A_{v}(z)
            \end{bmatrix}^\top
    \end{align*}
    Finally,  
    \begin{align*}
        \acov^{\Lag_v}(v,v;z) &= a_v(z)^2 \begin{bmatrix}
             A_{v}(z)
            \end{bmatrix}
            P^\Lag_{v,v}(z)
            \begin{bmatrix}
                A_{v}(z)
            \end{bmatrix}^\top
    \end{align*}
\end{proposition}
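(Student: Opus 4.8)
The plan is to derive all three identities as a single application of the multivariate delta method to the OLS estimator $\hat{\Phi}^{\Lag_v}$, feeding in the Jacobians already computed in Lemma \ref{lemma: derivative transfer function}. The starting point is the asymptotic normality (\ref{eq: asymptotic normality phi}), i.e.\ $\sqrt{T}(\hat{\Phi}^{\Lag_v}-\Phi^{\Lag_v}) \to_d \mathcal{N}(\mathbf{0}, P^{\Lag_v})$ with $P^{\Lag_v} = \omega_v(\Sigma^{\Lag_v})^{-1}$. I would first observe that every entry of $\Fsep_{\bullet, v}(z)$ --- the real and imaginary parts of $\intsep_v(z)$ and of each $\fsep_{u,v}(z)$ for $u \in \pa(v)$ --- is a function of $\Phi^{\Lag_v}$ alone (through the lag polynomials $\varphi_{u,v}$ and $\varphi_{v,v}$, whose coefficients all live in $\Phi^{\Lag_v}$), and that this map is differentiable at every $\Phi$ for which $|1-\varphi_{v,v}(z)|^2 \neq 0$. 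The delta method then gives that the asymptotic covariance block between the coordinates indexed by $u_1$ and $u_2$ equals $J_{u_1}\, P^{\Lag_v}\, J_{u_2}^{\top}$, where $J_{u}$ denotes the $2 \times |\Lag_v|$ Jacobian (with respect to $\Phi^{\Lag_v}$) of $(\re, \imag)\,\fsep_{u,v}(z)$, and analogously with $\intsep_v$ replacing a link function.

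The second step is pure block-matrix bookkeeping. By Lemma \ref{lemma: derivative transfer function} the Jacobian $J_u$ of a link function is supported only on the columns indexed by $\lag_u$ and by $\lag_v$ (the latter entering via $\varphi_{v,v}$), with $\nabla^{\lag_u}\fsep_{u,v}(z) = a_v(z)A(z)\mathbf{z}^{\lag_u}$ and $\nabla^{\lag_v}\fsep_{u,v}(z) = a_v(z)A_u(z)\mathbf{z}^{\lag_v}$, while the Jacobian of $\intsep_v$ is supported on the $\lag_v$ columns only, with $\nabla^{\lag_v}\intsep_v(z) = a_v(z)A_v(z)\mathbf{z}^{\lag_v}$. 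Substituting the block structure $P^{\Lag_v} = [P^{\Lag_v}_{u_1,u_2}]_{u_1,u_2 \in \pa(v)\cup\{v\}}$ into $J_{u_1}P^{\Lag_v}J_{u_2}^{\top}$, only the sub-blocks indexed by $\{u_1, v\}\times\{u_2, v\}$ survive; each of the resulting products contracts a vector $\mathbf{z}^{\lag_{\bullet}}$ against a $P^{\Lag_v}_{\bullet,\bullet}$ block on each side, and by the definition $P^{\Lag_v}_{u_1,u_2}(z) = \mathbf{z}^{\lag_{u_1}}P^{\Lag_v}_{u_1,u_2}(\mathbf{z}^{\lag_{u_2}})^{\top}$ these contractions are exactly the $z$-dependent blocks appearing in the statement, while the two factors $a_v(z)$ --- one from each Jacobian --- collect into the prefactor $a_v(z)^2$. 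Reading this off gives the first identity (the full $2 \times 2$ case, with $A$ on one side of each Jacobian and $A_{u_i}$ on the other); restricting the right-hand Jacobian to its $\lag_v$ columns only (because $\intsep_v$ does not involve $\varphi_{u,v}$) gives the mixed link/internal identity, with a single $A_v(z)$ on the right; and the pure internal case is the same computation with both Jacobians equal to $a_v(z)A_v(z)\mathbf{z}^{\lag_v}$.

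Finally I would note what is \emph{not} being reproved here: the positive-definiteness of the asymptotic covariance of $\hat{\Fsep}_{\bullet, v}(z)$ is inherited from Proposition \ref{prop: asymptotic normality link functions} (equivalently, from the full-rank property of the stacked Jacobian of Lemma \ref{lemma: derivative transfer function} under the AL2 assumption \ref{assumption: AL2} and genericity of $\Phi$), and the rationality in the SVAR coefficients is immediate once one observes that $a_v(z)$, the entries of $A(z)$, $A_u(z)$, $A_v(z)$, and the entries of $(\Sigma^{\Lag_v})^{-1}$ --- hence of $P^{\Lag_v}$ --- are all rational in $\Phi$. I do not expect a genuine obstacle: the one place demanding care is keeping track, in the block product, of which sub-blocks of $P^{\Lag_v}$ are hit and on which side the factor $\mathbf{z}^{\lag_{u_i}}$ versus $\mathbf{z}^{\lag_v}$ lands, but this is routine linear algebra, the substantive work having already been done in the derivative computation of Lemma \ref{lemma: derivative transfer function}.
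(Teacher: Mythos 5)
Your proposal is correct and follows essentially the same route as the paper: a single application of the delta method to $\hat{\Phi}^{\Lag_v}$ using the Jacobians of Lemma \ref{lemma: derivative transfer function}, followed by block-matrix bookkeeping with the block structure of $P^{\Lag_v}$, collecting the two $a_v(z)$ factors into $a_v(z)^2$ and identifying the contractions $\mathbf{z}^{\lag_{u_1}}P^{\Lag_v}_{u_1,u_2}(\mathbf{z}^{\lag_{u_2}})^{\top}$ with the $z$-dependent blocks. Your remark that positive definiteness and rationality are established separately also matches the paper's organisation, so nothing is missing.
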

\begin{proof}
    In Lemma \ref{lemma: derivative transfer function} we computed the derivative of the function $\Fsep_{\bullet, v}(z)$ as a function of the parameters $\Phi^\Lag$. By the delta method, the asymptotic covariance of $\hat{\Fsep}_{\bullet, v}(z)$ is as follows 
    \begin{align*}
        \acov^{\Lag_v}(z) &= [\nabla^{\Lag_v} \Fsep_{\bullet, v}(z)] P^{\Lag_v} [\nabla^{\Lag_v}\Fsep_{\bullet, v}(z)]^\top \\
        &= [\acov^{\Lag_v}(u_1,u_2)]_{u_1, u_2 \in \pa(v) \cup \{v\}} \in \R^{(\pa(v)\cup \{v\}) \times (\pa(v)\cup \{v\})} \otimes \R^{2 \times 2}.
    \end{align*}
    Suppose $u_1,u_2 \in \pa(v)$, then the corresponding entry in the asymptotic block covariance matrix is
    \begin{align*}
        \acov^{\Lag_v}(u_1, u_2; z) &=[\nabla^{\Lag_v} \fsep_{u_1, v}(z)] P^{\Lag_v} [\nabla^{\Lag_v} \fsep_{u_1, v}(z)]^\top \\
        &= A(z)(\mathbf{z}_v^{\lag_{u_1}} P^{\Lag_v}_{u_1,u_2} [\mathbf{z}_v^{\lag_{u_2}}]^\top) A(z)^\top + A_{u_1}(z) (\mathbf{z}_v^{\lag_v} P^{\Lag_v}_{v,v} [\mathbf{z}_v^{\lag_v}]^\top) A_{u_2}(z)^\top \\
        &+ A(z)(\mathbf{z}_v^{\lag_{u_1}} P^{\Lag_v}_{u_1,v} [\mathbf{z}_v^{\lag_v}]^\top) A_{u_2}(z)^\top + A_{u_1}(z) (\mathbf{z}_v^{\lag_v} P^{\Lag_v}_{v,u_2} [\mathbf{z}_v^{\lag_{u_2}}]^\top) A(z)^\top \\
        &= \begin{bmatrix}
            A(z) & A_{u_1}(z)
        \end{bmatrix} \begin{bmatrix}
            P_{u_1,u_2}^{\Lag_w}(z) & P_{u_1,v}^{\Lag_w}(z) \\
            P_{v,u_2}^{\Lag_w}(z) & P_{v,v}^{\Lag_w}(z)
        \end{bmatrix}
        \begin{bmatrix}
            A(z)^\top \\
            A_{u_2}(z)^\top
        \end{bmatrix}
    \end{align*}
    Similarly, we compute for any $u \in \pa(v)$ that 
    \begin{align*}
        \acov^{\Lag_v}(u,v; z) = \begin{bmatrix}
            A(z) & A_u(z) 
        \end{bmatrix} 
        \begin{bmatrix}
            P^{\Lag_v}_{u, v} \\ P^{\Lag_v}_{v,v} 
        \end{bmatrix}
        \begin{bmatrix}
            A_v(z)^\top 
        \end{bmatrix} = \acov^{\Lag_v}(v, u; z)^\top,
    \end{align*}
    and finally
    \begin{align*}
        \acov^{\Lag_v}(v,v,z) &= \left[A_v(z)\right] \left[ P^{\Lag_v}_{v,v}(z) \right] \left[ A_v(z) \right] ^\top.
    \end{align*}
\end{proof}
It remains to prove that the link function estimator is asymptotically normal for generic parameter choices. First, let $v\in V$ be a process index and $\pa(v) =\{u_1, \dots, u_n \}$ its parent processes. We wish to write the asymptotic covariance as a matrix product like this
\begin{align*}
    \acov^{\Lag_v}(z) &= \alpha_v(z)^2\left[\mathbf{A}(z)\right] \mathbf{Z} P^{\Lag_v} \mathbf{Z}^\top \left[\mathbf{A}(z) \right]^\top.
\end{align*}
In order to see that the estimate of the link-functions and the internal function follows a joint normal distribution for generic parameter choices, we need to show that the asymptotic covariance matrix is invertible for generic parameter choices. To get the matrix product above, we need some notation for block structured matrices. 

Suppose we have disjoint sets $I_1, \cdots, I_m$ and $J_1, \dots, J_m$, and matrices $A_1, \dots, A_m$ such that $A_k \in \R^{I_k \times J_k}$, then we define the block diagonal matrix $\mathrm{diag}(A_i)_{i \in [m]} \in \R^{I \times J}$, where $I = \cup_{k=1}^m I_k$ and $J = \bigcup_{l=1}^m J_l$, at the entry $(i,j)$ to be $[A_k]_{i,j}$ if $(i,j) \in I_k \times J_k$ for some $k$, and zero otherwise. Let us consider an example. 
\begin{example}
    Suppose $I_1 = [(a,1), (a,2)]$ and $I_2 = [(b,1), (b,2), (b,3)]$ and $J_1 = [(c, 1), (c, 2)]$ and $J_2 = [(d, 1), (d,2)]$. Further let 
    \begin{align*}
        A_1 &= \begin{bmatrix}
            1 & 1 \\
            1 & 1
        \end{bmatrix} \in \R^{I_1 \times J_1} & 
        A_2 &= \begin{bmatrix}
            1 & 1 \\
            1 & 1 \\
            1 & 1
        \end{bmatrix} \in \R^{I_2 \times J_2}
    \end{align*}
    Accordingly, 
    \begin{align*}
        \mathrm{diag}(A_1, A_2) &= \begin{bmatrix}
            A_1 & \mathbf{0} \\
            \mathbf{0} & A_2
        \end{bmatrix}
        = \begin{bmatrix}
            1 & 1 & 0 & 0 \\
            1 & 1 & 0 & 0 \\
            0 & 0 & 1 & 1 \\
            0 & 0 & 1 & 1 \\
            0 & 0 & 1 & 1 
        \end{bmatrix}
    \end{align*}
\end{example}
Similarly, we define vertical stacking of matrices. Suppose $I_1, \dots, I_m$ are mutually disjoint finite sets and $J$ a finite set, and let $B_{1}, \dots B_m$ be matrices such that $B_k \in \R^{I_k \times J}$. Then we define $\mathrm{vst}(B_{k}) \in \R^{I \times J}$, with $I= \bigcup_{k=1}^m I_k$, to be the matrix that one obtains by vertically stacking the matrices $B_k$.

With this notation we define the block upper triangular matrices 
\begin{align*}
    \mathbf{A}(z) & \coloneqq\begin{bmatrix}
        \mathrm{diag}\left[A(z)\right]_{u \in \pa(v)} & \mathrm{vst}\left[ A_{u}(z) \right]_{u \in \pa(v)} \\
        \mathbf{0} & A_v(z)
    \end{bmatrix} & \mathbf{Z} & \coloneqq \begin{bmatrix}
        \mathrm{diag}\left[\mathbf{z}^{\lag_u}\right]_{u \in \pa(v)} & \mathrm{vst}\left[ \mathbf{z}^{\lag_v}\right]_{u \in \pa(v)} \\
        \mathbf{0} & \mathbf{z}^{\lag_v}
    \end{bmatrix}
\end{align*}
Since the matrix $P^{\Lag_v}$ is a the inverse of a covariance matrix it is positive definite. So the matrix product
\begin{align}\label{eq: frequency precision matrix invertible}
    \mathbf{Z} P^{\Lag_v} \mathbf{Z}^\top
\end{align}
is positive definite if the matrix $\mathbf{Z}$ has full rank, i.e. if it has rank $2(n+1)$. This is the case if $\Lag_v$ satisfies Assumption \ref{assumption: AL2}, and, additionally if $z$ is such that for every $u \in \pa(v) \cup \{ v \}$ there are $k,j \in \lag_u$ such that 
\begin{align*}
    0 &\neq z^k - z^j & 0 \neq z^k + z^j. 
\end{align*}
This requirement makes sure that there are two linear independent columns in $\mathbf{z}^{\lag_u} \in \R^{2 \times |\lag_u|}$. Note that for $|\lag_u| \geq 2$ the matrix $\mathbf{Z} P^{\Lag_v} \mathbf{Z}^\top$ is positive definite for all but finitely many $z\in S^1$. 
These conditions do not depend on the parameter $\Phi$ of the SVAR process. Instead they depend on the choice of $\Lag_v$.
Finally, the matrix $\mathbf{A}(z)$ is invertible if and only if its determinant is non-zero. And its determinant is a rational function in the parameters $\Phi^{\Lag_v}$. Specifically, 
\begin{align*}
    \det(\mathbf{A}(z)) &= \det(A(z))^{n} \det(A_{v}(z))
\end{align*}
By (\ref{assimption: stability}) it holds that $\det(A(z)) = |1- \varphi_{v,v}(z)|^2 \neq 0$. So, to show that the matrix $\mathbf{A}(z)$ is invertible, we need to show that $\det(A_v(z))$ is not the zero function (as a rational function of the indeterminates $\phi^{\Lag_v}_{v,v}(k)$). So let us compute the determinant explicitly
\begin{align*}
    \det(A_v(z)) &= \det\begin{bmatrix}
        1 - 2 \re(\intsep_v(z)) & - 2 \re(\intsep_v(z)) \\
        2 \imag(\intsep_v(z)) & -1 + 2\re(\intsep_v(z))
    \end{bmatrix} \\
    &= - 1 + 2\re(\intsep_v(z)) + 2\imag(\intsep_v(z))
\end{align*}
Since $\intsep_v = (1- \varphi_{v,v})^{-1}$ it follows that $\det(A_v(z))$ is non-zero if and only if the following expression is non-zero
\begin{align*}
    P(\phi_{v,v}^{\Lag_v}) &= - |1- \varphi_{v,v}(z)|^2 + 2(1- \re(\varphi_{v,v}(z))) + 2 \imag(\varphi_{v,v}(z)) \\
    &= -1 + \sum_{l \in \lag_v} \beta_l\phi^{\Lag_v}_{v,v}(l) + \sum_{k,j \in \lag_v} \gamma_{j,k} \phi^{\Lag_v}(k)\phi^{\Lag_v}_{v,v}(j),
\end{align*}
where $\beta_l= \imag(z^l)$ and $\gamma_{j,k} = \re(z^j)\re(z^k) + \imag(z^j)\imag(z^k)$. In particular, $P$ is a non-zero polynomial in $\phi^{\Lag_v}(k)$, so for generic choices of stable SVAR parameters it is non-zero. So, if $\Lag_v$ satisfies Assumption \ref{assumption: AL2}, then for almost all $z\in S^1$ and almost all SVAR parameter configurations $\Phi$, the OLS-based estimator $\hat{\Fsep_{\bullet, v}}(z)$ is asymptotically normal with block covariance matrix $\acov^{\Lag_v}(\bullet, \bullet; z)$. 

\subsection{Proof of Proposition \ref{prop: asymptotic covariance path function} and Proposition \ref{prop: asymptotic covariance weighted path function}}
Let $\G$ be a time series DAG with associated process graph $G= (V,D)$. For each process $v \in V$, we select finite sets of lagged parents $\Lag_v \supseteq \Lag_v^\G$ that satisfy condition (\ref{condition: time lags}).   
We will now prove Proposition \ref{prop: asymptotic covariance weighted path function}. The proof of Proposition \ref{prop: asymptotic covariance path function} is done in exactly the same way.  
\begin{proof}[Proof of Proposition \ref{prop: asymptotic covariance weighted path function}]
    
Suppose $\pi = x_1 \to \cdots \to x_n$ and $\rho = y_1 \to \cdots \to y_q$ are directed paths on $G$. In the following we use $P^\Lag$ to denote the matrix
$\mathrm{diag}[\omega_2\left(\Sigma^{\Lag_1}\right)^{-1} , \dots , \omega_m \left(\Sigma^{\Lag_m}\right)^{-1}]$, which can be considered as a block matrix, where each block is indexed by a pair of numbers in $[1,m]$, i.e. 
\begin{align*}
    P^{\Lag} = [P^\Lag_{i,j}]_{i,j \in [1,m]},
\end{align*} where 
\begin{align*}
    P^{\Lag}_{i,j}= \begin{cases}
        P^{\Lag_i} =\omega_i (\Sigma^{\Lag_i})^{-1} & \text{if } i = j \\
        \mathbf{0} & \text{if } i \neq j
    \end{cases}
\end{align*} Furthermore, we consider submatrices of $P^\Lag$ that correspond to pairs of paths $\pi$ and $\rho$, and denote them as follows  
\begin{align*}
    P^{\Lag(\pi), \Lag(\rho)} = [P^\Lag_{x_i, y_j}]_{i\in [1,m], j \in[1, q]}. 
\end{align*}
Note that $P^{\Lag(\pi), \Lag(\rho)}$ is zero if $\pi$ and $\rho$ do not share a single vertex. On the other hand, if $\pi = \rho$, then $P^{\Lag(\pi), \Lag(\rho)}$ is block diagonal. By the delta method, the asymptotic covariance of the estimators $\hat{\csep}^{(\pi)}$ and $\hat{\csep}^{(\rho)}$ is
\begin{equation} \label{eq: proof asymptotic covariance spectral contribution}
    \begin{split}
        \acov^{\Lag}(\pi^\internal, \rho^\internal) &= \nabla^{\Lag}\csep^{(\pi)} P^\Lag (\nabla^\Lag \csep^{(\rho)})^\top \\
        &= \nabla^{\Lag(\pi)} \csep^{(\pi)} P^{\Lag(\pi), \Lag(\rho)} (\nabla^{\Lag(\rho)}\csep^{(\rho)})^\top \\
        &= \sum_{i=1}^{m}\sum_{j=1}^{q} (\nabla^{\Lag_{x_i}} \csep^{(\pi)}) P^{\Lag}_{x_i, y_j} (\nabla^{\Lag_{y_j}}\csep^{(\rho)})^\top \\
        &= \sum_{x \in V(\pi \cap \rho)} (\nabla^{\Lag_{x}} \csep^{(\pi)}) P^{\Lag}_{x, x} (\nabla^{\Lag_{x}}\csep^{(\rho)})^\top
    \end{split}
\end{equation}
where $\nabla^{\Lag(\pi)}$ resp. $\nabla^{\Lag(\rho)}$ denotes taking derivative with respect to the parameters $\Phi^{\Lag_{x_i}}$ resp. $\Phi^{\Lag_{y_j}}$, and $V(\pi  \cap \rho) = \{x_i\}_{i \in [m]} \cap \{y_j \}_{j \in [q]}$. This holds as $\csep^{(\pi)}$ resp. $\csep^{(\rho)}$ are functions only in the parameters $\Phi^{\Lag_{x_i}}$ resp. $\Phi^{\Lag_{y_j}}$, and because of the block diagonal structure of $P^{\Lag}$. 

We will now calculate $\nabla^{\Lag_x} \csep^{(\pi)}$ for every $x$ visited by $\pi$. First, we observe that for two distinct functions $F: \R^{k} \to \R^{m \times m}$ and $G: \R^{l} \to \R^{m \times m}$ that the derivative of the point-wise product is $\nabla(FG) = (\nabla^{\mathbf{x}}(F) G , F \nabla^{\mathbf{y}} (G))$. In addition, if $F(\mathbf{x}) G(\mathbf{y}) = G(\mathbf{y})F(\mathbf{x})$, then its derivative is $\nabla(FG) = (G \nabla^{\mathbf{x}}(F), F \nabla^{\mathbf{y}} (G))$. 

Suppose $z_{k} = x_k + iy_k \in \C$ for $k \in [1,m]$ and $x + iy = \prod_{k=1}^m z_k$ is the product of the $z_k$, i.e. $x$ is the real part of the product over the $z_k$'s and $y$ the imaginary part of the product, then 
\begin{align*}
    \begin{bmatrix}
        x \\ y
    \end{bmatrix} &= \left(\prod_{k=1}^m Z_k\right) \begin{bmatrix}
        1 \\ 0
    \end{bmatrix} & Z_k &\coloneqq \begin{bmatrix}
        x_k & -y_k \\ y_k & x_k
    \end{bmatrix}
\end{align*}
Of course, all the matrices on the right hand side commute. 
Recall that the function $\csep^{(\pi)} = \intsep_{x_1}\prod_{i=1}^{m-1} \fsep_{x_i, x_{i+1}}$, where each term in the product is considered as a complex number. We wish to compute the derivative of its real and imaginary part. For that purpose we use the above observation and define for a link $u \to v$ on the process graph the matrices
\begin{align*}
    F_u &= \begin{bmatrix}
        \re(\intsep_{u}) & - \imag(\intsep_{u}) \\   \imag(\intsep_{u}) & \re(\intsep_{u}) 
    \end{bmatrix} &
    H_{u, v} = \begin{bmatrix}
        \re(\fsep_{u, v}) & - \imag(\fsep_{u, v}) \\ \imag(\fsep_{u, v}) & \re(\fsep_{u, v})
    \end{bmatrix}
\end{align*}
Let now $i \in [m-1]$, then we use the above observations to compute  
\begin{align} \label{eq: derivative  path function}
    \nabla^{\Lag_{x_i}} \csep^{(\pi)} &= \begin{cases}
        (\prod_{j =1}^{m}H_{x_j, x_{j+1}}) \nabla^{\Lag_{x1}} \intsep_{x_1}  & \text{if } i=1  \\
        (F_{x_1} \prod_{j \neq i-1}^m H_{x_j, x_{j+1}})  \nabla^{\Lag_{x_i}} \fsep_{x_{i-1}, x_{i}} & \text{if } i > 1
    \end{cases},
\end{align}
where we used the commutativity of the matrices and the fact the parameter $\Phi^{\Lag_{x_i}}$ only affects the factor $H_{x_{i-1}, x_i}$ if $ i >1$ and only $F$ if $i=1$. Plugging the expression from Equation \ref{eq: derivative  path function} into the last line of Equation \ref{eq: proof asymptotic covariance spectral contribution} we get 
\begin{align*}
    \acov^{\Lag}(\pi^\internal, \rho^\internal) &=\sum_{(i,j): x_i = y_j } \csep^{(\pi \setminus x_i)} \left[\nabla^{\Lag_{x_i}} \fsep_{x_{i-1}, x_i} \right] P^{\Lag}_{x_i, y_j} \left[\nabla^{\Lag_{y_j}} \fsep_{y_{j-1}, y_j}\right]^{\top}(\csep^{(\rho\setminus y_{j}) })^\top \\
    &= \sum_{(i,j): x_i = y_j } \csep^{(\pi \setminus x_{i})} \left[\acov^{\Lag_{x_i}}(x_{i-1}, y_{j-1})\right] (\csep^{(\rho\setminus y_{j}) })^\top, 
\end{align*}
where $x_0 = x_1$ and $y_0 = y_1$ and
\begin{align*}
    \fsep_{x_0, x_1} &\coloneqq \intsep_{x_1} & 
    \csep^{(\pi \setminus x_i )} &\coloneqq \begin{cases}
        \prod_{j =1}^{m}H_{x_j, x_{j+1}} & \text{if } i=1 \\
        F_{x_1} \prod_{j \neq i-1}^m H_{x_j, x_{j+1}} & \text{if } i > 1
    \end{cases}
\end{align*}
The notations regarding the path $\rho$ are analogous.
\end{proof}
\subsection{Proof Proposition \ref{prop: asymptotic covariance spectral contribution}}
\begin{lemma}
    Suppose $v\in V$ is a process with not necessarily distinct ancestors $u, u'\in \anc(v)$. Furthermore, let $\pi = \pi_1 + \pi_2$ and $\rho = \rho_1 + \rho_2$ concatenations of paths on the process graph, where $\pi_1 \in \paths(u,v)$ and $\rho_1 \in \paths(u',v)$, and $\pi_2, \rho_2 \in \Pi \subset \paths(v,w)$, then it holds that
    \begin{align}
        \acov^{\Lag}(\pi^\internal, \rho^\internal) &= \fsep^{(\pi_2)}\acov^\Lag(\pi_1^\internal, \rho_1^\internal)\left( \fsep^{(\rho_2)}\right)^\ast + \csep^{(\pi_1)}\acov^\Lag(\pi_2, \rho_2)\left(\csep^{(\rho_1)}\right)^\ast.
    \end{align}
\end{lemma}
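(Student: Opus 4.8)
The plan is to reduce the identity to two facts that are already available: the multiplicativity of path functions under concatenation, and the block-diagonal structure of the asymptotic covariance of $\hat{\Phi}^\Lag$ proved in Lemma~\ref{lemma: block diagonal structure of }. I would then run the delta method in exactly the shape of the proof of Proposition~\ref{prop: asymptotic covariance weighted path function}. The starting observation is the factorisation $\csep^{(\pi)}=\csep^{(\pi_1)}\fsep^{(\pi_2)}$ and $\csep^{(\rho)}=\csep^{(\rho_1)}\fsep^{(\rho_2)}$: indeed $\fsep^{(\pi_1+\pi_2)}=\fsep^{(\pi_1)}\fsep^{(\pi_2)}$ by~(\ref{eq: path function}), and since $\csep^{(\pi)}=\fsep^{(\pi)}\intsep_u$ with $u$ the common source of $\pi$ and $\pi_1$, we get $\csep^{(\pi)}=(\fsep^{(\pi_1)}\intsep_u)\fsep^{(\pi_2)}=\csep^{(\pi_1)}\fsep^{(\pi_2)}$.

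Next I would determine which vertex blocks of $\Phi^\Lag$ each of $\csep^{(\pi_1)},\fsep^{(\pi_2)},\csep^{(\rho_1)},\fsep^{(\rho_2)}$ actually depends on. Writing $V(\sigma)$ for the vertex set of a path $\sigma$, the weighted upstream function $\csep^{(\pi_1)}$ depends only on the blocks $\Phi^{\Lag_x}$ with $x\in V(\pi_1)$ (on $\Phi^{\Lag_v}$ via the link pointing into $v$, on $\Phi^{\Lag_u}$ via $\intsep_u$), while the unweighted downstream function $\fsep^{(\pi_2)}$ depends only on the blocks $\Phi^{\Lag_x}$ with $x\in V(\pi_2)\setminus\{v\}$; in particular $\fsep^{(\pi_2)}$ does \emph{not} involve $\Phi^{\Lag_v}$, because the only link functions carrying $\Phi^{\Lag_v}$ are those pointing into $v$ and $\pi_2$ contains none. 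Since $\pi$ is cycle-free, $V(\pi_1)\cap V(\pi_2)=\{v\}$, so $\csep^{(\pi_1)}$ and $\fsep^{(\pi_2)}$ rest on disjoint blocks; the same for $\rho$. Finally, as $\pi_1,\rho_1$ lie upstream and $\pi_2,\rho_2$ downstream of $v$, one also has $V(\pi_1)\cap V(\rho_2)\subseteq\{v\}$ and $V(\pi_2)\cap V(\rho_1)\subseteq\{v\}$.

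Then I would expand $\acov^\Lag(\pi^\internal,\rho^\internal)=\nabla^\Lag\csep^{(\pi)}\,P^\Lag\,(\nabla^\Lag\csep^{(\rho)})^\top$ with $P^\Lag=\mathrm{diag}[\omega_i(\Sigma^{\Lag_i})^{-1}]_i$ block-diagonal over vertices (Lemma~\ref{lemma: block diagonal structure of }). Using the product rule for commuting complex-valued maps from the proof of Proposition~\ref{prop: asymptotic covariance weighted path function}, namely $\nabla(FG)=(M(G)\nabla^{\mathbf x}F,\ M(F)\nabla^{\mathbf y}G)$ for $F=F(\mathbf x)$, $G=G(\mathbf y)$, applied with $F=\csep^{(\pi_1)}$ and $G=\fsep^{(\pi_2)}$, the block $\nabla^{\Lag_x}\csep^{(\pi)}$ equals $M(\fsep^{(\pi_2)})\nabla^{\Lag_x}\csep^{(\pi_1)}$ for $x\in V(\pi_1)$, equals $M(\csep^{(\pi_1)})\nabla^{\Lag_x}\fsep^{(\pi_2)}$ for $x\in V(\pi_2)\setminus\{v\}$, and is $0$ otherwise; likewise for $\rho$. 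Contracting against the block-diagonal $P^\Lag$ turns the expression into a sum over common vertices $x\in V(\pi)\cap V(\rho)$, which splits according to whether $x$ sits on $\pi_1$ or on $\pi_2\setminus\{v\}$, and on $\rho_1$ or on $\rho_2\setminus\{v\}$. The two ``mixed'' contributions are indexed by the empty sets $V(\pi_1)\cap(V(\rho_2)\setminus\{v\})$ and $(V(\pi_2)\setminus\{v\})\cap V(\rho_1)$ and drop out; pulling the constant matrices $M(\fsep^{(\pi_2)}),M(\fsep^{(\rho_2)})$ resp. $M(\csep^{(\pi_1)}),M(\csep^{(\rho_1)})$ out of the two remaining sums leaves exactly $\sum_{x\in V(\pi_1)\cap V(\rho_1)}\nabla^{\Lag_x}\csep^{(\pi_1)}P^\Lag_{x,x}(\nabla^{\Lag_x}\csep^{(\rho_1)})^\top=\acov^\Lag(\pi_1^\internal,\rho_1^\internal)$ (Proposition~\ref{prop: asymptotic covariance weighted path function}) and the analogous $\acov^\Lag(\pi_2,\rho_2)$ (Proposition~\ref{prop: asymptotic covariance path function}) sandwiched between those matrices; using $M(z)AM(z')^\top=zA(z')^\ast$ this is the claimed formula.

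The step I expect to be the main obstacle is the careful bookkeeping at the hub vertex $v$. The vertex $v$ genuinely belongs to $V(\pi_1)\cap V(\rho_1)$, so it must contribute to $\acov^\Lag(\pi_1^\internal,\rho_1^\internal)$; simultaneously it must \emph{not} contribute to $\acov^\Lag(\pi_2,\rho_2)$, and it must not produce a spurious cross term coupling $\csep^{(\pi_1)}$ to $\fsep^{(\rho_2)}$ — and both of these exclusions are precisely the assertion that $\fsep^{(\pi_2)}$ and $\fsep^{(\rho_2)}$ do not depend on $\Phi^{\Lag_v}$, which must be read off from the definition of the link functions. The only other nontrivial input is the disjointness $V(\pi_1)\cap V(\rho_2)\subseteq\{v\}$ together with its mirror, which uses that $v$ separates the upstream pieces $\pi_1,\rho_1$ from the downstream pieces $\pi_2,\rho_2$ — guaranteed in the cycle-free setting underlying Propositions~\ref{prop: asymptotic covariance path function}--\ref{prop: asymptotic covariance spectral contribution}.
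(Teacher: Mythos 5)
Your proof is correct and is, in substance, the paper's own argument: the paper dispatches this lemma in one line (``combine the path-function and weighted-path-function covariance propositions''), and your delta-method computation --- the factorisation $\csep^{(\pi)}=\csep^{(\pi_1)}\fsep^{(\pi_2)}$, the block-diagonal asymptotic covariance of the vertex-wise OLS estimators, and the splitting of the vertex sum into the upstream and downstream pieces --- is exactly what that combination unpacks to. Your explicit bookkeeping at the hub vertex $v$ (that $\fsep^{(\pi_2)}$ and $\fsep^{(\rho_2)}$ do not depend on $\Phi^{\Lag_v}$, so no cross terms arise there or at any other shared vertex in the cycle-free setting) is precisely the content the paper leaves implicit, so no further changes are needed.
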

\begin{proof}
    This follows by combining Proposition \ref{prop: asymptotic covariance path function} and Proposition \ref{prop: asymptotic covariance weighted path function}. 
\end{proof}
\begin{proof}[Proposition \ref{prop: asymptotic covariance spectral contribution}]
    For $u, u' \in \anc(v)$ we compute 
    \begin{align*}
        \acov^\Lag(\hat{\csep}^{\Pi_u}, \hat{\csep}^{\Pi_{u'}}) &= \sum_{\pi \in \Pi_u} \sum_{\rho \in \Pi_{u'}}\acov^\Lag(\pi^\internal, \rho^\internal) \\
        &= \sum_{\pi_1 \in \paths(u,v)} \sum_{\pi_2 \in \Pi} \sum_{\rho_1 \in \paths(u',v)} \sum_{\rho_2 \in \Pi} \acov^\Lag((\pi_1 + \pi_2)^\internal, (\rho_1 + \rho_2)^\internal) \\ 
        &= \sum_{\pi_1 \in \paths(u,v)}\sum_{\rho_1 \in \paths(u',v)} \fsep^{\Pi}\acov^\Lag(\pi_1^\internal, \rho_1^\internal)\left( \fsep^{(\Pi)}\right)^\ast + \csep^{(\pi_1)}\acov^\Lag(\hat{\fsep}^{\Pi}, \hat{\fsep}^{\Pi},)\left(\csep^{(\rho_1)}\right)^\ast \\
        &= \fsep^\Pi [\acov^\Lag(\hat{\csep}^{\paths(u,v)}, \hat{\csep}^{\paths(u',v)})] (\fsep^\Pi)^\ast +\csep^{\paths(u,v)} [\acov(\hat{\fsep}^\Pi)] (\csep^{\paths(u',v)})^\ast
    \end{align*}
\end{proof}
\subsection{Asymptotic efficiency}
In this section, we apply the ideas from \cite{Guo2020EfficientLS} to prove Theorem \ref{thm: asymptotic efficiency} in the main paper. Suppose $\G=(V \times \Z, \D)$ is a time series DAG of order $p$ and let $q \geq p$. We assume that the vertices $V = \{v_1, \dots, v_m \}$ are topologically ordered with respect to the contemporaneous graph $\G_0$. Let $z \in S^1$ be a frequency and $\tau$ a vector of sums of (weighted) path functions, i.e. $[\fsep^{\Pi_u}]_{u \in U}$ or $[\csep^{\Pi_u}]_{u \in U}$, where $U\subset V$ is a set of processes and $\Pi_u\subset \paths(u,v)$ a set of paths such that each $\Pi_u$ corresponds to a controlled causal effect \cite{reiter2023formalising} of $u$ on $v$. This requirement on $\Pi_u$ guarantees that $\tau(z)$ is a rational function in the SVAR parameters. 

Finally, we pick an estimator $\hat{\tau}(z) \in \mathcal{T}^{(q)}(z)$ and precompose it with the diffeomorphism from Proposition \ref{prop: diffeomorphism} to get a function $\tilde\tau(z)$ in the variables $\Phi^{\tilde\Lag}, \Omega$, where $\tilde\Lag = \{ \tilde\Lag_v \}_{v \in V}$ is the collection of time lagged relations defined as in (\ref{eq: time lags coarse}).

The proof of Theorem \ref{thm: asymptotic efficiency} will be based on the delta method, which involves partial derivatives. Suppose $\Lag'$ is a collection of time lagged relations with corresponding (augmented) parameter vector $\Phi^{\Lag'}$. If $\tau'(z)$ is a (rational) function in $\Phi^{\Lag'}$ and the variance vector $\Omega$, then we denote the partial derivatives as follows 
\begin{align*}
    \nabla^{\Lag'}\tau'(z) &= \left[ \nabla^{\Lag'_v} \tau'(z) \right]_{v \in V}, & \nabla^{\Lag_v'} \tau'(z) &= \left[ \frac{\partial}{\partial \phi^{\Lag'_v}_{u,v}(k)} \tau'(z) \right]_{(u,k) \in \Lag'_v} \\
    \nabla^\Omega \tau'(z) &= \left[ \frac{\partial}{\partial \omega_v} \tau'(z) \right]_{v \in V}
\end{align*}
With these notations at hand we proceed with the proof. 
\begin{lemma}[Corollary 22 in \cite{Guo2020EfficientLS}]\label{lemma: jacobaian estimator}
    It holds that 
    \begin{align*}
        \nabla^{\Lag^\G}\tilde\tau(z) &= \nabla^{\Lag^\G}\hat{\tau}^\G(z) & \nabla^\Omega \tilde\tau(z) &= \mathbf{0}. 
    \end{align*}
\end{lemma}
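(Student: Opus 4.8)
The plan is to follow the strategy of \citep{Guo2020EfficientLS}, whose essential point is that \emph{consistency} is a property of an estimator relative to \emph{every} distribution in the model, and thus forces $\tilde\tau(z)$ to be rigid along certain coordinate directions. Write $\Psi\colon U \to V \subset \mathcal{E}_q(m)$ for the diffeomorphism of Proposition \ref{prop: diffeomorphism} taken in the direction $(\Phi^{\tilde{\Lag}},\Omega)\mapsto \Sigma$; by its construction $\Psi$ sends a parameter pair to the ACS of the SVAR it defines, and $\tilde\tau(z) = \hat\tau(z)\circ\Psi$.

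The first step is to establish a functional identity. Fix any $(\Phi^{\tilde{\Lag}},\Omega)\in U$ and let $\X$ be the SVAR it generates. The empirical ACS of a sample of $\X$ converges in probability to $\Psi(\Phi^{\tilde{\Lag}},\Omega)$, so by continuity of $\hat\tau(z)$ its probability limit is $\hat\tau(z)\big(\Psi(\Phi^{\tilde{\Lag}},\Omega)\big) = \tilde\tau(z)(\Phi^{\tilde{\Lag}},\Omega)$. Since $\hat\tau(z)$ is a consistent estimator of $\tau(z)$, this limit must also equal the estimand, namely the vector of (weighted) path functions of $\X$, which is the rational function $\tau(z)$ evaluated on $\Phi^{\tilde{\Lag}}$ and does not involve $\Omega$ (the link and internal functions, hence $\fsep^{\Pi_u}$ and $\csep^{\Pi_u}$, depend on $\Phi$ only). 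As the chosen parameter pair was arbitrary, this gives $\tilde\tau(z) = \tau(z)$ as functions on $U$, with $\tau(z)$ regarded as a rational function of $\Phi^{\tilde{\Lag}}$ alone. Differentiating in $\Omega$ then yields $\nabla^\Omega\tilde\tau(z) = \mathbf{0}$, the second assertion.

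For the first assertion I would introduce the submanifold $M\subset U$ obtained by setting the redundant coordinates $\phi^{\tilde{\Lag}}_{u,v}(k) = 0$ for every $v\in V$ and every $(u,k)\in\tilde{\Lag}_v\setminus\Lag^\G_v$; it is an open subset of a linear subspace, its points are exactly the parameter pairs whose SVAR is an SVAR for $\G$, and the coordinates $(\phi^{\tilde{\Lag}}_{u,v}(k))_{(u,k)\in\Lag^\G_v,\ v\in V}$ together with $\Omega$ form a chart on $M$. Restricting to $M$, each lag polynomial $\varphi_{x,y}(z)=\sum_{(x,k)\in\tilde{\Lag}_y}\phi^{\tilde{\Lag}}_{x,y}(k)z^k$ collapses to its $\Lag^\G$-version, so $\tilde\tau(z)|_M = \tau(z)|_M$ is precisely the rational function of $\Phi^{\Lag^\G}$ that \emph{defines} $\hat\tau^\G(z)$. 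On the other hand, at a point of $M$ the data-generating process is a stable SVAR for $\G$, so by the regression relation (\ref{eq: regression relation}) the population OLS coefficients recovered from its ACS coincide with the true $\Phi^{\Lag^\G}$; hence $\hat\tau^\G(z)$, read through $\Psi$ as a function on a neighbourhood of $M$ in $U$, also restricts on $M$ to $\tau(z)|_M$. Therefore $\tilde\tau(z)$ and $\hat\tau^\G(z)$ agree on $M$, and since the $\Lag^\G$-coordinate directions are tangent to $M$, their partial derivatives in those directions agree at every point of $M$, which is the identity $\nabla^{\Lag^\G}\tilde\tau(z) = \nabla^{\Lag^\G}\hat\tau^\G(z)$ (understood at the data-generating parameter).

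I expect the conceptually loaded step to be the passage from consistency to the functional identity $\tilde\tau(z)=\tau(z)$ on all of $U$; this is exactly where the argument of \citep{Guo2020EfficientLS} enters, and I would isolate it as a separate claim. The remaining difficulty is bookkeeping: keeping straight the coarse collection $\tilde{\Lag}$, for which the diffeomorphism $\Psi$ is built, against the true collection $\Lag^\G$ along which the derivatives are taken, and in particular verifying that $\hat\tau^\G(z)$ — a priori a function of the entire ACS, hence of both $\Phi^{\tilde{\Lag}}$ and $\Omega$ — genuinely \emph{restricts} to $\tau(z)|_M$ rather than merely agreeing with it at a single point.
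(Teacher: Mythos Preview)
The paper does not give its own proof of this lemma; it simply cites Corollary~22 of \citep{Guo2020EfficientLS}. Your proposal is precisely that argument, correctly transported to the time-series/frequency-domain setting, so in spirit you are doing exactly what the paper invokes.

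There is one point to tighten. You claim that consistency yields $\tilde\tau(z)=\tau(z)$ as functions on \emph{all} of $U$, but membership in $\mathcal{T}^{(q)}(z)$ only requires consistency for SVAR processes compatible with $\G$, i.e.\ for parameter pairs in the submanifold $M$ where the redundant $\tilde{\Lag}\setminus\Lag^\G$ coordinates vanish. For a generic $(\Phi^{\tilde\Lag},\Omega)\in U\setminus M$ the process is outside the model class and there is no reason for $\hat\tau(z)$ to recover the (extended) rational function $\tau(z)$ there; your own example-type estimators built from $\Lag^\G$-regressions would suffer omitted-variable bias off $M$. So the functional identity you can legitimately assert is $\tilde\tau(z)|_M=\tau(z)|_M$. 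This does not damage either conclusion: the $\Lag^\G$-coordinate directions are tangent to $M$ (as you note), and so are the $\Omega$-directions, since varying $\Omega$ keeps the redundant coefficients at zero. Hence both $\nabla^{\Lag^\G}\tilde\tau(z)$ and $\nabla^{\Omega}\tilde\tau(z)$, evaluated at the data-generating parameter in $M$, are determined by $\tilde\tau(z)|_M$ alone, and the two identities follow. You should rephrase the derivation of $\nabla^\Omega\tilde\tau(z)=\mathbf{0}$ to rest on equality on $M$ together with tangency of the $\Omega$-directions, rather than on the unjustified global identity.
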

\begin{lemma}[\cite{Guo2020EfficientLS} Lemma 27]\label{lemma: lower bound}
    Suppose $M \in \R^{I \times I}$ is a positive definite matrix and $\alpha, \beta \subset I$ define a partition of $I$, i.e. $ \alpha \cup \beta = I$ with $\alpha \cap \beta = \emptyset$. Then it holds for every $ x \in \R^I $ that 
    \begin{align*}
        x^\top M x \geq x_\alpha^\top M_{\alpha\cdot \beta } x_\alpha.
    \end{align*}
\end{lemma}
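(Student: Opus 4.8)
The plan is to reduce the inequality to the positive semidefiniteness of a single quadratic form, via the block $LDL^\top$ factorisation that underlies the Schur complement. Since $M$ is symmetric and positive definite, I would first write it in block form with respect to the partition $I = \alpha \cup \beta$,
\[
    M = \begin{bmatrix} A & B \\ B^\top & D \end{bmatrix}, \qquad A = [M]_{\alpha,\alpha},\quad B = [M]_{\alpha,\beta},\quad D = [M]_{\beta,\beta}.
\]
As a principal submatrix of a positive definite matrix, $D$ is itself positive definite, so $D^{-1}$ exists; and by the invertibility statement for Schur complements recalled above, the Schur complement $M_{\alpha\cdot\beta} = A - B D^{-1} B^\top$ is positive definite as well. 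These two facts are the only ingredients beyond elementary algebra.

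Next I would expand the quadratic form for $x = \begin{bmatrix} x_\alpha \\ x_\beta\end{bmatrix} \in \R^I$ and complete the square in the $\beta$-coordinates. Setting $\xi \coloneqq x_\beta + D^{-1} B^\top x_\alpha$, a direct computation gives the identity
\[
    x^\top M x = x_\alpha^\top A x_\alpha + 2 x_\alpha^\top B x_\beta + x_\beta^\top D x_\beta = x_\alpha^\top \bigl(A - B D^{-1} B^\top\bigr) x_\alpha + \xi^\top D \xi = x_\alpha^\top M_{\alpha\cdot\beta}\, x_\alpha + \xi^\top D \xi .
\]
Equivalently, this is the congruence $M = L^\top \mathrm{diag}\!\left(M_{\alpha\cdot\beta}, D\right) L$ with $L = \begin{bmatrix} I & 0 \\ D^{-1} B^\top & I \end{bmatrix}$, evaluated on the vector $x$; one can read it off directly from the block inverse formula quoted in the appendix, so no new computation is really required.

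Finally, since $D$ is positive definite we have $\xi^\top D \xi \ge 0$ for every $x$, and therefore $x^\top M x \ge x_\alpha^\top M_{\alpha\cdot\beta}\, x_\alpha$, which is the claimed bound, with equality exactly when $x_\beta = -D^{-1} B^\top x_\alpha$ (equivalently, when $x_\beta$ minimises $x_\beta \mapsto x^\top M x$ for the given $x_\alpha$, which gives a second, purely variational, way to see the statement). I do not anticipate any genuine obstacle: the lemma is essentially a packaging of completing the square, and the only points needing justification — positive definiteness of $D$ and of $M_{\alpha\cdot\beta}$ — follow immediately from the standard Schur complement facts already stated.
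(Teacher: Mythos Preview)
Your proof is correct: the block decomposition and completion of the square give exactly the identity $x^\top M x = x_\alpha^\top M_{\alpha\cdot\beta}\, x_\alpha + \xi^\top D \xi$, and positive definiteness of the principal submatrix $D$ yields the inequality. The paper does not supply its own proof of this lemma but simply cites it from \cite{Guo2020EfficientLS}; your argument is the standard one and matches what is found there.
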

\begin{proof}[Proof of Theorem \ref{thm: asymptotic efficiency}]
Lets fix a vector $a\in \R^{2}\otimes \R^{U}$ and set $\tau_a(z) = a^\top \tau(z)$. Accordingly, we write $\tilde\tau_{a} (z) = a^\top \tilde\tau(z)$, and $\hat{\tau}^\Lag_a(z) = a^\top \hat{\tau}^\Lag(z)$. By combining the delta method with Lemma \ref{lemma: jacobaian estimator} we get that the asymptotic variance of the scalar valued estimator $ \tilde\tau_a(z)$ is  
\begin{equation}\label{eq: proof asymptotic variance decomposition}
    \begin{split}
        \mathrm{avar}( \tilde\tau_a(z)) &= \left[\nabla^{\tilde\Lag} \tilde\tau_a(z) \right] P^{\tilde\Lag} \left[\nabla^{\tilde\Lag} \tilde\tau_a(z) \right]^\top \\
    &=  \sum_{v \in V} \left[\nabla^{\tilde\Lag_v} \tilde\tau_a(z)\right] P^{\tilde\Lag_v} \left[\nabla^{\tilde\Lag_v} \tilde\tau_a(z)\right]^\top 
    \end{split}
\end{equation} 
Let us partition for every $v \in V$ the set of time lagged relations $\tilde\Lag_v$ using the sets $\alpha_v = \Lag^\G_v $ and $\beta_v = \tilde\Lag_v \setminus \Lag^\G_v$, so that 
\begin{equation}\label{eq: proof inequality}
    \begin{split}
        \left[\nabla^{\tilde\Lag_v} \tilde\tau_a(z)\right] P^{\tilde\Lag_v} \left[\nabla^{\tilde\Lag_v} \tilde\tau_a(z)\right]^\top &\geq 
        \left[\nabla^{\Lag_v} \tilde\tau_a(z)\right] P^{\tilde\Lag_v} \left[ P^{\tilde\Lag_v} \right]_{\alpha_v\cdot \beta_v} \left[\nabla^{\Lag_v} \tilde\tau_a(z)\right] \\
        &= \left[\nabla^{\Lag_v} \tilde\tau_a(z)\right] P^{\tilde\Lag_v} \left[ P^{\Lag_v^\G} \right] \left[\nabla^{\Lag_v} \tilde\tau_a(z)\right],
    \end{split}
\end{equation}
where the first inequality follows from Lemma \ref{lemma: lower bound} and the second equality follows from Corollary \ref{cor: schur inverison}. 
So combining equation (\ref{eq: proof asymptotic variance decomposition}) with the inequalities (\ref{eq: proof inequality}) yields 
\begin{align*}
    \mathrm{avar}( \tilde\tau_a(z)) &\geq \sum_{v \in V} \left[\nabla^{\Lag_v} \tilde\tau_a(z)\right] P^{\tilde\Lag_v} \left[ P^{\Lag_v^\G} \right] \left[\nabla^{\Lag_v} \tilde\tau_a(z)\right] \\
    &= \mathrm{avar}(\hat{\tau}_a^\G(z)). 
\end{align*}
Finally, this allows us to conclude with the inequality 
\begin{align*}
    a^\top \left( \acov(\tilde\tau(z)) - \acov(\hat{\tau}^\G(z)) \right) a =\mathrm{avar}( \tilde\tau_a(z))- \mathrm{avar}(\hat{\tau}_a^\G(z)) \geq 0
\end{align*}
which shows the claim (\ref{eq: absolute asymptotic efficiency}) of Theorem \ref{thm: asymptotic efficiency}, and similar arguments can be used to show statement (\ref{eq: relative asymptotic efficiency}) of Theorem \ref{thm: asymptotic efficiency}.
\end{proof}

\vspace*{-10pt}
\section{Example and Application}\label{sec: supplement numerical example}
\subsection{Numerical Example}
In this Section, we provide more details on the example discussed in Section \ref{sec: asymptotic distribution}. In Figure \ref{fig: numerical example tsg} we show the underlying time series graph of the example. The true lagged parent set and the lagged parents set encoding our knowledge about $\G$ given that we know the process graph $G$ are as follows
\begin{align*}
    \Lag^\G_{u_1}& =  \{(u_1, 1) \} & \Lag_{u_1} &= \{u_1\} \times [1, 3] \\
    \Lag^\G_{u_2}&=  \{(u_1, 1) \} & \Lag_{u_2} &= \{u_1\} \times [1, 3] \\
    \Lag^\G_{v} &= \{(v, 1), (v, 3), (u_1, 2), (u_2, 1)\} & \Lag_{v} &= \{v\} \times [1, 3] \cup \{u_1\} \times [0, 2] \cup \{u_2\} \times [0, 3] \\
    \Lag^\G_{m} &= \{(m, 1), (v, 3), (v, 1), (v, 2)\} & \Lag_{m}& = \{m\} \times [1, 3] \cup \{v\} \times [0, 3] \\
    \Lag^\G_{w} &= \{(w, 2), (v, 3), (v, 2), (m, 3)\} & \Lag_{w}& = \{w\} \times [1, 3] \cup \{v\} \times [0, 3] \cup \{m\} \times [0, 3] \\
\end{align*}
To generate the time series we used a white noise process with $\Omega = \mathbb{I}$ and we used the following SVAR coefficients 
\begin{align*}
    \Phi_{\bullet, u_1} &= \begin{bmatrix}
        \phi_{u_1, u_1}(1) = 0.5
    \end{bmatrix} \\
    \Phi_{\bullet, u_2} &= \begin{bmatrix}
        \phi_{u_2, u_2}(1) = 0.5
    \end{bmatrix} \\
    \Phi_{\bullet, v} &= \begin{bmatrix}
        \phi_{v, v}(1) = 0.3 & \phi_{v, v}(3) = -0.5 & \phi_{u_1, v}(2) = -0.25 & \phi_{u_2, v}(1) =0.5
    \end{bmatrix} \\
    \Phi_{\bullet, m} &= \begin{bmatrix}
        \phi_{m, m}(1) = 0.5 & \phi_{v, m}(1) = 0.5 & \phi_{v, m}(2) = 0.6 
    \end{bmatrix} \\
    \Phi_{\bullet, w} &= \begin{bmatrix}
        \phi_{w, w}(2) = 0.5 & \phi_{v, w}(2) = 0.4 & \phi_{m, w}(3) = -0.3 
    \end{bmatrix}
\end{align*}

\begin{figure}
    \centering
    \includegraphics[width=0.8\textwidth]{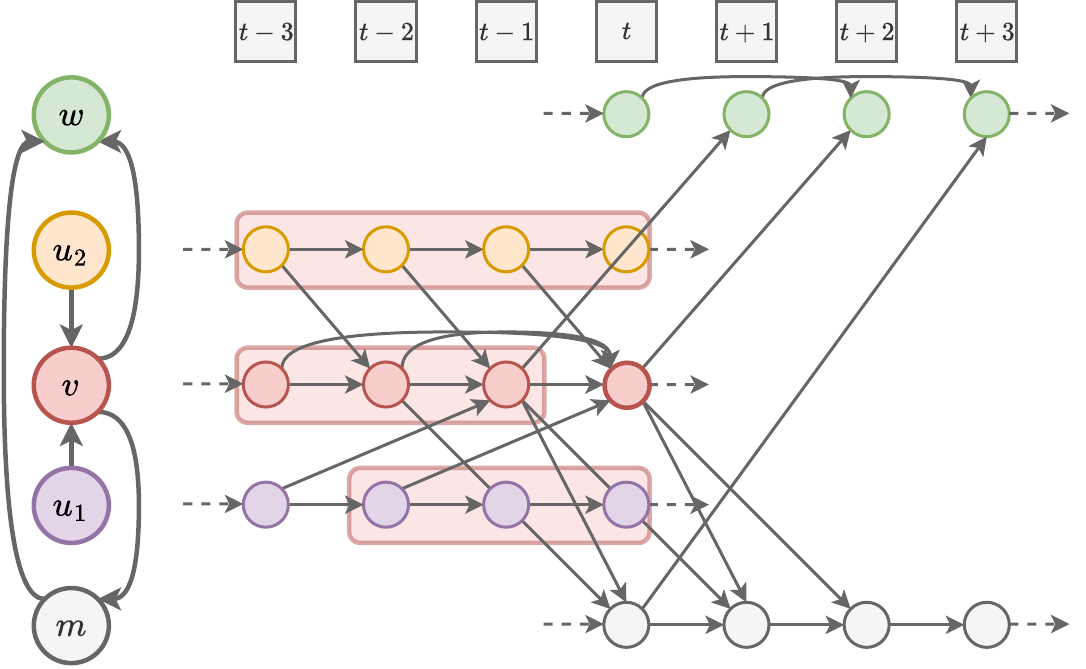}
    \caption{This shows time series underlying the process graph in Figure \ref{fig:process graph}. The red boxes represent the superset $\Lag_v \Lag_v^\G$}
    \label{fig: numerical example tsg}
\end{figure}
In the main paper we sketched the computation of the spectral contribution estimator. We will here provide the expressions for the asymptotic covariance matrices we omitted in the main part. First, we spell out the computation of the asymptotic covariance of the estimator $\hat{\fsep}^{\Pi} = \hat{\fsep}{v,w} + \hat{\fsep}_{v,m} \hat{\fsep}_{m,w}$. This amounts to 
\begin{align*}
    \acov^{\Lag}(\hat{\fsep}^{\Pi}, \hat{\fsep}^{\Pi}) &= \acov^\Lag(\pi_{v,w}, \pi_{v,w}) + \acov^\Lag(\pi, \pi) + \acov^\Lag(\pi_{v,w}, \pi) + \acov^\Lag(\pi,\pi_{v,w}) \\
    &= \acov^\Lag_w(v,v)  + \fsep_{v,m}\acov^\Lag_w(m,m)(\fsep_{v,w})^\ast + \fsep_{m,w}\acov^\Lag_m(v,v)(\fsep_{m,w})^\ast \\
    & + \acov^\Lag_w(v,m)(\fsep_{v,m})^\ast + \fsep_{v,m}\acov^\Lag_w(m,v)
\end{align*}
In the main paper we wanted to compute the asymptotic block covariance of the multivariate estimator of weighted path functions $(\hat{\csep}^{(\pi_{u_1, v})}, \hat{\csep}^{(\pi_{u_2, v})}, \hat{\csep}^{(\epsilon_{v})})$. We recall the expressions for the block entries on the diagonal from the main paper  
\begin{equation} \label{eq: example diagona acov weigthed path functions}
    \begin{split}
    \acov^\Lag(\pi_{u_1, v}^\internal, \pi_{u_1,v}^\internal) &= \intsep_{u_1}\acov^{\Lag_v}(u_1, u_1)\intsep_{u_1}^\ast \\
    \acov^\Lag(\pi_{u_2, v}^\internal, \pi_{u_2,v}^\internal) &= \intsep_{u_2}\acov^{\Lag_v}(u_2, u_2)\intsep_{u_2}^\ast \\
    \acov^\Lag(\epsilon_{v}^\internal, \epsilon_{v}^\internal) &= \acov^{\Lag_v}(v, v)
    \end{split}
\end{equation}
The off diagonal entries are then computed as follows 
\begin{equation} \label{eq: example off diagonal acov weighted path functions}
    \begin{split}
    \acov^\Lag(\pi_{u_1, v}^\internal, \pi_{u_2,v}^\internal) &= \intsep_{u_1}\acov^{\Lag_v}(u_1, u_2)\intsep_{u_2}^\ast \\
    \acov^\Lag(\pi_{u_1, v}^\internal, \epsilon_{v}^\internal) &= \intsep_{u_1}\acov^{\Lag_v}(u_1, v) \\
    \acov^\Lag(\pi_{u_2, v}^\internal, \epsilon_{v}^\internal) &= \intsep_{u_2}\acov^{\Lag_v}(u_2, v)
    \end{split}
\end{equation}
Finally, the expressions (\ref{eq: example diagona acov weigthed path functions})-(\ref{eq: example off diagonal acov weighted path functions}) together with Proposition \ref{prop: asymptotic covariance spectral contribution} let us compute the asymptotic covariance of the estimator $\hat{\csep}^{\Pi_{\anc(v)}}= (\hat{\csep}^{\Pi_{u_1}}, \hat{\csep}^{\Pi_{u_2}}, \hat{\csep}^{\Pi_{v}})$. The block diagonal entries of the asymptotic block covariance of the multi variate estimator are given by the following expressions
\begin{align*}
    \acov^{\Lag}(\hat{\csep}^{\Pi_{u_1}}, \hat{\csep}^{\Pi_{u_1}}) &= (\fsep^{\Pi}\intsep_{u_1})\acov^{\Lag_v}(u_1, u_1)(\intsep_{u_1}\fsep^{\Pi})^\ast +(\csep^{(\pi_{u_1,v})}) \acov^\Lag(\hat{\fsep^{\Pi}}, \hat{\fsep^{\Pi}}) (\csep^{(\pi_{u_1,v})})^\ast \\
    \acov^{\Lag}(\hat{\csep}^{\Pi_{u_2}}, \hat{\csep}^{\Pi_{u_2}}) &= (\fsep^{\Pi}\intsep_{u_2})\acov^{\Lag_v}(u_2, u_2)(\intsep_{u_2}\fsep^{\Pi})^\ast +(\csep^{(\pi_{u_2,v})}) \acov^\Lag(\hat{\fsep^{\Pi}}, \hat{\fsep^{\Pi}}) (\csep^{(\pi_{u_2,v})})^\ast \\
    \acov^{\Lag}(\hat{\csep}^{\Pi_{v}}, \hat{\csep}^{\Pi_{v}}) &= (\fsep^{\Pi})\acov^{\Lag_v}(v, v)(\fsep^{\Pi})^\ast +(\intsep_v) \acov^\Lag(\hat{\fsep^{\Pi}}, \hat{\fsep^{\Pi}}) (\intsep_v)^\ast
\end{align*}
The blocks on the off diagonal are the following
\begin{align*}
    \acov^{\Lag}(\hat{\csep}^{\Pi_{u_1}}, \hat{\csep}^{\Pi_{u_2}}) &= (\fsep^{\Pi}\intsep_{u_1})\acov^{\Lag_v}(u_1, u_2)(\intsep_{u_2}\fsep^{\Pi})^\ast +(\csep^{(\pi_{u_1,v})}) \acov^\Lag(\hat{\fsep^{\Pi}}, \hat{\fsep^{\Pi}}) (\csep^{(\pi_{u_2,v})})^\ast \\
    \acov^{\Lag}(\hat{\csep}^{\Pi_{u_1}}, \hat{\csep}^{\Pi_{v}}) &= (\fsep^{\Pi}\intsep_{u_1})\acov^{\Lag_v}(u_1, v)(\fsep^{\Pi})^\ast +(\csep^{(\pi_{u_1,v})}) \acov^\Lag(\hat{\fsep^{\Pi}}, \hat{\fsep^{\Pi}}) (\intsep_v)^\ast \\
    \acov^{\Lag}(\hat{\csep}^{\Pi_{u_2}}, \hat{\csep}^{\Pi_{v}}) &= (\fsep^{\Pi}\intsep_{u_2})\acov^{\Lag_v}(u_2, v)(\fsep^{\Pi})^\ast +(\csep^{(\pi_{u_2,v})}) \acov^\Lag(\hat{\fsep^{\Pi}}, \hat{\fsep^{\Pi}}) (\intsep_v)^\ast
\end{align*}

\subsection{Solar impact on the NAO}

\begin{figure}
    \centering
    \includegraphics[width=0.8\textwidth]{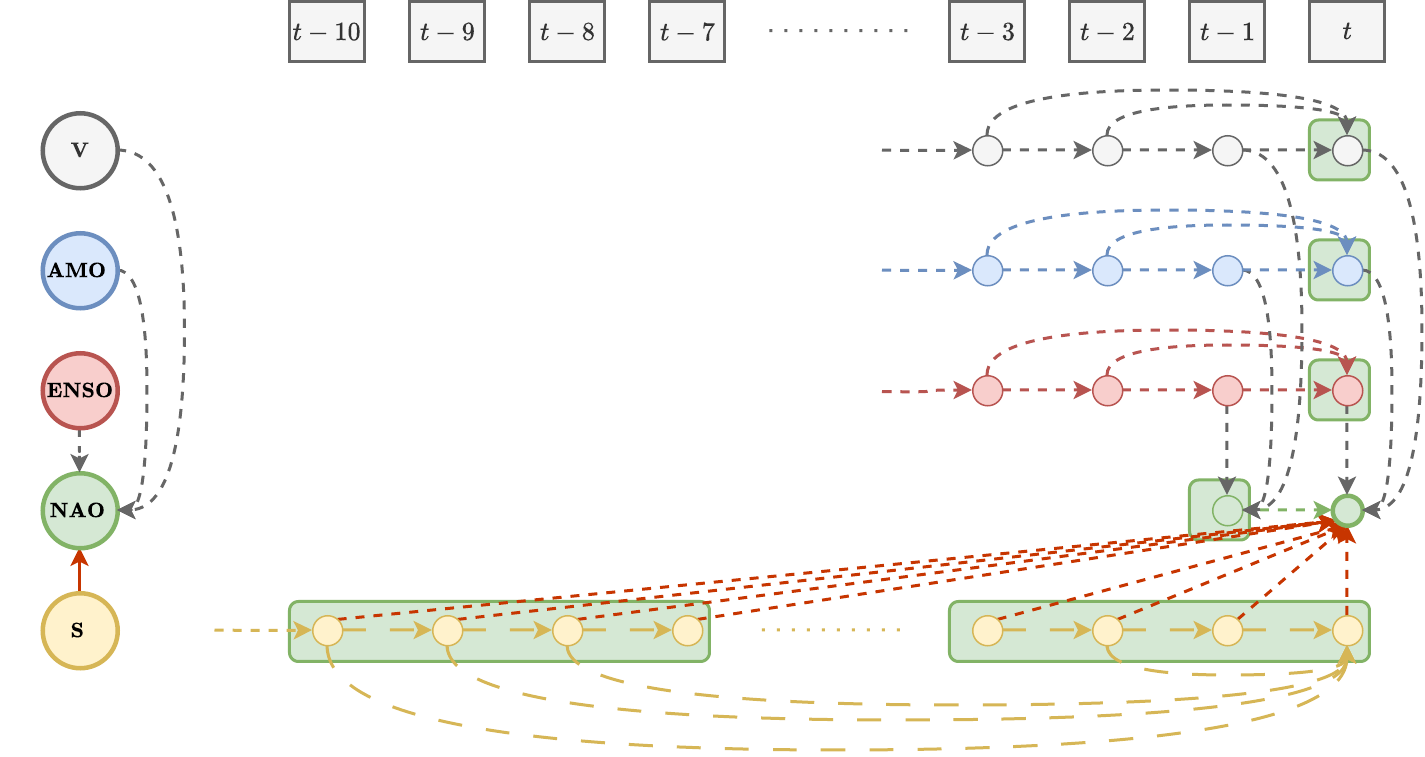}
    \caption{This is an excerpt of the time series graph $\G$ we used for the computation of the frequency domain causal effects. Since we do not know the true time series graph all the edges on $\G$}
    \label{fig: tsg solar nao}
\end{figure}

The parent sets encoding our assumptions about the time series graph are the following (see also Figure \ref{fig: tsg solar nao})
\begin{align*}
    \Lag_{\NAO} &= \{(\NAO, 1), (\NAO, 2), (\ENSO, 0), (\AMO, 0), (\AOD, 0) \} \cup \{(\SNS, k): k \in [0,3] \cup [7,10] \} \\
    \Lag_{\SNS} &= \{ (\SNS, k): k \in [1, 2] \cup [8,10] \} \\
    \Lag_{\ENSO} &= \{ (\ENSO, k): k \in [1, 3] \} \\
    \Lag_{\AMO} &= \{ (\AMO, k): k \in [1, 3] \} \\
    \Lag_{\AOD} &= \{ (\AOD, k): k \in [1, 3] \} \\
\end{align*}
Accordingly, the effect function of the link $\SNS \to \NAO$ is parameterised as follows
\begin{align*}
    \fsep_{\SNS \to \NAO}(z) &= (1- \Phi_{\NAO}(1)z - \Phi_{\NAO}(2)z^2)^{-1}(\phi_{\SNS \to \NAO}(0) + \phi_{\SNS \to \NAO}(1)z + \Phi_{\SNS \to \NAO}(2)z^2 + \\
    &+\Phi_{\SNS \to \NAO}(3)z^3 + \Phi_{\SNS \to \NAO}(7)z^7 + \Phi_{\SNS \to \NAO}(8)z^8 + \Phi_{\SNS \to \NAO}(9)z^9 + \Phi_{\SNS \to \NAO}(10)z^{10}),
\end{align*}
and the internal function of $\SNS$ is thus parameterised by the rational function 
\begin{align*}
    \intsep_{\SNS}(z) &= \frac{1}{1 - \Phi_{\SNS}(1)z - \Phi_{\SNS}(2)z^2 - \Phi_{\SNS}(1)z^8- \Phi_{\SNS}(8)z^8 - \Phi_{\SNS}(9)z^9 - \Phi_{\SNS}(10)z^{10}}
\end{align*}

\bibliographystyle{plain}
\bibliography{paper-ref}

\end{document}